\def\MODE{3}
\newtheorem{thm}{Theorem}
\newtheorem{problem}{Problem}
\newtheorem{lem}{Lemma}
\newtheorem*{prop*}{Proposition $P(s)$}
\newtheorem{rem}{Remark}
\newtheorem{defn}{Definition}
\newtheorem{cor}{Corollary}
\renewenvironment{proof}{{\noindent\bf Proof.}}{ \hfill ~\qed}
\def\qed{\rule[0pt]{5pt}{5pt}\par\medskip}
\newcommand{\tp}{\mathsf{T}}		
\newcommand{\Sbin}{\mathbb{S}}		
\newcommand{\eqt}{\buildrel\smash{\smash t}\over =}
\DeclareMathOperator{\norml}{\mathcal{N}}		
\DeclareMathOperator{\ee}{\mathbb{E}}			
\DeclareMathOperator{\prob}{\mathbb{P}}			
\newcommand{\eemph}[1]{\ensuremath{\textbf{#1}}}
\newcommand{\bmat}[1]{\begin{bmatrix}#1\end{bmatrix}}
\newcommand{\probc}[2]{\prob\!\left(#1\,\middle\vert\,#2\right)}	
\newcommand{\eec}[2]{\ee\!\left(#1\,\middle\vert\,#2\right)}		
\newcommand{\eecs}[2]{\ee(#1\,\vert\,#2)}							
\newcommand{\eecc}[2]{\ee\bigl(#1\,\big\vert\,#2\bigr)}							
\newcommand{\T}{\rule{0pt}{2.6ex}}
\newcommand{\hlinet}{\hline\T}
\newcommand{\anc}[1]{{#1^{\uparrow}}}
\newcommand{\des}[1]{{#1^{\downarrow}}}
\newcommand{\funnel}[1]{{#1^{\updownarrow}}}
\newcommand{\sanc}[1]{{#1^{\upuparrows}}}
\newcommand{\sdes}[1]{{#1^{\downdownarrows}}}
\newcommand{\coparent}[1]{{#1^{\vee}}}
\newcommand{\sibling}[1]{{#1^{\wedge}}}
\newcommand{\nonrelative}[1]{{#1^{\sim}}}
\newcommand{\set}[2]{\{#1:#2\}}
\newcommand{\sett}[2]{\bigl\{#1:#2\bigr\}}
\DeclareMathOperator{\lin}{\textbf{lin}}
\newcommand{\defeq}{:=}
\newcommand{\defeqt}{:\eqt}
\newenvironment{red}{%
  
  \MakeFramed{\advance\hsize-\width \FrameRestore}}
  {\endMakeFramed}
\newcommand{\boxit}[1]{\vspace{5mm}\noindent
\fbox{\begin{minipage}{0.98\linewidth}#1\end{minipage}}\vspace{5mm}

}
\let\VEC      \mathbf
\renewcommand{\vec}{\mathbf}
\begin{document}

\if\MODE3
\title{Optimal Control for LQG Systems on Graphs---Part I: Structural Results}
\else
\title{Optimal Control for LQG Systems on Graphs---Part I: Structural Results}
\fi

\if\MODE3\author{Ashutosh Nayyar \and Laurent Lessard}
\else\author{Ashutosh Nayyar\footnotemark[1] \and
		Laurent Lessard\footnotemark[2]} \fi
\note{Submitted, IEEE Conference on Decision and Control 2013}
\maketitle

\begin{abstract}
In this two-part paper, we identify a broad class of decentralized output-feedback LQG systems for which the optimal control strategies have a simple intuitive estimation structure and can be computed efficiently. Roughly, we consider the class of systems for which the coupling of dynamics among subsystems  and the inter-controller communication is characterized by the same directed graph. Furthermore, this graph is assumed to be a \emph{multitree}, that is, its transitive reduction can have at most one directed path connecting each pair of nodes.
In this first part, we derive sufficient statistics that may be used to aggregate each controller's growing available information. Each controller must estimate the states of  the subsystems that it affects (its \emph{descendants}) as well as the subsystems that it observes (its \emph{ancestors}). The optimal control action for a controller is a linear function of the estimate it computes as well as the estimates computed by all of its ancestors. Moreover, these state estimates may be updated recursively, much like a Kalman filter.
\end{abstract}

\newpage
\tableofcontents
\newpage

\section{Introduction}\label{sec:intro}

With the advent of large scale systems such as the internet or power networks  and increasing applications of networked control systems, the past decade has seen a resurgence of interest in decentralized control.
In decentralized control problems, control decisions must be made using only local or partial information. A key specification of a decentralized control problem is its \emph{information structure}. Intuitively, the information structure of a problem describes for each control decision what information is available for making that decision. The investigation of decentralized control problems available in the current literature has largely been a study of information structures and their implications for the characterization and computation of optimal decentralized control strategies. Two questions that have played a central role in this study are:
\begin{enumerate}
\item[Q1] Can the ever-growing information history available to controllers be aggregated without compromising achievable performance? In other words, are there sufficient statistics for the controllers?
\item[Q2] Can optimal decentralized strategies be efficiently computed?
\end{enumerate}
In this two-part paper, we identify a broad class of information structures and associated decentralized LQG control problems for which both questions have an affirmative answer. We represent information structures  by directed acyclic graphs (DAG), where each node represents both a subsystem and its associated controller, and the edges indicate both the influence of state dynamics between subsystems as well as information-sharing among controllers. An example is given in Figure~\ref{fig:intro_graph}.
\begin{figure}[ht]
\centering
\begin{tikzpicture}[thick,>=latex]
	\tikzstyle{block}=[circle,draw,fill=black!6,minimum height=2.1em]
	\def\x{2.2};
	\def\y{0.8};
	\node [block](N1) at (0,0) {$1$};
	\node [block](N2) at (\x,0) {$2$};
	\node [block](N3) at (.5*\x,-\y) {$3$};
	\node [block](N4) at (1.5*\x,-\y) {$4$};
	\node [block](N6) at (\x,-2*\y) {$5$};
	\draw [->] (N1) -- (N3);
	\draw [->] (N2) -- (N3);
	\draw [->] (N2) -- (N4);
	\draw [->] (N3) -- (N6);
\end{tikzpicture}
\caption{Directed acyclic graph (DAG) representing the information structure of a decentralized control problem. An edge $i\to j$ means that subsystem $i$ affects subsystem $j$ through its dynamics and controller $i$ shares its information with controller $j$.
\label{fig:intro_graph}}
\end{figure}
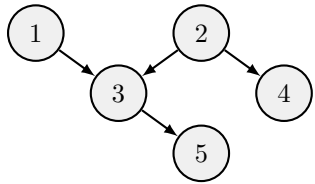

The present paper addresses Q1 by describing sufficient statistics that are both intuitive and familiar; each controller must estimate the states of subsystems that it observes (its ancestors) as well as the states of subsystems that it may potentially affect through its decisions (its descendants). For the example of Figure~\ref{fig:intro_graph}, the controller at node~1 must estimate the states at nodes~$\{1,3,5\}$, while the controller at node~4 must estimate the states at nodes~$\{2,4\}$.  The optimal control action for a controller is a linear function of the estimate it computes as well as the estimates computed by all of its ancestors. In addition to proving the sufficient statistics, we show that they admit a recursive representation, much like a Kalman filter.  In the second paper, we answer Q2 by giving an explicit and efficient method for computing the optimal control strategies.

Our results hold for LQG systems; all dynamics are linear (possibly time-varying), process noise and measurement noise are Gaussian, and the cost function is quadratic over a finite time horizon.  The quadratic cost function may couple certain pairs of nodes and the Gaussian disturbances entering certain pairs of nodes may be correlated. These conditions  are stated precisely in Section~\ref{sec:assumptions}. The associated DAG may be any \emph{multitree}. That is, in the transitive reduction of the DAG, each pair of nodes can be connected by at most one directed path. A key aspect of this work is that we consider \emph{output feedback}. While the presence of measurement noise makes the problem considerably more difficult to solve than the state-feedback case, we will nevertheless see that the optimal controller has a simple and intuitive structure.

\subsection{Prior work}
The available literature on decentralized control problems can be classified into two categories based on the underlying conceptual perspective and the associated techniques employed:
\begin{enumerate}
\item \emph{Decentralized control as a team theory problem:} This perspective of decentralized control problems starts with a description of \emph{primitive random variables} that represent all the uncertainties in the evolution and observations of a dynamic system. The information structure is specified in terms of the \emph{observables} available to each controller, where an observable is a function of the primitive random variables and past decisions. The observables may be explicitly defined in terms of primitive random variables and control decisions or they may be defined using intermediate variables as in a state-space description. Similarly, the control objective can also be viewed as a (either explicit or implicit) function of the primitive random variables and the control decisions. 

The controllers select (measurable) functions that map their information to their decisions.
Once such a selection of strategies has been made, the control objective becomes a well-defined random variable. The design problem is to identify a choice of strategies that minimizes the expected value of the control objective. 
\item \emph{Decentralized control as closed-loop norm optimization:} This perspective focuses on cases when the plant and the controllers are linear time-invariant (LTI) systems. The plant has two inputs: an exogenous input vector~$w$ and a control input vector~$u$. The plant has two outputs: a performance-related vector~$z$ and an observation vector~$y$. The controller is a LTI system with $y$ as its input and $u$ as its output. The information structure is described in terms of structural constraints on the transfer function of the controller. For a fixed choice of the controller, the closed loop LTI system  can be described in terms of its {transfer function} from the exogenous input $w$ to the performance related vector $z$. The design problem is to minimize a norm (such as the $\mathcal{H}_2$-norm) of this transfer function.
\end{enumerate}

Both of these approaches have acknowledged the difficulty of a general decentralized control problem with arbitrary information structure \cite{blondel,witsenhausen}. Therefore, there has been considerable interest in identifying classes of information structures that may be ``easier'' to solve. In the team-theoretic approach, partial nestedness of an information structure has been identified as a key simplifying feature~\cite{hochu}. A decentralized LQG control problem with a partially nested (PN) information structure admits a linear control strategy as the globally optimal strategy choice. Further, it can be reduced (at least for finite horizon problems) to a static LQG team problem for which person-by-person optimal strategies are globally optimal~\cite{radner}. While the results of~\cite{hochu} have been generalized to certain infinite-horizon problems~\cite{mahajan_infinite}, a universal and computationally efficient methodology for finding optimal strategies for all partially nested problems remains elusive.

In the norm optimization framework, some properties of the plant and the information constraint have been identified as simplifying conditions. These properties imply convexity of the transfer function norm optimization problem in decentralized control. Examples of such properties include quadratic invariance (QI)~\cite{rotkowitz06}, funnel causality~\cite{bamieh_funnel}, and certain hierarchical architectures~\cite{qimurti04}. Interestingly, many of the systems considered in the literature using these properties have a partially nested information structure as well. Despite the convexity, the optimization problem in general is infinite dimensional and therefore hard to solve.

For decentralized problems that do not belong to the class of PN or QI problems, linearity of  globally optimal strategies is not guaranteed and the optimization of strategies is in general a non-convex problem. These problems present unique features  such as signaling (that is, communication through control actions) that make them particularly difficult. We refer the reader to \cite{grover,mahajan_survey,nayyar, witsenhausen, yuksel} for some examples and results for such problems.

The decentralized control problems considered in this paper are a subset of the class of PN and QI problems.   The relation between problem classes is illustrated in Figure~\ref{fig:venn}.
\begin{figure}[ht]
\centering
\begin{tikzpicture}[thick,>=latex]
	\def\y{3mm}
	\def\x{1.5mm}
	\tikzstyle{rbox}=[rounded corners,draw,fill=black!6,minimum height=2.1em,anchor=east]
	\node [rbox,text width=126mm,minimum height=15mm+3*\y](N1) at (3*\x,0) {%
		\parbox{25mm}{\textbf{Decentralized:}\\
		\small
		\textbullet\,{\color{red}not linear}\\
		\textbullet\,{\color{red}not convex}\\
		\textbullet\,{\color{red}generally hard}}};
	\node [rbox,text width=95mm,minimum height=15mm+2*\y](N2) at (2*\x,0) {%
		\parbox{25mm}{\textbf{PN and QI:}\\
		\small
		\textbullet\,linear optimal\\
		\textbullet\,convexifiable\\
		\textbullet\,\color{red}{$\infty$-dimensional}}};
	\node [rbox,text width=65mm,minimum height=15mm+\y,fill=white](N3) at (\x,0) {%
		\parbox{30mm}{\textbf{Multitree:}\\
		\small
		\textbullet\,sufficient statistics\\
		\textbullet\,structural result\\
		\textbullet\,efficient solution}};
	\node [rbox,minimum height=15mm](N4) at (0,0) {%
		\parbox{31mm}{\textbf{Centralized:}\\
		\small
		\textbullet\,certainty equivalent \\
		\textbullet\,separation principle}};
\end{tikzpicture}
\caption{Venn diagram showing a complexity hierarchy of decentralized control problems. The present paper establishes the \emph{multitree} class, which shares many structural characteristics with centralized problems, but is more computationally tractable than general partially nested and quadratically invariant problems.\label{fig:venn}}
\end{figure}
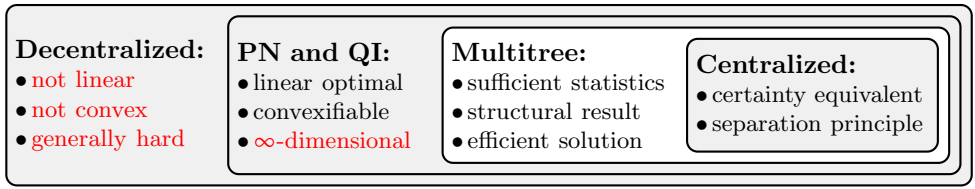

In decentralized control problems with multiple subsystems that each have an associated controller, partial nestedness is typically manifested in two ways: 
\begin{enumerate}
\item If a subsystem $i$ affects another subsystem $j$, then the controller at subsystem $i$ shares all information with the controller at subsystem $j$. In other words, the information flow obeys the \emph{sparsity constraints} of the dynamics. State-feedback problems of this kind were considered in~\cite{swigart10} for a two-controller case and in~\cite{shah_parrilo} for controllers communicating over a DAG. Similar results and partial extensions to output feedback were addressed in \cite{jonghan_separable,swigart_partial,jain}. The first solution to a problem with output feedback at every subsystem appeared in~\cite{lessard_allerton,lessard_acc} for a DAG with two subsystems. The two-subsystem output feedback results were generalized to star-shaped (broadcast) graphs in~\cite{lessard_broadcast} and linear chain graphs in~\cite{tanaka_triangular}.

\item If subsystem $i$ affects subsystem $j$ after some delay $d$, then the controller at subsystem $i$ shares its information with controller at subsystem $j$ with delay not exceeding $d$. In other words, the information flow obeys the \emph{delay constraints} of the dynamics. Among the earlier problems of this type were the discrete-time LQG problems with  one-step delayed sharing information structure~\cite{kurtaran:1974, nonclassical,WalrandVaraiya:1978,yoshikawa}. More recently, problems where communication delay between controllers is characterized in terms of  distance on a DAG were addressed in \cite{lamperski_delayed_recent} for the state feedback case and in  \cite{lamperski_outputfeedback} for the output feedback case.  In both these works, it is assumed that the underlying DAG is \emph{strongly connected}, so every measurement eventually finds its way to every decision-maker. This fact allows the optimal strategy to be decomposed into a component acting on the common past information together with a finite impulse response (FIR) portion acting on newer information. A similar model with state feedback but completely general DAG was addressed in \cite{lamperski_lessard}. The state-feedback assumption allows the control problem to decouple into smaller centralized problems that can be solved separately.

\end{enumerate}

The output-feedback works with sparsity constraints~\cite{lessard_broadcast,lessard_allerton,lessard_acc,tanaka_triangular} use a closed-loop norm optimization framework and employ a spectral factorization approach to solve for the optimal controller. This approach yields an observer-controller structure and shows how to jointly solve for the appropriate estimation and control gains. While this approach does not provide an immediate way to interpret the states of the optimal controller as minimum mean-squared error (MMSE) estimates of the plant state, such an interpretation was obtained for the two-subsystem case in~\cite{lessard_tpof_TAC}.

An alternative approach is to use the team-theoretic perspective. The two-subsystem output feedback problem was solved in this manner in~\cite{lessard_nayyar_tpof}. We use a similar approach in the present work to extend the output feedback results to a broader class of DAGs. The advantage of a team-theoretic approach is that structural results emerge naturally, and one can deduce the optimal controller's sufficient statistics without solving for gains explicitly. Indeed, the paper~\cite{lessard_nayyar_tpof} derives structural results for a finite-horizon formulation with a linear time-varying plant, whereas the works~\cite{lessard_broadcast,lessard_allerton,lessard_acc,lessard_tpof_TAC,tanaka_triangular} address linear time-invariant plants and find the infinite-horizon steady-state optimal controller.

\subsection{Organization}

In Section~\ref{sec:prelim}, we explain notations, conventions, and assumptions. Section~\ref{sec:main} presents the main structural result as well as some examples. The proof spans Sections~\ref{sec:proof_prelim}, \ref{sec:proof_main} and \ref{sec:proof_alt}. We conclude in Section~\ref{sec:conclusion}.





\section{Preliminaries}\label{sec:prelim}

\subsection{Basic notation}

Real vectors and matrices are represented by lower- and upper-case letters respectively. Boldface symbols denote random vectors, and their non-boldface counterparts denote particular realizations. $x^\tp$ denotes the transpose of vector $x$. The probability density function of $\vec{x}$ evaluated at $x$ is denoted $\prob(\vec{x}=x)$, and conditional densities are written as $\probc{\vec{x}}{\vec{y}=y}$. $\mathbb{E}$ denotes the expectation operator. We write
$
\vec{x} \sim \norml(\mu,\Sigma)
$
when $\vec{x}$ is normally distributed with mean $\mu$ and covariance $\Sigma$. 

We consider discrete time stochastic processes over a finite time interval $[0,T]$.
Time is indicated using subscripts, and we use the colon notation to denote ranges. For example:
\[
x_{0:T-1} = \{ x_0, x_1,\dots, x_{T-1} \}
\]
In general, all symbols are time-varying. In an effort to present general results while keeping equations clear and concise, we introduce a new notation to represent a family of equations. For example, when we write:
\[
\vec{x}_+ \eqt A\vec{x} + \vec{w},
\]
we mean that $\vec{x}_{t+1} = A_t \vec{x}_t + \vec{w}_t$ holds for $0 \le t \le T-1$. Note that the subscript ``$+$'' indicates that we increment to $t+1$ for the associated symbol. We similarly overload the summation symbol by writing for example
\[
\sum_t x^\tp Q x 
\quad\text{to mean}\quad \sum_{t=0}^{T-1} x_t^\tp Q_t x_t
\]
Whenever the symbol $t$ is written above a binary relation or below a summation, it is implied that $0 \le t \le T-1$. There is no ambiguity because we use the same time horizon~$T$ throughout this paper.


We denote subvectors by using superscripts. 
 Subvectors may also be referenced by  using a subset of indices as superscripts. For example, for a vector 
\[
\vec{x} =   \bmat{\vec{x}^1 \\ \vec{x}^2 \\ \vec{x}^3}
\] 
and $s = \{1,3\}$, we will use the concise notation
\[
\vec{x}^s = \vec{x}^{\{1,3\}} = \bmat{\vec{x}^1 \\ \vec{x}^3}
\]
When writing sub-vectors, we will always arrange the components in increasing order of indices. Thus, in the above example, $\vec{x}^s$ is $\bmat{\vec{x}^1 \\ \vec{x}^3}$ and not $\bmat{\vec{x}^3 \\ \vec{x}^1}$. Given a collection of random vectors, we will at times treat the collection as a concatentation of vectors arranged in increasing order of node index.

For a matrix $A$, $A_{ij}$ denotes its $(i,j)$-block whose dimensions are inferred from the context.  Given two sets of indices $s$ and $r$, $A^{s,r}$ is a matrix composed of blocks $A_{ij}$ with $i \in s$ and $j \in r$. The blocks $A_{i,j}$ in a row (column) of $A^{s,r}$ are arranged in increasing order of column (row) indices.
 

We will write $a \in \lin(p_1,\dots,p_m)$ to mean that~$a$ is a linear function of $p_1,\dots,p_m$. In other words, $a = A_1 p_1 + \dots + A_m p_m$ for some appropriately chosen matrices $A_1,\dots,A_m$.

\subsection{Graphs}

Let $G(\mathcal{V},\mathcal{E})$ be a directed acyclic graph (DAG). The nodes are labeled  $1$ to $n$, so $\mathcal{V}=\{1,\dots,n\}$. If there is an edge from $i$ to $j$, we write $(i,j)\in \mathcal{E}$.

We write $i\to j$ if there is a \emph{directed path} from $i$ to $j$. That is, if there exists a sequence of nodes $v_1,\dots,v_m$ with $v_1=i$ and $v_m=j$ such that $(v_k,v_{k+1})\in\mathcal{E}$ for all~$k$. By convention, every node has a directed path (of length zero) to itself. So it is always true that $i\to i$. We write $i \leftrightarrow j$ if $i$ and $j$ are \emph{path-connected}, that is, if $i\to j$ or $j \to i$. Otherwise, we say they are \emph{path-disconnected}, and we write $i \nleftrightarrow j$. We can express the path-connectedness of $G$ using the \emph{sparsity matrix}, which is the binary matrix $\Sbin\in\{0,1\}^{n\times n}$ defined by
\[
\Sbin_{ij} = \begin{cases}
1 &\text{if } j \to i \\
0 &\text{otherwise}
\end{cases}
\]
Note that different graphs may have the same sparsity matrix. In general, $\Sbin$ is the adjacency matrix of the transitive closure of $G$. So graphs with the same transitive closure also share the same sparsity matrix.

By convention, we assign a \emph{topological ordering} to the node labels.
 That is, we choose a labeling such that if $j \to i$, then $j \le i$. This is possible for any DAG  \cite[\S 22.4]{cormen2001}. Therefore, $\Sbin$ is always lower-triangular. See Figure~\ref{fig:ex_simplegraph} for a simple example of a DAG and its associated sparsity matrix.

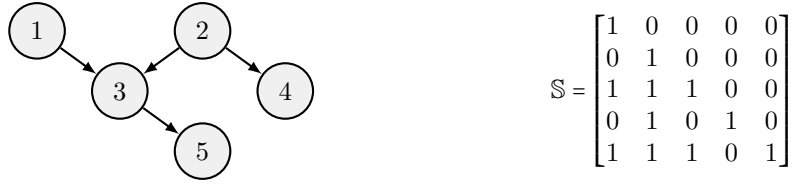
\begin{figure}[ht]
\centering
\begin{subfigure}{0.49\linewidth}
\centering
\begin{tikzpicture}[thick,>=latex]
	\tikzstyle{block}=[circle,draw,fill=black!6,minimum height=2.1em]
	\def\x{2.2};
	\def\y{0.8};
	\node [block](N1) at (0,0) {$1$};
	\node [block](N2) at (\x,0) {$2$};
	\node [block](N3) at (.5*\x,-\y) {$3$};
	\node [block](N4) at (1.5*\x,-\y) {$4$};
	\node [block](N6) at (\x,-2*\y) {$5$};
	\draw [->] (N1) -- (N3);
	\draw [->] (N2) -- (N3);
	\draw [->] (N2) -- (N4);
	\draw [->] (N3) -- (N6);
\end{tikzpicture}
\end{subfigure}
\begin{subfigure}{0.49\linewidth}
\centering
$\Sbin = \bmat{ 1 & 0 & 0 & 0 & 0  \\
			0 & 1 & 0 & 0 & 0  \\
			1 & 1 & 1 & 0 & 0  \\
			0 & 1 & 0 & 1 & 0  \\
			1 & 1 & 1 & 0 & 1 }$
\end{subfigure}
\caption{Simple DAG and its associated sparsity matrix.
\label{fig:ex_simplegraph}}
\end{figure}
Given a node $i\in \mathcal{V}$, we define its \emph{ancestors} as the set of nodes that have a directed path to $i$. Similarly, we define the \emph{descendants} as the set of nodes that $i$ can reach via a directed path. We use the following notation for ancestors and descendants respectively.
\begin{align*}
\anc{i} = \set{j\in \mathcal{V}}{j \to i}
&&
\des{i} = \set{j\in \mathcal{V}}{i \to j}
\end{align*}
Ancestors and descendants of $i$ always include $i$ itself.
We define the \emph{strict ancestors} and \emph{strict descendants} when we mean to exclude $i$. Specifically, $\sanc{i} = \anc{i}\setminus \{i\}$ and $\sdes{i} = \des{i}\setminus\{i\}$.
We use the notation $\funnel{i} = \anc{i} \cup \des{i}$ for the set of all nodes that are path-connected to node $i$. Note that $\funnel{i}$ is partitioned as $\sanc{i} \cup \{i\} \cup \sdes{i}$. In the graph of Figure~\ref{fig:ex_simplegraph}, for example,
\begin{align*}
\anc{3} = \{1,2,3\} &&
\sanc{3} = \{1,2\} &&
\des{3} = \{3,5\} &&
\sdes{3} = \{5\} &&
\funnel{3} = \{1,2,3,5\}
\end{align*}

We summarize these notations in Table \ref{table:notation1}.

\begin{table}
\centering
\begin{tabular}{|c|c|}
\hlinet
Notation & Meaning \\
\hlinet
$\anc{i}$ & $\set{j\in \mathcal{V}}{j \to i}$\\
\hlinet
$\des{i}$ & $\set{j\in \mathcal{V}}{i \to j}$\\
\hlinet
$\funnel{i}$ & $\anc{i} \cup \des{i}$\\
\hlinet
$\sanc{i}$ & $\anc{i}\setminus i$\\
\hlinet
$\sdes{i}$ & $\des{i} \setminus i$\\
\hline
\end{tabular}
\caption{Ancestors, descendants and related definitions}
\label{table:notation1}
\end{table}

\begin{rem}
Note that while $\anc{i}, \des{i}$, etc. are defined as subsets, it is convenient to think of them as ordered lists in which the node indices are arranged in increasing order. Thus, in Figure~\ref{fig:ex_simplegraph}, $\anc{3}$ will always be written as $\{1,2,3\}$ and not as any other permutation of $\{1,2,3\}$.
\end{rem}
\begin{rem}
A node with no strict descendants is called a {leaf node} and a node with no strict ancestors is called a root node.
\end{rem}


\subsection{System model}

The system we consider consists of $n$ subsystems that may affect one another according to the structure of an underlying DAG, $G(\mathcal{V},\mathcal{E})$. The $i^\text{th}$ subsystem has state~$\vec{x}^i$, input~$\vec{u}^i$, measurement~$\vec{y}^i$, process noise~$\vec{w}^i$, and measurement noise~$\vec{v}^i$. We assume these are discrete-time random processes that satisfy the following state-space dynamics for all $i\in \mathcal{V}$.
\begin{equation}\label{eq:state_eqns_sum}
\begin{aligned}
\vec{x}^i_+ &\eqt \sum_{j \in \anc{i}} \bigl(A_{ij}\vec{x}^j + B_{ij}\vec{u}^j\bigr) + \vec{w}^i\\
\vec{y}^i &\eqt \sum_{j\in \anc{i}} \bigl(C_{ij} \vec{x}^j\bigr) + \vec{v}^i
\end{aligned}
\end{equation}
The relative timing of $i^\text{th}$ state, control action and observation at time $t$ is as shown in Figure \ref{fig:timing}. Note that the observation $\vec{y}^i_t$ is generated after control action $\vec{u}^i_t$ is taken.
\begin{figure}[ht]
\centering
\begin{tikzpicture}[thick,>=latex,scale=1.2]
\def\dx{0.3}  
\def\dxx{0.5} 
\def\dy{0.16}  
\def\dyy{0.22} 
\draw[dashed] (-\dxx-\dx,0) -- (-\dx,0);
\draw[->] (-\dx,0) -- (4+2*\dx,0);
\draw (0,\dyy) -- (0,-\dyy) node[anchor=north]{$t$};
\draw (4,\dyy) -- (4,-\dyy) node[anchor=north]{$t+1$};
\draw (1,-\dy) -- (1,\dy) node[anchor=south]{$\vec x_t^i$};
\draw (2,-\dy) -- (2,\dy) node[anchor=south]{$\vec u_t^i$};
\draw (3,-\dy) -- (3,\dy) node[anchor=south]{$\vec y_t^i$};
\end{tikzpicture}
\caption{Relative timing of state, action and observation at time $t$}
\label{fig:timing}
\end{figure}
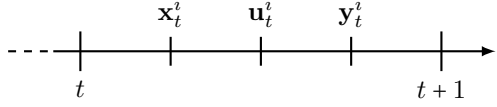
Each of the matrices in \eqref{eq:state_eqns_sum} may be time-varying, and may even change dimensions with time. In an effort to make our notation more concise, we concatenate the various symbols above and simply write
\begin{equation}\label{eq:state_eqns}
\begin{aligned}
\vec{x}_+ &\eqt A\vec{x} + B\vec{u} + \vec{w} \\
\vec{y} &\eqt C\vec{x} + \vec{v}
\end{aligned}
\end{equation}
In this condensed notation, the matrices $A$, $B$, and $C$ have blocks that conform to $\Sbin$, the sparsity matrix for $G(\mathcal{V},\mathcal{E})$. In other words, if $\Sbin_{ij} = 0$ then $A_{ij}=0$, $B_{ij}=0$, and $C_{ij}=0$.
The random vectors in the collection
\begin{equation}\label{eq:indep_noise}
\left\{
	 \vec{x}_0, \bmat{ \vec{w}_0\\ \vec{v}_0}, \dots, \bmat{ \vec{w}_{T-1}\\ \vec{v}_{T-1}}
\right\},
\end{equation}
referred to as the primitive random variables, are mutually independent and jointly Gaussian with the following known probability density functions
\begin{equation}\label{eq:initial_conditions}
\begin{aligned}
\vec{x}_0 & = \norml(0, \Sigma_\textup{init}) \\
\bmat{\vec{w}\\\vec{v}} & \eqt \norml\biggl(0,\bmat{W & U^\tp \\ U & V}\biggr)
\end{aligned}
\end{equation}
There are $n$ controllers, one responsible for each of the $\vec{u}^i$, $i=1,\ldots,n$.  We define the \emph{locally generated information} at node $i$ at time $t$ as
\begin{equation}\label{eq:information_pattern}
	\vec{i}_t^i := \Bigl\{ \vec{y}_{0:t-1}^{{i}}, \vec{u}_{0:t-1}^{{i}} \Bigr\}
\end{equation}
The information available to controller $i$ at time $t$ is 
\begin{equation}\label{eq:information_pattern2}
 \vec{i}^\anc{i}_t := \bigcup_{j \in \anc{i}} \vec{i}_t^j =  \bigcup_{j \in \anc{i}} \Bigl\{ \vec{y}_{0:t-1}^{{j}}, \vec{u}_{0:t-1}^{{j}} \Bigr\} .
\end{equation}
In other words, each controller knows the past measurements and decisions of its ancestors. So the directed edges of $G$ may be thought of as representing the \emph{flow of information} between subsystems. Crucially, the same graph $G$  represents both how the dynamics propagate as well as how information is shared in the system. The controllers select actions according to control strategy ${f}^i :=(f^i_0,f^i_1,\ldots,f^i_{T-1})$ for $i\in\mathcal{V}$. That is,
\begin{equation}\label{eq:input_definition}
	\vec{u}_t^i = f^i_t (\,\vec{i}^\anc{i}_t \,)
\qquad\text{for }0\le t\le T-1
\end{equation}

Given a control strategy profile ${f} = ({f}^1,{f}^2,\dots,{f}^n)$, performance is measured by the finite horizon expected quadratic cost
\begin{equation}\label{eq:cost_np}
\hat{\mathcal{J}}_0(f) =
\ee^{f} \biggl( \sum_t
\bmat{\vec{x}\\\vec{u}}^\tp \bmat{Q & S \\ S^\tp & R} \bmat{\vec{x}\\ \vec{u}}
+ \vec{x}_T^\tp P_\textup{final} \vec{x}_T \biggr)
\end{equation}
The expectation is taken with respect to the joint probability measure on $(\vec{x}_{0:T},  \vec{u}_{0:T-1})$ induced by the choice
of $f$. 

It is assumed that all system parameters are universally known. Specifically, $\Sigma_\text{init},P_\text{final}$, as well as the values of $A,B,C,Q,R,S,U,V,W$ for all $t$, are known by all controllers.

\subsection{Assumptions}\label{sec:assumptions}

In addition to the problem specifications~\eqref{eq:state_eqns}--\eqref{eq:cost_np}, we will make some additional assumptions about the underlying DAG and the noise and cost parameters used in~\eqref{eq:initial_conditions} and~\eqref{eq:cost_np} respectively. First, we require some definitions.

\begin{defn}[multitree]\label{def:multitree}
The nodes $i,a,b,j\in\mathcal{V}$ form a \eemph{diamond} if $i\to a\to j$ and $i\to b\to j$, and $a\nleftrightarrow b$. A \eemph{multitree} is a directed acyclic graph that contains no diamonds.
\end{defn}
\noindent For example, the graph of Figure~\ref{fig:ex_simplegraph} is a multitree. However, if we add the edge $(4,5)$, then the nodes $(2,3,4,5)$ form a diamond and the graph ceases to be a multitree.

\begin{defn}[decoupled cost]\label{def:indep_cost}
Let $\mathcal{X} = \{Q_{0:T-1},R_{0:T-1},S_{0:T-1},P_\textup{final}\}$ denote the set of matrices associated with the cost function. We say that nodes $i,j\in\mathcal{V}$ have \eemph{decoupled cost} if $X_{ij}=0$ for all $X\in\mathcal{X}$.
\end{defn}

\begin{defn}[uncorrelated noise]\label{def:indep_noise}
Let $\mathcal{Y} = \{W_{0:T-1},V_{0:T-1},U_{0:T-1},\Sigma_\textup{init}\}$ denote the set of matrices associated with the noise and initial state statistics. We say that nodes $i,j\in\mathcal{V}$ have \eemph{uncorrelated noise} if $Y_{ij}=0$ for all $Y\in\mathcal{Y}$.
\end{defn}

The notions of decoupled cost and uncorrelated noise have intuitive interpretations. If two nodes have decoupled cost, then the instantaneous cost at any time has no ``cross-terms'' that involve both the nodes. If two nodes have uncorrelated noise, then the process and measurement noises affecting one node are statistically independent of those affecting the other. Our assumptions are as follows.

\begin{list}{}{}

\item[\textbf{(A1)}] The DAG $G(\mathcal{V,E})$ is a multitree.

\item[\textbf{(A2)}] For every pair of nodes $i,j\in\mathcal{V}$, we have some combination of decoupled cost and uncorrelated noise, depending on whether these nodes have a common ancestor or a common descendant. We state the requirements in the following table.
\begin{center}
\begin{tabular}{|l|l|p{3.5cm}|}
\hlinet
$i$ and $j$ have... & a common ancestor & no common ancestor \\
\hlinet
a common descendant & no restrictions & uncorrelated noise \\
\hlinet
no common descendant & decoupled cost & decoupled cost \textbf{and}\newline uncorrelated noise  \\
\hline
\end{tabular}
\end{center}

\item[\textbf{(A2')}] This assumption is a relaxed version of A2, defined by the following table.
\begin{center}
\begin{tabular}{|l|l|p{3.5cm}|}
\hlinet
$i$ and $j$ have... & a common ancestor & no common ancestor \\
\hlinet
a common descendant & no restrictions & uncorrelated noise \\
\hlinet
no common descendant & decoupled cost & decoupled cost \textbf{or}\newline uncorrelated noise  \\
\hline
\end{tabular}
\end{center}
 
\end{list}
For the graph of Figure~\ref{fig:ex_simplegraph}, nodes~1 and~4 have neither a common ancestor nor a common descendant. Therefore, A2 would require that this pair of nodes have both decoupled cost and uncorrelated noise. However,~A2' would only require one of the two.

Note that because of the multitree assumption, the only way~$i$ and~$j$ can have both a common ancestor \emph{and} a common descendant is if $i\leftrightarrow j$. Assumption A2 may be expressed in terms of the sparsity pattern $\Sbin$ using the following observations
\begin{enumerate}
\item $(\Sbin\Sbin^{\tp})_{ij} = 0$ if and only if $\anc{i}\cap\anc{j}=\emptyset$ ($i$ and $j$ have no common ancestor)
\item $(\Sbin^\tp \Sbin)_{ij} = 0$ if and only if $\des{i}\cap\des{j}=\emptyset$ ($i$ and $j$ have no common descendant)
\end{enumerate}
So A2 may be stated concisely as follows: all matrices in $\mathcal{X}$ (see Definition \ref{def:indep_cost}) have the same sparsity as $\Sbin^\tp \Sbin$ and all matrices in $\mathcal{Y}$ (see Definition \ref{def:indep_noise}) have the same sparsity as $\Sbin\Sbin^\tp$. For the graph of Figure~\ref{fig:ex_simplegraph}, these sparsity patterns are
\begin{align*}
\Sbin\Sbin^\tp \sim \bmat{ 1 & 0 & 1 & 0 & 1 \\
			0 & 1 & 1 & 1 & 1 \\
			1 & 1 & 1 & 1 & 1 \\
			0 & 1 & 1 & 1 & 1 \\
			1 & 1 & 1 & 1 & 1 }
&&
\Sbin^\tp \Sbin \sim \bmat{ 1 & 1 & 1 & 0 & 1 \\
			1 & 1 & 1 & 1 & 1 \\
			1 & 1 & 1 & 0 & 1 \\
			0 & 1 & 0 & 1 & 0 \\
			1 & 1 & 1 & 0 & 1 }
\end{align*}
Assumption A2' is less restrictive than A2, but cannot be as easily expressed in terms of the sparsity pattern $\Sbin$. 


\begin{rem}
Note that both assumptions A2 and A2' are more general than the assumption  that all cost matrices in $\mathcal{X}$ and covariance matrices in $\mathcal{Y}$ are block-diagonal.
\end{rem}


\section{Main result and examples}\label{sec:main}

The problem addressed in this paper is as follows.

\boxit{
\begin{problem}[$n$-player LQG]
  \label{prob:NPLQG}
  For the model~\eqref{eq:state_eqns}--\eqref{eq:input_definition}, and subject to Assumptions A1 and either A2 or A2', find a control strategy profile $f = (f^1,f^2,\ldots,f^n)$
  that minimizes the expected cost~\eqref{eq:cost_np}.
\end{problem}
}

The information structure of our problem is \emph{partially nested} and therefore, without loss of optimality, we will restrict attention to linear control strategies \cite{hochu}.

The main result of this paper is a description of sufficient statistics required for an optimal solution of Problem~\ref{prob:NPLQG}. We first define the following conditional expectation:

\[\mathbf{z}^{\funnel{j}}_t \defeq \eecs{\vec{x}_t^{\funnel{j}}}{\vec{i}^\anc{j}_t}. \]
In other words, $\mathbf{z}^{\funnel{j}}_t$ is the conditional mean of the state of all nodes that are path connected to node $j$ based on the information available to node $j$ (recall the notations defined in Table~\ref{table:notation1}). Our first result is the following.

\begin{thm}[Control Result]\label{thm:main}
In Problem~\ref{prob:NPLQG}, there is no loss in optimality in jointly restricting all nodes $i \in \mathcal{V}$ to strategies of the form
\begin{align}
  \VEC u^i_t &= \sum_{j \in \anc{i}}K^{ij}_t\mathbf{z}^{\funnel{j}}_t  \label{eq:mainresult}
\end{align}
where
$\mathbf{z}^{\funnel{j}}_t \defeq \eecs{\vec{x}_t^{\funnel{j}}}{\vec{i}^\anc{j}_t}$.
\end{thm}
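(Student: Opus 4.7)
The plan is to prove the theorem by backward induction on the time horizon, combined with a projection argument that isolates the sufficient statistics. Since the information structure is partially nested---every node $i$ sees the observations and controls of all ancestors $j \in \anc{i}$---the classical Ho--Chu result guarantees linearity of optimal strategies, so I restrict $\vec{u}^i_t$ to be a linear function of $\vec{i}^{\anc{i}}_t$. Under this restriction all variables are jointly Gaussian, so conditional expectations are linear projections that enjoy clean orthogonality properties.

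The induction proceeds on $t = T-1, T-2, \dots, 0$ with inductive hypothesis that for every $\tau > t$, every control $\vec{u}^i_\tau$ has the claimed form $\sum_{j \in \anc{i}} K^{ij}_\tau \vec{z}^{\funnel{j}}_\tau$. Under this hypothesis I would express the expected cost-to-go from time $t$ as a quadratic function of $\vec{x}_t$ and $\vec{u}_t$, verify that its Hessian inherits a block-sparsity pattern compatible with $\Sbin^{\tp}\Sbin$ (so that assumption A2/A2' on the cost structure is preserved across the induction), and then minimize over $\vec{u}^i_t$ subject to the information constraint $\vec{u}^i_t \in \lin(\vec{i}^{\anc{i}}_t)$. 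The first-order conditions write the optimum as a linear combination of conditional expectations $\eec{\vec{x}^k_t}{\vec{i}^{\anc{i}}_t}$ for various indices~$k$.

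The main obstacle is reducing these conditional expectations to the claimed sum. For $k \in \funnel{i}$, the estimate $\eec{\vec{x}^k_t}{\vec{i}^{\anc{i}}_t}$ is simply a sub-block of $\vec{z}^{\funnel{i}}_t$. For $k \in \funnel{j}\setminus\funnel{i}$ with $j \in \sanc{i}$, a \emph{non-signaling} lemma is required, asserting that
\[
\eec{\vec{x}^k_t}{\vec{i}^{\anc{i}}_t} = \eec{\vec{x}^k_t}{\vec{i}^{\anc{j}}_t},
\]
so that node~$i$'s information beyond $\vec{i}^{\anc{j}}_t$ does not sharpen the estimate of $\vec{x}^k_t$. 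This lemma is the crux of the proof. It should follow from the multitree hypothesis A1, which forces every information flow from $\funnel{j}$ to $i$ to pass through $j$ itself (so that branches of $\anc{i}$ that avoid $j$ cannot secretly carry news of $\funnel{j}$), combined with A2 or A2', which ensures that noise and cost channels along those sibling branches are uncorrelated with, or decoupled from, $\vec{x}^k_t$. Indices $k$ lying outside $\bigcup_{j \in \anc{i}} \funnel{j}$ should contribute nothing, again by A2.

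Finally, the induction is closed by substituting the optimal $\vec{u}^i_t$ back into the cost-to-go and checking that the new quadratic form retains the sparsity pattern required at time $t-1$. In parallel, I would argue that $\vec{z}^{\funnel{j}}_t$ itself admits a Kalman-like recursive update driven by the innovation of $\vec{y}^{\anc{j}}_t$, so that the overall strategy is implementable without storing the full history. I expect the bulk of the technical effort to lie in the non-signaling lemma and in verifying that the multitree and decoupling hypotheses are invariant under the backward propagation of the cost-to-go matrices; once these are in hand, the explicit form \eqref{eq:mainresult} drops out of the first-order conditions.
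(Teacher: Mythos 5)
Your proposal diverges fundamentally from the paper (which inducts over the \emph{graph}, generation by generation from leaves to roots, repeatedly reducing a coordinator's common-information problem to a centralized six-node LQG instance), and unfortunately the route you choose has two genuine gaps. First, the backward-induction-in-time scheme does not get off the ground as described: once the future controllers act on the estimates $\vec{z}^{\funnel{j}}_\tau$ rather than on $\vec{x}_\tau$, the expected cost-to-go from $t+1$ is \emph{not} a quadratic function of $(\vec{x}_t,\vec{u}_t)$ alone --- it is a quadratic in the augmented vector $(\vec{x}_{t+1},\{\vec{z}^{\funnel{j}}_{t+1}\}_{j})$, and the estimates are driven by the measurement histories, not by $(\vec{x}_t,\vec{u}_t)$. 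Moreover, complete separation of estimation from control fails here (this is exactly why the paper needs its partial-separation Lemma~\ref{lem:estimation}): the conditional covariances entering the estimators can depend on parts of the strategy profile, so the ``constant'' terms and the Hessian of the value function are not strategy-independent, and the sparsity invariance you propose to ``verify'' is the hard part rather than a routine check. Finally, at each stage you face a static team of differently-informed players, so the first-order conditions are Radner stationarity conditions that couple $\vec{u}^i_t$ to conditional expectations of the \emph{other players' decisions}, creating a fixed point across nodes that your sketch does not resolve.

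Second, the ``non-signaling lemma'' you identify as the crux is false as stated. Take $\mathcal{V}=\{1,2,3,4\}$ with edges $1\to3$, $2\to3$, $1\to4$, $2\to4$; this is a multitree. Let $i=4$, $j=1\in\sanc{4}$, and $k=3\in\funnel{1}\setminus\funnel{4}$. Then $\vec{i}^{\anc{4}}_t$ contains $\vec{i}^{2}_t$, and node~2 is an ancestor of node~3, so $\vec{i}^{2}_t$ is informative about $\vec{x}^3_t$ through $A_{32}$; meanwhile A2 forces nodes~1 and~2 (common descendant, no common ancestor) to have uncorrelated noise, so $\vec{i}^{\anc{1}}_t=\vec{i}^{1}_t$ carries none of that information. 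Hence $\eecs{\vec{x}^3_t}{\vec{i}^{\anc{4}}_t}\neq\eecs{\vec{x}^3_t}{\vec{i}^{\anc{1}}_t}$ whenever $A_{32}\neq 0$. What you actually need is the weaker decomposition $\eecs{\vec{x}^k_t}{\vec{i}^{\anc{i}}_t}\in\lin\bigl(\{\vec{z}^{\funnel{j}}_t\}_{j\in\anc{i}}\bigr)$ --- a sum of the ancestors' \emph{separate} estimates, each conditioned on that ancestor's own smaller information set $\vec{i}^{\anc{j}}_t$ --- whose proof rests on the independence structure furnished by A1--A2 and is precisely what the paper's aggregated-graph construction and Lemma~\ref{lem:estimation} deliver. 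Collapsing everything to estimates conditioned on $\vec{i}^{\anc{i}}_t$, as your lemma attempts, both fails and loses the structure that Theorem~\ref{thm:main} asserts.
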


Recall that $\vec{i}_t^\anc{i}$ defined in \eqref{eq:information_pattern}--\eqref{eq:information_pattern2} is the information available to node~$i$ at time~$t$, and this set grows with time as more measurements are observed and more decisions are made. Theorem~\ref{thm:main} states that controllers need not remember this entire information history. Instead, each node $j$ may compute the aggregated statistic $\vec z^\funnel{j}$, which is an estimate of the current states of its ancestors and descendants. The optimal decision at node $i$, $\vec u^i$, is then a linear function of the estimates maintained by all of its ancestors.



An alternative way of stating the result of Theorem~\ref{thm:main} is to stack the decisions and estimates and obtain one large linear equation.

\begin{cor}\label{cor:main}
In Problem~\ref{prob:NPLQG}, there is no loss in optimality in jointly restricting the strategies of all the nodes to the form
\begin{equation}\label{eq:matrix_form}
\bmat{\vec{u}_t^1 \\ \vdots \\ \vec{u}_t^n} =
\underbrace{\bmat{ K_t^{11} & \cdots & K_t^{1n} \\
	   \vdots & \ddots & \vdots \\
	   K_t^{n1} & \cdots & K_t^{nn} }}_{K_t}
\bmat{\eecs{\vec{x}_t^\funnel{1}}{\vec{i}_t^\anc{1}} \\
\vdots \\ \eecs{\vec{x}_t^\funnel{n}}{\vec{i}_t^\anc{n}} }
\qquad\text{for}\quad 0\le t \le T-1
\end{equation}
where $K_t$ is a  matrix with block-sparsity conforming to $\Sbin$.
\end{cor}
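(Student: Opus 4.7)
The corollary is essentially a notational repackaging of Theorem~\ref{thm:main}, so the plan is to carry out a short bookkeeping argument rather than any new analysis. First, I would start from the control law guaranteed by Theorem~\ref{thm:main},
\[
\vec u^i_t = \sum_{j \in \anc{i}} K_t^{ij}\, \vec z_t^{\funnel{j}},
\]
and extend the summation over all nodes by adopting the convention $K_t^{ij} := 0$ whenever $j \notin \anc{i}$. With this convention the expression becomes
\[
\vec u^i_t = \sum_{j=1}^n K_t^{ij}\, \vec z_t^{\funnel{j}},
\]
which is exactly the $i$-th block-row of equation~\eqref{eq:matrix_form}. Stacking the rows $i=1,\dots,n$ produces the desired matrix equation.

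To verify the sparsity claim on $K_t$, I would appeal directly to the definition of the sparsity matrix: $\Sbin_{ij}=1$ if and only if $j \to i$, i.e.\ $j \in \anc{i}$. Hence the convention $K_t^{ij}=0$ for $j \notin \anc{i}$ is exactly the statement that $K_t$ has block-sparsity conforming to $\Sbin$. Conversely, any matrix $K_t$ with this block-sparsity pattern produces, when applied to the stacked estimate vector, a control law of the form in Theorem~\ref{thm:main}. So the map between gains $\{K_t^{ij}\}_{j \in \anc{i}}$ and sparsity-constrained matrices $K_t$ is a bijection, and the two families of strategies coincide.

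Since optimality is preserved under this reparametrization, the corollary follows immediately from Theorem~\ref{thm:main}. There is no real obstacle in this proof; the only minor point to flag is that the block dimensions of $K_t^{ij}$ must be chosen so that $K_t^{ij}$ acts on the full estimate $\vec z_t^{\funnel{j}}$ (whose dimension is fixed by the graph), and this is automatic once the zero blocks are inserted on the off-support entries.
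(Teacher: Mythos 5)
Your proposal is correct and matches the paper's (implicit) argument: the paper presents Corollary~\ref{cor:main} as a direct restacking of Theorem~\ref{thm:main}, with the zero-padding convention $K_t^{ij}=0$ for $j\notin\anc{i}$ giving exactly the block-sparsity of $\Sbin$. No gap to report.
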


Our second result addresses the evolution of the estimates $\vec{z}^{\funnel{j}}_t$. In order to state this result, we need to define the following linear operation. 
\begin{defn}\label{def:Gmatrix}
Consider nodes $i$ and $j$ with $i \in \sdes{j}$.  Let $\funnel{i} = \{k_1,k_2,\ldots,k_{|\funnel{i}|}\}$ and $\funnel{j} = \{l_1,l_2,\ldots,l_{|\funnel{j}|}\}$.  Define a matrix $E^{i,j}$ with $|\funnel{i}|$ block rows and $|\funnel{j}|$ block columns as follows: For $a = 1,2,\ldots, |\funnel{i}|$,
\begin{enumerate}
\item If $k_a \notin \funnel{j}$, then the $a^\text{th}$ block row of $E^{i,j}$ is $0$.
\item If $k_a \in \funnel{j}$ and $k_a = l_b$, then the $(a,b)$ block of $E^{i,j}$ is identity and the rest of $a^\text{th}$ block row is $0$.
\end{enumerate}
\end{defn}
For example,  in Figure~\ref{fig:ex_simplegraph},  $\funnel{3}=\{1,2,3,5\}$ and $\funnel{2}=\{2,3,4,5\}$, then 
\[E^{3,2} = \bmat{  0 & 0 & 0 & 0  \\
			I & 0 & 0 & 0  \\
		    0 & I & 0 & 0  \\
			 0 & 0 & 0 & I   }. \] 
We can now state our second result.
\begin{thm}[Estimation Result]\label{thm:main_2}
If the control strategy is as given in Theorem \ref{thm:main}, then the evolution of $\vec{z}^{\funnel{j}}_t$ is described as follows:
\begin{equation}
\begin{aligned}
\vec{z}^{\funnel{j}}_0 &= 0 \\
\vec{z}^{\funnel{j}}_+ &\eqt
A^{\funnel{j}\funnel{j}}\vec{z}^{\funnel{j}}
 + {B}^{\funnel{j}\funnel{j}}\bmat{ \vec{u}^{\anc{j}} \\ \{\vec{\hat u}^{ij}\}_{i \in \sdes{j}}} - {L}^j\left(\vec{y}^{\anc{j}} -C^{\anc{j}\anc{j}}\vec{z}^{\anc{j}}\right)
\end{aligned}
\end{equation}
for some matrices ${L}^j_t$, $0 \leq t \leq T-1$, with
\begin{equation}\label{eq:main_2eq1}
\vec{\hat u}^{ij} \eqt \sum_{a \in \anc{j}} K^{ia}\vec{z}^\funnel{a} + \sum_{b \in \anc{i} \cap \sdes{j}} K^{ib}E^{b,j}\vec{z}^\funnel{j} \qquad\text{for}\quad i \in \sdes{j}.
\end{equation}
\end{thm}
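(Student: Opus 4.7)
I would prove the theorem by induction on $t$, combining the Gaussian innovation representation of a Kalman filter with a careful case analysis that exploits the multitree assumption (A1) and the decoupling assumption (A2). The base case $\vec{z}^{\funnel{j}}_0 = 0$ follows from the zero prior mean of $\vec{x}_0$. For the inductive step, note that the information increment from time $t$ to $t+1$ is $\{\vec{y}^a_t, \vec{u}^a_t\}_{a\in\anc{j}}$, and each $\vec{u}^a_t$ with $a\in\anc{j}$ is a measurable function of $\vec{i}^{\anc{a}}_t\subseteq\vec{i}^{\anc{j}}_t$; the only genuinely new data are the measurements $\vec{y}^{\anc{j}}_t$. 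The standard Gaussian MMSE decomposition then gives
\[
\vec{z}^{\funnel{j}}_+ = \ee\bigl[\vec{x}^{\funnel{j}}_+ \,\big|\, \vec{i}^{\anc{j}}_t\bigr] + L^j\bigl(\vec{y}^{\anc{j}} - \ee\bigl[\vec{y}^{\anc{j}} \,\big|\, \vec{i}^{\anc{j}}_t\bigr]\bigr)
\]
for some matrix $L^j$ determined by the conditional covariances. Because $\anc{a}\subseteq\anc{j}$ for every $a\in\anc{j}$, the predicted measurement collapses to $C^{\anc{j}\anc{j}}\vec{z}^{\anc{j}}$, producing the innovation term in the theorem.

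For the state prediction, I expand $\vec{x}^i_+ = \sum_{k\in\anc{i}}(A_{ik}\vec{x}^k + B_{ik}\vec{u}^k) + \vec{w}^i$ for each $i\in\funnel{j}$ and take conditional expectation. Using $\ee[\vec{w}^i\mid\vec{i}^{\anc{j}}_t]=0$, each term in $\anc{i}$ falls into one of three categories: (a) $k\in\anc{i}\cap\anc{j}$, where $\vec{u}^k$ is $\vec{i}^{\anc{j}}_t$-measurable and $\vec{x}^k$ contributes a block of $\vec{z}^{\funnel{j}}$; (b) $k\in\anc{i}\cap\sdes{j}$, whose control $\vec{u}^k$ must be estimated as $\vec{\hat u}^{kj} := \ee[\vec{u}^k\mid\vec{i}^{\anc{j}}_t]$ and whose state still contributes a block of $\vec{z}^{\funnel{j}}$; and (c) $k\in\anc{i}$ with $k\nleftrightarrow j$, which (as one checks using $i\in\funnel{j}$) can arise only when $i\in\sdes{j}$. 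Granting that case (c) contributes zero, the surviving terms reassemble into $A^{\funnel{j}\funnel{j}}\vec{z}^{\funnel{j}} + B^{\funnel{j}\funnel{j}}\bmat{\vec{u}^{\anc{j}} \\ \{\vec{\hat u}^{ij}\}_{i\in\sdes{j}}}$.

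The main obstacle is the independence claim that handles case (c): if $k\nleftrightarrow j$ but $k$ and $j$ share a common descendant, then both $\vec{x}^k_t$ and $\vec{u}^k_t$ are independent of $\vec{i}^{\anc{j}}_t$. I would argue this in two steps. First, for any $b\in\anc{k}$ and $c\in\anc{j}$, a common ancestor of $b$ and $c$ together with the common descendant of $k$ and $j$ would form a diamond, contradicting A1; the same argument rules out $b\leftrightarrow c$. Thus $b$ and $c$ are path-disconnected with a common descendant and no common ancestor, and A2 forces their primitive noises and initial states to be uncorrelated. Second, an easy induction on $t$ shows that $\vec{i}^{\anc{k}}_t$ is a measurable function of the primitive noises indexed by $\anc{k}$ alone, and similarly for $\anc{j}$; joint Gaussianity then promotes uncorrelatedness to independence. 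Since $\vec{x}^k_t$ and, by the control law of Theorem \ref{thm:main}, $\vec{u}^k_t$ are both $\vec{i}^{\anc{k}}_t$-measurable with zero mean, their conditional mean given $\vec{i}^{\anc{j}}_t$ vanishes.

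To recover \eqref{eq:main_2eq1}, I substitute $\vec{u}^i = \sum_{a\in\anc{i}}K^{ia}\vec{z}^{\funnel{a}}$ into $\vec{\hat u}^{ij} = \ee[\vec{u}^i\mid\vec{i}^{\anc{j}}_t]$. For $a\in\anc{i}\cap\anc{j}$, the tower property yields $\ee[\vec{z}^{\funnel{a}}\mid\vec{i}^{\anc{j}}_t] = \vec{z}^{\funnel{a}}$ since $\vec{i}^{\anc{a}}_t\subseteq\vec{i}^{\anc{j}}_t$. For $a\in\anc{i}\cap\sdes{j}$, the tower property instead gives $\ee[\vec{z}^{\funnel{a}}\mid\vec{i}^{\anc{j}}_t] = \ee[\vec{x}^{\funnel{a}}\mid\vec{i}^{\anc{j}}_t]$, whose block at each $k\in\funnel{a}$ equals the corresponding block of $\vec{z}^{\funnel{j}}$ if $k\in\funnel{j}$ and equals zero otherwise (by the same independence argument applied to $k$); this selection-and-zero pattern is precisely the action of $E^{a,j}$ from Definition \ref{def:Gmatrix}. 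Ancestors $a\notin\funnel{j}$ contribute zero by the same independence argument. Summing the surviving terms reproduces \eqref{eq:main_2eq1} and closes the induction.
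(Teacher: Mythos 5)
Your proposal is correct, and it reaches the result by a genuinely different route than the paper. The paper never proves Theorem~\ref{thm:main_2} in isolation: it establishes Theorems~\ref{thm:main} and~\ref{thm:main_2} simultaneously via the generation-by-generation induction on Proposition $P(s)$, where the estimator recursion for a node $k$ emerges inside the proof of Lemma~\ref{lem:Ppre} (the coordinator's six-node centralized problem, an update-then-predict computation through the $\zeta^x,\zeta^w,\zeta^{z^j}$ quantities, with the strategy-independence of the relevant covariances supplied by Lemma~\ref{lem:estimation}). You instead take the control law of Theorem~\ref{thm:main} as the hypothesis it literally is and derive the recursion directly: a predict-then-innovate Gaussian MMSE decomposition, with the graph-theoretic content isolated in the diamond argument showing that any $k\in\anc{i}\setminus\funnel{j}$ (necessarily a co-parent of $j$) has $\anc{k}$ noise-uncorrelated with $\anc{j}$, hence $\ee[\vec{x}^k_t,\vec{u}^k_t\mid\vec{i}^{\anc{j}}_t]=0$; this is in effect a self-contained re-proof of part~1 of Lemma~\ref{lem:estimation} and of the identity $\eecs{\vec{z}^{\funnel{a}}_t}{\vec{i}^{\anc{j}}_t}=E^{a,j}\vec{z}^{\funnel{j}}_t$ that the paper obtains in step~1 of Lemma~\ref{lem:Ppre}. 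Your tower-property computation of $\ee[\vec{u}^i_t\mid\vec{i}^{\anc{j}}_t]$ also directly yields \eqref{eq:main_2eq1} together with the interpretation stated in the paper's remark. What your approach buys is brevity and independence from the optimality machinery (coordinators, aggregated graphs, partial separation); what it does not deliver is the stronger bookkeeping carried by $P(s)$ — in particular that $L^j_t$ depends only on the descendants' gains $\{K^{ib}\}$ — which the paper needs for the induction itself and for the explicit computations in Part~II, and of course your argument presupposes Theorem~\ref{thm:main}, which the paper's induction must establish anyway.
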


\begin{rem}
For linear control strategies described by Theorem \ref{thm:main_2}, $\vec{\hat u}^{ij}_t$ as defined in \eqref{eq:main_2eq1} is in fact equal to $\eecc{\vec{u}^i_t}{\vec{i}^{\anc{j}}_t}$. 
\end{rem}

The above theorems provide  finite dimensional sufficient statistics for all controllers in the system. These results should be viewed as structural results of optimal control strategies since they postulate the existence of optimal controllers and estimators of the form presented above without specifying how  the matrices $K^{ij}, L^j$ used in control and estimation can be computed. 

A detailed proof of Theorems~\ref{thm:main} and \ref{thm:main_2}  will be developed in Sections~\ref{sec:aggregated}--\ref{sec:proof_main}. In the remainder of this section, we will give examples that illustrate the applicability of our main result.

\newcounter{EXcount}
\begin{list}{\textbf{Example~\arabic{EXcount}.}}{\usecounter{EXcount}}

\item (Centralized case) The classical LQG problem is the case where $\mathcal{V} =\{1\}$. Theorem~\ref{thm:main} then yields the classical separation result; $\vec{u}_t = K_t \eec{\vec{x}_t}{\vec{i}_t}$ and Theorem~\ref{thm:main_2} reduces to the standard Kalman filter.

\item (Disconnected systems) Consider several subsystems with decoupled dynamics. In  this case, the underlying graph has no edges and $\Sbin = I$. Under Assumption~A2', if each pair of nodes has either decoupled cost or uncorrelated noise, then by Theorem~\ref{thm:main}, the optimal strategy for  node $i$ is of the form $\vec{u}_t^i = K_t^i \eecs{\vec{x}^i_t}{\vec{i}^i_t}$. So node $i$ need only estimate its local state and by Theorem~\ref{thm:main_2}, this estimate is updated by a standard Kalman filter. This result can be intuitively explained as follows:
\begin{enumerate}
\item If nodes $i$ and $j$ have uncorrelated noise but coupled cost, there is value in estimating $\vec{x}^j$, but this estimate is zero because of the uncoupled dynamics and uncorrelated noise.
\item If nodes $i$ and $j$ have decoupled cost but correlated noise, the estimate of $\vec{x}^j$ would be non-zero, but the information it provides is of no use because the state and decisions $\vec{x}^j$ and $\vec{u}^j$ do not affect the cost associated with node~$i$.
\end{enumerate}
\begin{rem}
If we violate Assumption~A2' by allowing both coupled cost and correlated noise, then Theorem~\ref{thm:main} no longer holds. An instance of such a problem was solved for the state feedback case~\cite{lessard_diagsf,lessard_diagsf2}, and it was found that depending on the system parameters, the optimal infinite-horizon controller may have up to $n^2$ states, where $n$ is the global plant's dimension. This result suggests a new controller that lies outside the scope of Theorem~\ref{thm:main}.
\end{rem}
\item (Two-player and linear chain cases) Consider the two-player problem corresponding to the graph and sparsity pattern given below.
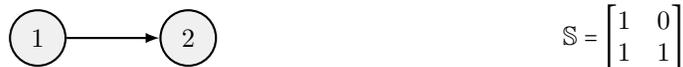
\begin{figure}[ht]
\centering
\begin{subfigure}{0.49\linewidth}
\centering
\begin{tikzpicture}[thick,node distance=2.0cm,>=latex]
	\def\fs{\small}  
	\tikzstyle{block}=[circle,draw,fill=black!6,minimum height=2.1em]
	\node [block](N1){$1$};
	\node [block, right of=N1](N2){$2$};
	\draw [->] (N1) -- (N2);;
\end{tikzpicture}
\end{subfigure}
\begin{subfigure}{0.49\linewidth}
\centering
$\Sbin=\bmat{1 & 0 \\ 1 & 1}$
\end{subfigure}
\caption{DAG and associated sparsity matrix for the two-player problem.
\label{fig:ex_tpof}}
\end{figure}

This problem was solved in a prior version of this work~\cite{lessard_nayyar_tpof}, where the optimal controller was found to be of the form
\begin{align*}
\vec{u}_t^1 &= K^{11}_t \eec{\vec{x}_t}{\vec{i}_t^1}
&
\vec{u}_t^2 &= K^{21}_t \eec{\vec{x}_t}{\vec{i}_t^1} +
			  K^{22}_t \eecc{\vec{x}_t}{\vec{i}_t^{\{1,2\}}}
\end{align*}
In other words, both players must estimate the global state $\vec{x}_t$ using their available information. Additionally, the second player makes use of the estimate that the first player computed. This result agrees perfectly with Theorem~\ref{thm:main}. Note that here, Assumptions A1 and A2 are not restrictive. In other words, the structural result holds even when cost is coupled and noise is correlated. Infinite-horizon continuous-time versions of this problem were solved in~\cite{lessard_allerton} and a similar structure was found there as well. While these works only address the two-player version, it is clear how the structural result extends to a linear chain with $n$ nodes. For such a case, the optimal controller is of the form
\[
\vec{u}_t^k = \sum_{j=1}^k K_t^{kj} \eecc{\vec{x}_t}{\vec{i}_t^{\{1,\dots,j\}}}
\qquad\text{for }k=1,\dots,n
\]

\item (Broadcast cases) We define a broadcast DAG to be a graph with $n$ nodes, one of which is the hub. The $n=4$ cases are illustrated in Figure~\ref{fig:broadcast}.
\begin{figure}[ht]
\centering
\begin{subfigure}{0.24\linewidth}
\centering
\begin{tikzpicture}[thick,node distance=2.0cm,>=latex]
	\def\fs{\small}  
	\tikzstyle{block}=[circle,draw,fill=black!6,minimum height=2.1em]
	\def\h{1.1}
	\node [block](N1){$1$};
	\node [block] at (-\h,-\h) (N2){$2$};
	\node [block] at (0,-\h) (N3){$3$};
	\node [block] at (\h,-\h) (N4){$4$};
	\draw [->] (N1) -- (N2);
	\draw [->] (N1) -- (N3);
	\draw [->] (N1) -- (N4);
\end{tikzpicture}
\end{subfigure}
\begin{subfigure}{0.22\linewidth}
\centering
$\Sbin=\bmat{ 1 & 0 & 0 & 0 \\
			  1 & 1 & 0 & 0 \\
			  1 & 0 & 1 & 0 \\
			  1 & 0 & 0 & 1  }$
\end{subfigure}
\hfill
\begin{subfigure}{0.24\linewidth}
\centering
\begin{tikzpicture}[thick,node distance=2.0cm,>=latex]
	\def\fs{\small}  
	\tikzstyle{block}=[circle,draw,fill=black!6,minimum height=2.1em]
	\def\h{1.1}
	\node [block] at (-\h,0) (N1){$1$};
	\node [block] at (0,0) (N2){$2$};
	\node [block] at (\h,0) (N3){$3$};
	\node [block] at (0,-\h) (N4){$4$};
	\draw [->] (N1) -- (N4);
	\draw [->] (N2) -- (N4);
	\draw [->] (N3) -- (N4);
\end{tikzpicture}
\end{subfigure}
\begin{subfigure}{0.22\linewidth}
\centering
$\Sbin=\bmat{ 1 & 0 & 0 & 0 \\
			  0 & 1 & 0 & 0 \\
			  0 & 0 & 1 & 0 \\
			  1 & 1 & 1 & 1  }$
\end{subfigure}
\caption{DAG and associated sparsity matrix for the broadcast-out (left) and broadcast-in (right) architectures for $n=4$ nodes.
\label{fig:broadcast}}
\end{figure}
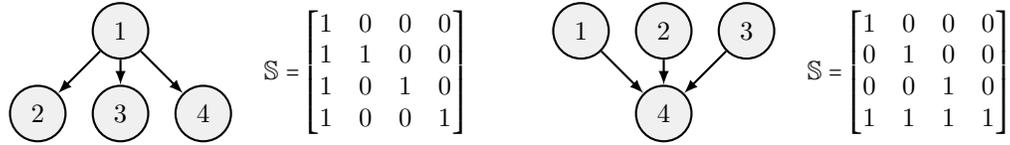

Note that all pairs of nodes have either a common ancestor or a common descendant, so Assumptions~A2 and~A2' are equivalent. Theorem~\ref{thm:main} provides the following.
\begin{enumerate}
\item In the broadcast-out case, we require that nodes $\{2,\dots,n\}$ have pairwise-decoupled cost. In this case, the optimal controller has the form
\begin{align*}
\vec{u}_t^1 &= K_t^{11} \eec{\vec{x}_t}{\vec{i}_t^1} & \\
\vec{u}_t^k &= K_t^{k1} \eec{\vec{x}_t}{\vec{i}_t^1} +
K_t^{kk} \eecc{\vec{x}_t^{\{1,k\}}}{\vec{i}_t^{\{1,k\}}}
& \text{for }k=2,\dots,n
\end{align*}
\item In the broadcast-in case, we require that the nodes $\{1,\dots,n-1\}$ have have pairwise-uncorrelated noise. In this case, the optimal controller has the form
\begin{align*}
\vec{u}_t^k &= K_t^{kk} \eecc{\vec{x}_t^{\{k,n\}}}{\vec{i}_t^k}
& \text{for }k=1,\dots,n-1 \\
\vec{u}_t^n &= \sum_{j=1}^{n-1} K_t^{nj} \eecc{\vec{x}_t^{\{j,n\}}}{\vec{i}_t^j}
+ K_t^{nn} \eecc{\vec{x}_t}{\vec{i}_t^{\{1,\dots,n\}}} &
\end{align*}
\end{enumerate}
A similar architecture was studied in \cite{lessard_broadcast} for the continuous-time infinite-horizon case, and a structure identical to the one above was found.

\item (Simple graph) Consider the simple five-node graph of Figure~\ref{fig:ex_simplegraph}. Because of Assumption~A2', we have the following restrictions:
\begin{itemize}
\item Nodes (3,4) and (4,5) have decoupled cost.
\item Nodes (1,2) have uncorrelated noise.
\item Nodes (1,4) have either decoupled cost or uncorrelated noise.
\end{itemize}
Theorem~\ref{thm:main} provides the following optimal controller structure, which we express using the formulation of Corollary~\ref{cor:main}.
\[
\bmat{ \vec{u}_t^1 \vphantom{i_t^{\{}} \\
       \vec{u}_t^2 \vphantom{i_t^{\{}} \\
       \vec{u}_t^3 \vphantom{i_t^{\{}} \\
       \vec{u}_t^4 \vphantom{i_t^{\{}} \\
       \vec{u}_t^5 }\vphantom{i_t^{\{} }  =
\underbrace{\bmat{K_t^{11} & 0 & 0 & 0 & 0 \vphantom{i_t^{\{}} \\
      0 & K_t^{22} & 0 & 0 & 0 \vphantom{i_t^{\{}} \\
      K_t^{31} & K_t^{32} & K_t^{33} & 0 & 0 \vphantom{i_t^{\{}} \\
      0 & K_t^{42} & 0 & K_t^{44} & 0 \vphantom{i_t^{\{}} \\
      K_t^{51} & K_t^{52} & K_t^{53} & 0 & K_t^{55} \vphantom{i_t^{\{}} }}_{K_t}
\bmat{ \eecc{\vec{x}_t^{\{1,3,5\}}}{\vec{i}_t^{1} } \\
	   \eecc{\vec{x}_t^{\{2,3,4,5\}}}{\vec{i}_t^{2} } \\
	   \eecc{\vec{x}_t^{\{1,2,3,5\}}}{\vec{i}_t^{\{1,2,3\}} } \\
	   \eecc{\vec{x}_t^{\{2,4\}}}{\vec{i}_t^{\{2,4\}} } \\
	   \eecc{\vec{x}_t^{\{1,2,3,5\}}}{\vec{i}_t^{\{1,2,3,5\}} } }
\]
Note that the matrix of control gains $K_t$ has a block-sparsity pattern that conforms to $\Sbin$, the sparsity matrix for this problem.

\end{list}

\section{Proof preliminaries}\label{sec:proof_prelim}

The main results of this paper,~Theorems~\ref{thm:main} and \ref{thm:main_2}, will be proved in the several sections that follow. In this section, we give an outline of the proof method and introduce required concepts that will be used later in the formal proof. Throughout this section and Section~\ref{sec:proof_main}, we will make assumptions A1 and A2 described in Section \ref{sec:assumptions}. 

\subsection{Proof outline}\label{sec:proof_outline}

Our proof technique may be thought of as a sequence of refinements that traverse the underlying DAG starting from the leaf nodes and finishing at the root nodes. To illustrate the process, consider the following four-node graph.

\begin{figure}[ht]
\centering
\begin{subfigure}{0.49\linewidth}
\centering
\begin{tikzpicture}[thick,node distance=2.0cm,>=latex]
	\def\fs{\small}  
	\tikzstyle{block}=[circle,draw,fill=black!6,minimum height=2.1em]
	\def\h{1.4}
	\def\r{0.48}
	\node [block] at (-\h,0) (N1){$1$};
	\node [block] at (0,0) (N2){$2$};
	\node [block] at (\h,\r) (N3){$3$};
	\node [block] at (\h,-\r) (N4){$4$};
	\draw [->] (N1) -- (N2);
	\draw [->] (N2) -- (N3);
	\draw [->] (N2) -- (N4);
\end{tikzpicture}
\end{subfigure}
\begin{subfigure}{0.49\linewidth}
\centering
$\Sbin=\bmat{ 1 & 0 & 0 & 0 \\
			  1 & 1 & 0 & 0 \\
			  1 & 1 & 1 & 0 \\
			  1 & 1 & 0 & 1  }$
\end{subfigure}
\caption{A four-node DAG and its associated sparsity pattern.
\label{fig:simple_4node}}
\end{figure}
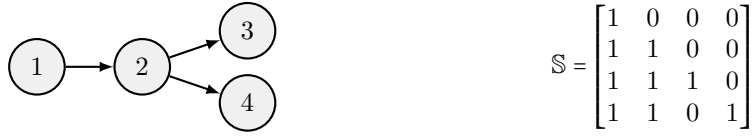
\noindent The most basic structural form of control strategies for the system illustrated in Figure~\ref{fig:simple_4node} is simply our initial information constraint~\eqref{eq:input_definition}, namely,
\begin{align*}
\vec{u}_t^1 &= f^1_t\bigl( \vec{i}_t^1 \bigr)	&
\vec{u}_t^2 &= f^2_t\bigl( \vec{i}_t^{\{1,2\}} \bigr)	&
\vec{u}_t^3 &= f^3_t\bigl( \vec{i}_t^{\{1,2,3\}} \bigr)	&
\vec{u}_t^4 &= f^4_t\bigl( \vec{i}_t^{\{1,2,4\}} \bigr)
\end{align*}
where the $f_t^i$ are linear functions. To prove Theorems~\ref{thm:main} and \ref{thm:main_2} for this graph, we can proceed as follows:

\paragraph{Step 1: leaf nodes.} Our first step examines the leaf nodes. We fix strategies of all nodes except a single leaf node, say node~4 in our current example. We consider node~4's centralized control problem and find a structural result for its optimal control strategy. We repeat the process for all leaf nodes. This step, described in detail in  Section~\ref{sec:P0},  yields the following structural result for optimal strategies in  the example of Figure~\ref{fig:simple_4node}:
\begin{align*}
\vec{u}_t^1 &= f^1_t\bigl( \vec{i}_t^1 \bigr)	&
\vec{u}_t^2 &= f^2_t\bigl( \vec{i}_t^{\{1,2\}} \bigr)	&
\vec{u}_t^3 &= h^3_t\bigl( \vec{i}_t^{\{1,2\}}, \vec{z}_t^{\anc{3}} \bigr)	&
\vec{u}_t^4 &= h^4_t\bigl( \vec{i}_t^{\{1,2\}}, \vec{z}_t^{\anc{4}} \bigr)
\end{align*}
where the $h_t^i$ are new linear functions.

\paragraph{Step 2: parents  of leaf nodes.} For the next step, we jointly consider the leaf nodes and their parent nodes (that is, the nodes whose strict descendants are leaf nodes). For the example of Figure~\ref{fig:simple_4node}, we consider nodes $\{2,3,4\}$ together. The strategies of these nodes depend on the common information $\vec{i}_t^{\{1,2\}}$ as well as the estimates computed in the previous step which are used only by the leaf nodes. 
We introduce a ``coordinator'' that knows the common information and selects the part of control actions that depends only on common information. The coordinator's problem is a centralized control problem for which we can find a structural result. This yields a refined structural result for optimal strategies:
 \begin{align*}
\vec{u}_t^1 &= f^1_t\bigl( \vec{i}_t^1 \bigr)	&
\vec{u}_t^2 &= m^2_t\bigl( \vec{i}_t^1, \vec{z}_t^\funnel{2} \bigr)	&
\vec{u}_t^3 &= m^3_t\bigl( \vec{i}_t^1, \vec{z}_t^\funnel{2}, \vec{z}_t^{\anc{3}} \bigr)	&
\vec{u}_t^4 &= m^4_t\bigl( \vec{i}_t^1, \vec{z}_t^\funnel{2}, \vec{z}_t^{\anc{4}} \bigr)
\end{align*}
where the $m_t^i$ are new linear functions.

\paragraph{Step 3: all other nodes.} The pattern is now clear. The next step is to look at nodes $\{1,2,3,4\}$ which together have $\vec{i}_t^1$ as the common information. By introducing a new coordinator that selects the part of control actions that depend only on this common information, we once again obtain a centralized problem that yields our final result.
\begin{align*}
\vec{u}_t^1 &= g^1_t\bigl( \vec{z}_t^\funnel{1} \bigr)	&
\vec{u}_t^2 &= g^2_t\bigl( \vec{z}_t^\funnel{1}, \vec{z}_t^\funnel{2} \bigr)	&
\vec{u}_t^3 &= g^3_t\bigl( \vec{z}_t^\funnel{1}, \vec{z}_t^\funnel{2}, \vec{z}_t^{\anc{3}} \bigr)	&
\vec{u}_t^4 &= g^4_t\bigl( \vec{z}_t^\funnel{1}, \vec{z}_t^\funnel{2}, \vec{z}_t^{\anc{4}} \bigr)
\end{align*}
where the $g_t^i$ are new linear functions. Note that in the above example although $\funnel{1}=\funnel{2}=\{1,2,3,4\}$, the estimates $\vec{z}_t^\funnel{1}$ and $\vec{z}_t^\funnel{2}$ are different. The latter is node~2's estimate, which uses~$\vec{i}_t^{\{1,2\}}$, while the former is node~1's estimate, which uses only~$\vec{i}_t^1$.

To summarize, we began with controllers that depend on an information history that grows with time (e.g. $\vec i_t^1$), and finished with controllers that only depend on conditional estimates (e.g. $\vec z_t^\funnel{1}$). These conditional estimates are \emph{sufficient statistics} that aggregate the information history. They only require finite storage, and may be computed recursively.
In Section~\ref{sec:Ps}, the process described above is formalized using induction, thus proving the structural result for any multitree.

\subsection{New definitions for multitrees}

New notation and concepts are required to make the leap from a simple graph like the example in Figure~\ref{fig:simple_4node} to a general multitree. A general multitree may have multiple root and leaf nodes, as well as multiple branches of varying lengths. We handle such cases by defining the concept of a \emph{generation}.

\begin{defn}\label{def:generations}
Suppose $G(\mathcal{V,E})$ is a multitree. Define \eemph{generations} recursively as follows.
\begin{align*}
\mathcal{G}^0 &= \sett{ i\in \mathcal{V} }{ \sdes{i} = \emptyset } \\
\mathcal{G}^k &= \sett{ i\in \mathcal{V}\setminus\textstyle{\bigcup_{j<k}}\mathcal{G}^{j} }{ \sdes{i} \subseteq \textstyle{\bigcup_{j<k}}\mathcal{G}^{j} }
\qquad \text{for $k=1,2,\dots$}
\end{align*}
\end{defn}
\noindent
For the example of Figure~\ref{fig:simple_4node}, we have:
\begin{align*}
\mathcal{G}^0 = \{3,4\} &&
\mathcal{G}^1 = \{2\} &&
\mathcal{G}^2 = \{1\} &&
\mathcal{G}^3 = \mathcal{G}^4 = \dots = \emptyset
\end{align*}
For the example of Figure~\ref{fig:ex_simplegraph}, we have:
\begin{align*}
\mathcal{G}^0 = \{4,5\} &&
\mathcal{G}^1 = \{3\} &&
\mathcal{G}^2 = \{1,2\} &&
\mathcal{G}^3 = \mathcal{G}^4 = \dots = \emptyset
\end{align*}
The nodes in $\mathcal{G}^0$ are precisely the leaf nodes. It is clear that the $\mathcal{G}^k$ sets are a partition of~$\mathcal{V}$. Furthermore, for some $m \leq n$, the first $m$ generations  are nonempty and all subsequent ones are empty. For convenience, we use the notation
\begin{align*}
\mathcal{G}^{\leq m} &:= \bigcup_{i\le m} \mathcal{G}^i &
&\text{and}&
\mathcal{G}^{\geq m} &:= \bigcup_{i \ge m} \mathcal{G}^i
\end{align*} 
The proof of our main results will make use of a graph traversal that proceeds generation-by-generation rather than node-by-node as we did in the example of Figure~\ref{fig:simple_4node}. We will also require some new definitions in order to capture the features of more complicated multitrees.

\begin{defn}[siblings, co-parents and non-relatives] \label{def:sibs}
Suppose $G(\mathcal{V,E})$ is a multitree. Given  $j \in \mathcal{V}$, define the following subsets of $\mathcal{V}$.
\begin{enumerate}[(i)]
\item The \eemph{siblings} of $j$:
	$\sibling{j} = \bigcup_{i \in \anc{j}}\des{i}\setminus \funnel{j}$.
\item The \eemph{co-parents} of $j$:
	$\coparent{j} = \bigcup_{i \in \des{j}}\anc{i}\setminus \funnel{j}$.
\item The \eemph{non-relatives} of $j$:
	$\nonrelative{j} = \mathcal{V}\setminus(\funnel{j}\cup\coparent{j}\cup\sibling{j})$
\end{enumerate}
\end{defn}
In other words, the siblings of~$j$ are the nodes that are neither ancestors nor descendants of~$j$, but that share a common ancestor with~$j$. Similarly, the co-parents of~$j$ are the nodes that are neither ancestors nor descendants of~$j$, but that share a common descendant with~$j$. The non-relatives of~$j$ are all the remaining nodes once we have excluded descendants, ancestors, siblings, and co-parents of~$j$. We now observe that $\mathcal{V}$ may be partitioned using Definition~\ref{def:sibs}.
\begin{lem}\label{lem:disjoint}
Suppose $G(\mathcal{V},\mathcal{E})$ is a multitree. For every $j \in \mathcal{V}$, the six sets $\{j\}$, $\sanc{j}$, $\sdes{j}$, $\sibling{j}$, $\coparent{j}$, $\nonrelative{j}$ are a partition of $\mathcal{V}$.
\end{lem}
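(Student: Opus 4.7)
The plan is to verify the two requirements for a partition: pairwise disjointness of the six sets and that their union equals $\mathcal{V}$. The union condition is immediate from the definition of $\nonrelative{j}$, so the work is entirely in checking disjointness. Most of the disjointness claims follow directly from definitions, and the only nontrivial case is showing $\sibling{j}\cap\coparent{j}=\emptyset$, which is where the multitree hypothesis will be used.

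First I would handle the pieces inside $\funnel{j}$. The sets $\{j\}$, $\sanc{j}$, $\sdes{j}$ are disjoint by definition ($\sanc{j}=\anc{j}\setminus\{j\}$, $\sdes{j}=\des{j}\setminus\{j\}$), and their pairwise intersection $\sanc{j}\cap\sdes{j}$ is empty because a common element would yield a directed cycle through $j$, violating acyclicity. The remark after Definition of $\funnel{i}$ already records that these three sets partition $\funnel{j}$. The sets $\sibling{j}$, $\coparent{j}$, $\nonrelative{j}$ are each subsets of $\mathcal{V}\setminus\funnel{j}$ by construction, so they are disjoint from $\{j\}$, $\sanc{j}$, and $\sdes{j}$. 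Also $\nonrelative{j}$ is disjoint from $\sibling{j}\cup\coparent{j}$ by its defining complement.

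The main obstacle is proving $\sibling{j}\cap\coparent{j}=\emptyset$. I would argue by contradiction: suppose $k$ lies in both. From $k\in\sibling{j}$ there exists $i\in\anc{j}$ with $i\to k$, and from $k\in\coparent{j}$ there exists $l\in\des{j}$ with $k\to l$. Since $k\notin\funnel{j}$, we must have $i\neq j$ (else $k\in\des{j}\subseteq\funnel{j}$) and $l\neq j$ (else $k\in\anc{j}\subseteq\funnel{j}$), and similarly $i\neq k$ and $k\neq l$. Thus $i\to j$ and $i\to k$ via nontrivial paths, and $j\to l$ and $k\to l$ via nontrivial paths, giving $i\to j\to l$ and $i\to k\to l$. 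Since $k\notin\funnel{j}$, we have $j\nleftrightarrow k$. But then $i,j,k,l$ form a diamond in the sense of Definition~\ref{def:multitree}, contradicting assumption A1.

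Finally, the union equals $\mathcal{V}$ because $\nonrelative{j}$ is defined as the complement of $\funnel{j}\cup\coparent{j}\cup\sibling{j}$, and $\funnel{j}=\sanc{j}\cup\{j\}\cup\sdes{j}$. Combining the disjointness statements above with this covering yields the claimed partition. The only delicate step is the diamond argument, and the care needed there is confirming that each of the four nodes $i,j,k,l$ is genuinely distinct and that the two paths $i\to j\to l$ and $i\to k\to l$ witness the diamond structure — everything else is bookkeeping from the definitions.
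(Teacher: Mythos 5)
Your proof is correct and follows essentially the same route as the paper: the only substantive step is showing $\sibling{j}\cap\coparent{j}=\emptyset$ by exhibiting a diamond from a common ancestor and a common descendant of $j$ and $k$ together with $j\nleftrightarrow k$, which is exactly the paper's argument. Your version is somewhat more careful about the bookkeeping (distinctness of the four nodes and the disjointness of the remaining sets), but the content is the same.
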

\begin{proof}
Recall from Section~\ref{sec:prelim} that $\{j\}$, $\sanc{j}$, and $\sdes{j}$ form a partition of $\funnel{j}$. Next, we show that $\sibling{j}$ and $\coparent{j}$ are disjoint. Suppose to the contrary that there exists some $k \in \sibling{j}\cap\coparent{j}$. Then (i) $j$ and $k$ have both a common ancestor~$a$ and a common descendant~$d$ and (ii)  $k \notin \funnel{j}$,  it follows that $k\nleftrightarrow j$. Therefore $\{a,j,k,d\}$ form a diamond, which violates the multitree assumption. Therefore,  $\sibling{j}\cap\coparent{j} = \emptyset$. By Definition~\ref{def:sibs}, $\nonrelative{j}$ completes the partition, as required.
\end{proof}

\subsection{Aggregated graph and dynamics}\label{sec:aggregated}

A key ingredient mentioned in the proof outline of Section~\ref{sec:proof_outline} is the identification of centralized subproblems hidden within decentralized architectures. To make such identifications more transparent in the proof to come, we define aggregated representations of $G(\mathcal{V,E})$ that highlight the structures that are relevant for the various decision-makers. 

\begin{defn}[six-node aggregated graph]
Suppose $G(\mathcal{V},\mathcal{E})$ is a multitree. For  every $j \in \mathcal{V}$, we define the \eemph{six-node aggregated graph} centered at~$j$, denoted $G^j(\mathcal{V}^j,\mathcal{E}^j)$, as the multitree of Figure~\ref{fig:aggregate}.
\begin{figure}[ht]
\centering
\begin{tikzpicture}[thick,>=latex]
	\tikzstyle{block}=[circle,draw,fill=black!6,minimum height=2.4em];
	\tikzstyle{cblock}=[cloud,draw,fill=black!6,cloud puffs=7.5,cloud puff arc=120, aspect=1.7, inner sep=1mm];
	\def\x{2.7};
	\def\y{1.1};
	\node [cblock](N1) at (-0.5*\x,\y) {$\textbf{1}:\sanc{j}$};
	\node [cblock](N2) at (0.5*\x,\y) {$\textbf{2}:\coparent{j}$};
	\node [block](N3) at (-1.2*\x,0) {$\textbf{3}:j$};
	\node [cblock](N4) at (1.2*\x,0) {$\textbf{4}:\nonrelative{j}\!$};
	\node [cblock](N5) at (-0.5*\x,-\y) {$\textbf{5}:\sdes{j}$};
    \node [cblock](N6) at (0.5*\x,-\y) {$\textbf{6}:\sibling{j}$};
	\draw [->] (N1) -- (N3);
	\draw [->] (N1) -- (N6);
	\draw [->] (N2) -- (N5);
    \draw [->] (N3) -- (N5);
	\draw [->] (N2) -- (N6);
	\draw [->] (N2) -- (N4);
	\draw [->] (N4) -- (N6);
	\draw [->] (N1) -- (N5);
\end{tikzpicture}
\caption{Diagram of $G^j(\mathcal{V}^j,\mathcal{E}^j)$, the six-node aggregate graph centered at~$j$.\label{fig:aggregate}}
\end{figure}
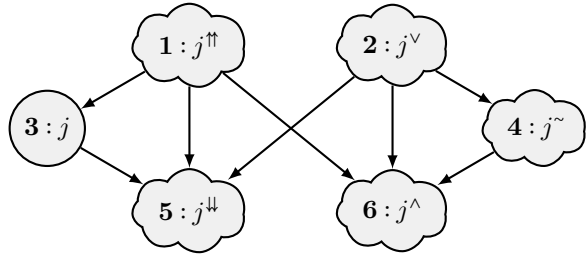

\end{defn}
The cloud-shaped nodes in $G^j$ indicate the aggregation of several nodes from the original graph $G(\mathcal{V},\mathcal{E})$. Thus, a node $s$ in  $G^j$ represents a subset $s \subseteq \mathcal{V}$. A directed edge $(s_1,s_2)\in\mathcal{E}^j$ means that the multitree structure of the original graph $G(\mathcal{V},\mathcal{E})$ permits the existence of nodes  $i \in s_1$, $k \in s_2$ with $(i,k) \in \mathcal{E}$.  On the other hand, $(s_1,s_2)\notin\mathcal{E}^j$ if and only if the multitree nature of the original graph rules out the existence of nodes $i \in s_1$, $k \in s_2$ with $(i,k) \in \mathcal{E}$.  
For example, consider the nodes $\coparent{j}$ and $\sibling{j}$ in the aggregated graph. Since there is an edge $\coparent{j} \to \sibling{j}$ but no edge $\sibling{j} \to \coparent{j}$, we conclude that there may be nodes $i\in\coparent{j}$ and $k\in\sibling{j}$ with $(i,k) \in \mathcal{E}$, but we will never have $(k,i) \in \mathcal{E}$. This  relationship between $G^j(\mathcal{V}^j,\mathcal{E}^j)$ and $G(\mathcal{V},\mathcal{E})$ is proved in the following lemma.

\begin{lem}
Suppose $G(\mathcal{V,E})$ is a multitree. For any $j\in\mathcal{V}$, let $G^j(\mathcal{V}^j,\mathcal{E}^j)$ be the corresponding six-node aggregated graph centered at $j$.
For all $s_1,s_2\in\mathcal{V}^j$, if $(s_1,s_2)\notin\mathcal{E}^j$ then there do not exist $i \in s_1$, $k \in s_2$ with $(i,k) \in \mathcal{E}$.
\end{lem}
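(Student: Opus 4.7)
The plan is to prove the contrapositive: if $(i,k) \in \mathcal{E}$ with $i \in s_1$ and $k \in s_2$, then $(s_1, s_2) \in \mathcal{E}^j$. Lemma~\ref{lem:disjoint} guarantees that $\{j\}$, $\sanc{j}$, $\sdes{j}$, $\sibling{j}$, $\coparent{j}$, $\nonrelative{j}$ partition $\mathcal{V}$, so the ordered pair $(s_1, s_2)$ is uniquely determined by the locations of $i$ and $k$. The task reduces to six sub-claims, one for each possible $s_1$: in each case I would show that $k$ must lie in $s_1$ itself or in one of the aggregate nodes to which $G^j$ has an outgoing edge from $s_1$.

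First I would dispose of the easy sub-claims. For $i = j$ or $i \in \sdes{j}$, transitivity combined with the acyclicity of $G$ immediately forces $k \in \sdes{j}$, since $j \to i \to k$. For $i \in \sanc{j}$, both $k$ and $j$ are descendants of $i$, so the three mutually exclusive cases $k \to j$, $j \to k$, and $k \nleftrightarrow j$ place $k$ in $\sanc{j} \cup \{j\}$, $\sdes{j}$, or (by the definition of $\sibling{j}$, since $k \in \des{i} \subseteq \bigcup_{a \in \anc{j}}\des{a}$) $\sibling{j}$, respectively. For $i \in \coparent{j}$ or $i \in \nonrelative{j}$, I would rule out each forbidden destination by composing $i \to k$ with a path from $k$ supplied by its aggregate membership; for example, if $k \in \anc{j}$ then $i \to k \to j$ places $i$ in $\funnel{j}$, and if $k \in \sdes{j} \cup \coparent{j}$ in the $\nonrelative{j}$ sub-claim, then $i$ becomes an ancestor of a descendant of $j$, forcing $i \in \funnel{j} \cup \coparent{j}$. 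None of these contradictions require assumption A1.

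The main obstacle, and the only place where the multitree hypothesis A1 is essential, is the sub-claim $i \in \sibling{j}$. By definition there is some $a \in \anc{j}$ with $a \to i$, and $i \nleftrightarrow j$. I would show $k \in \sibling{j}$ by ruling out the other possibilities. Transitivity handles $k \in \anc{j} \cup \{j\}$, while $k \in \des{i} \subseteq \des{a}$ together with $a \in \anc{j}$ forces any $k \notin \funnel{j}$ to belong to $\sibling{j}$ rather than $\nonrelative{j}$. The critical sub-cases are $k \in \sdes{j}$ and $k \in \coparent{j}$: in both, one extracts a common descendant $d$ of $j$ and $i$ (taking $d = k$ when $k \in \sdes{j}$, or $d \in \des{j} \cap \des{k}$ when $k \in \coparent{j}$) and observes that $a \to i \to d$, $a \to j \to d$, and $i \nleftrightarrow j$. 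The four-tuple $\{a, i, j, d\}$ is thus a diamond in the sense of Definition~\ref{def:multitree}, contradicting A1. The bookkeeping I anticipate being most tedious is verifying distinctness of $a, i, j, d$ in the diamond, which will repeatedly invoke $i \notin \funnel{j}$, $k \nleftrightarrow j$, and the acyclicity of $G$ to rule out coincidences. Once all six sub-claims are established, the stated lemma follows immediately.
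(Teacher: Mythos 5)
Your proposal is correct and follows essentially the same route as the paper's own proof: a case analysis over the aggregate sets using transitivity and the set definitions for the easy pairs, and the no-diamond property (Assumption A1) for the pairs involving $\sibling{j}$ and a common descendant. In fact your plan is more systematic and complete than the paper's argument, which works out only the single pair $(\sanc{j},\coparent{j})$ and asserts that the remaining pairs are similar.
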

\begin{proof}
The lemma is essentially a consequence of the multitree nature of $G(\mathcal{V},\mathcal{E})$ and the definition of $G^j(\mathcal{V}^j,\mathcal{E}^j)$. For example, consider $s_1 =\sanc{j}, s_2=\coparent{j}$. There is no edge connecting these nodes in $G^j$. Now, suppose $i,k$ are nodes in the original graph $G$ with  $i\in \sanc{j}$ and $k\in\coparent{j}$. We cannot have $(i,k) \in \mathcal{E}$, for this would create a diamond $\{i,j,k,d\}$, where $d$ is the common descendant of $j$ and $k$. We cannot have $(k,i) \in \mathcal{E}$ either, since then $k$ would be an ancestor of $j$, and then $k$ should belong to $\sanc{j}$, a contradiction.
Similar arguments can be made with all other pairs of nodes to establish the result of the lemma.
\end{proof} 

The aggregated graph centered at node $j$ will serve as the basis for aggregating states and dynamics of the overall system when we want to focus on node $j$. 
We define aggregated states $\set{\bar{\mathbf{x}}^m}{m=1,\dots,6}$ corresponding to each node in $G^j(\mathcal{V}^j,\mathcal{E}^j)$ as follows:
\begin{align*}
\bar{\mathbf{x}}^1 &= \mathbf{x}^\sanc{j} &
\bar{\mathbf{x}}^2 &= \mathbf{x}^\coparent{j} &
\bar{\mathbf{x}}^3 &= \mathbf{x}^j &
\bar{\mathbf{x}}^4 &= \mathbf{x}^\nonrelative{j} &
\bar{\mathbf{x}}^5 &= \mathbf{x}^\sdes{j} &
\bar{\mathbf{x}}^6 &= \mathbf{x}^\sibling{j}
\end{align*}
and define $\bar{\mathbf{u}}$, $\bar{\mathbf{y}}$, $\bar{\mathbf{w}}$, and $\bar{\mathbf{v}}$ in a similar fashion. Once this is done, the aggregated dynamics centered at $j$ may be written compactly as
\begin{equation}\label{eq:state_eqns_aggregated}
\begin{aligned}
\bar{\mathbf{x}}_+ &\eqt \bar A\bar{\mathbf{x}} + \bar B\bar{\mathbf{u}} + \bar{\mathbf{w}} \\
\bar{\mathbf{y}} &\eqt \bar C\bar{\mathbf{x}} + \bar{\mathbf{v}}
\end{aligned}
\end{equation}
Note that all vectors and matrices in ~\eqref{eq:state_eqns_aggregated} are  merely rearrangements of corresponding vectors and matrices in~\eqref{eq:state_eqns}. However, the new expression of~\eqref{eq:state_eqns_aggregated} has an important structural property. Namely, if we split $\bar A$, $\bar B$, and $\bar C$ into blocks according to the six states of the aggregated system, they will have a sparsity pattern that conforms to the graph of Figure~\ref{fig:aggregate}. We will show that despite being a coarser representation of the system dynamics, Figure~\ref{fig:aggregate} captures the structure that is \emph{relevant to node}~$j$ for the purpose of determining optimal decisions.

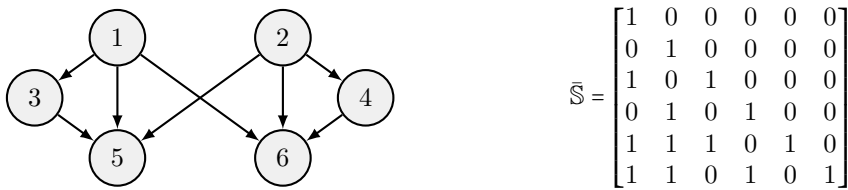
\begin{figure}[ht]
\centering
\begin{subfigure}{0.49\linewidth}
\centering
\begin{tikzpicture}[thick,>=latex]
	\tikzstyle{block}=[circle,draw,fill=black!6,minimum height=2.1em]
	\def\x{2.2};
	\def\y{0.8};
	\node [block](N1) at (0.5*\x,\y) {$1$};
	\node [block](N2) at (1.5*\x,\y) {$2$};
	\node [block](N3) at (0,0) {$3$};
	\node [block](N4) at (2.0*\x,0) {$4$};
	\node [block](N5) at (0.5*\x,-\y) {$5$};
	\node [block](N6) at (1.5*\x,-\y) {$6$};
	\draw [->] (N1) -- (N3);
	\draw [->] (N1) -- (N6);
	\draw [->] (N2) -- (N5);
    \draw [->] (N3) -- (N5);
	\draw [->] (N2) -- (N6);
	\draw [->] (N2) -- (N4);
	\draw [->] (N4) -- (N6);
	\draw [->] (N1) -- (N5);
\end{tikzpicture}
\end{subfigure}
\begin{subfigure}{0.49\linewidth}
\centering
$\bar\Sbin = \bmat{ 1 & 0 & 0 & 0 & 0 & 0 \\
			0 & 1 & 0 & 0 & 0 & 0 \\
			1 & 0 & 1 & 0 & 0 & 0 \\
			0 & 1 & 0 & 1 & 0 & 0 \\
			1 & 1 & 1 & 0 & 1 & 0 \\
			1 & 1 & 0 & 1 & 0 & 1 }$
\end{subfigure}
\caption{Six-node aggregated DAG and associated sparsity pattern.
\label{fig:six_node_centralized}}
\end{figure}

\subsection{The six-node centralized problem}
If we adopt the aggregated representation of systems dynamics based on the aggregated graph centered at $j$, the overall multitree ``looks like'' the six-node graph of Figure~\ref{fig:six_node_centralized}. The six-node graph and the corresponding
 sparsity pattern $\bar\Sbin$ given in Figure~\ref{fig:six_node_centralized} are universal in the sense that we get the same sparsity pattern  regardless of which $j\in\mathcal{V}$ is chosen to center the aggregation $G^j$.  Of course, some nodes in the aggregated graph may be empty. For example, if $j$ is a leaf node, then nodes $\sdes{j}, \coparent{j}$ are empty.
 
Motivated by the universality of the graph in Figure~\ref{fig:six_node_centralized}, we will now investigate an instance of Problem~1 corresponding to this graph. We will however assume that node $3$ is the only decision-maker, and that all observation processes other than the measurement at node $3$ is identically zero.


Therefore, we may write the dynamics of the six-node problem as
\begin{equation}\label{eq:six_node_eqns_structure}
\begin{aligned}
\bmat{ \vec{x}_+^1 \\ \vec{x}_+^2 \\
       \vec{x}_+^3 \\ \vec{x}_+^4 \\
       \vec{x}_+^5 \\ \vec{x}_+^6 }
&\eqt
\bmat{ A_{11} & 0      & 0      & 0      & 0      & 0     \\
	   0      & A_{22} & 0      & 0      & 0      & 0     \\
	   A_{31} & 0      & A_{33} & 0      & 0      & 0     \\
	   0      & A_{42} & 0      & A_{44} & 0      & 0     \\
	   A_{51} & A_{52} & A_{53} & 0      & A_{55} & 0     \\
	   A_{61} & A_{62} & 0      & A_{64} & 0      & A_{66} } 
\bmat{ \vec{x}^1 \\ \vec{x}^2 \\
       \vec{x}^3 \\ \vec{x}^4 \\
       \vec{x}^5 \\ \vec{x}^6 }
+
\bmat{  0      \\
	    0      \\
	    B_{33} \\
	    0      \\
	    B_{53} \\
	    0      } \vec{u}_3
+
\bmat{ \vec{w}^1 \\ \vec{w}^2 \\
       \vec{w}^3 \\ \vec{w}^4 \\
       \vec{w}^5 \\ \vec{w}^6 } \\
\vec{y}^3 &\eqt 
C_{31} \vec{x}^1 + C_{33} \vec{x}^3 + \vec{v}^3
\end{aligned}
\end{equation}
Controller~$3$ selects its actions according to a control strategy
$f^3 :=(f^3_0,\ldots,f^3_{T-1})$ of the form
\begin{equation}\label{eq:six_node_input_definition}
	\mathbf{u}_t^3 = f^3_t(\,\vec{i}^3_t\,) = f^3_t(\,\vec{y}^3_{1:t-1},\vec{u}^3_{1:t-1}\,) 
\quad\text{for }0\le t\le T-1
\end{equation}
The performance of control strategy $f^3$ is measured by the finite horizon expected quadratic cost given by
\begin{equation}\label{eq:cost_centralized}
\hat{\mathcal{J}}_0(f^3) =
\ee^{f^3} \biggl( \sum_t
\bmat{\mathbf{x}\\\mathbf{u}}^\tp \bmat{Q & S \\ S^\tp & R} \bmat{\mathbf{x}\\ \mathbf{u}}
+ \mathbf{x}_T^\tp P_\textup{final} \mathbf{x}_T \biggr)
\end{equation}
The expectation is taken with respect to the joint probability measure on $(\mathbf{x}_{0:T},  \mathbf{u}_{0:T-1})$ induced by the choice
of $f^3$.  The six-node problem of interest is as follows.

\boxit{
\begin{problem}[Six-node centralized LQG]
  \label{prob:CLQG_6node}
  For the model of~\eqref{eq:indep_noise}--\eqref{eq:initial_conditions} and \eqref{eq:six_node_eqns_structure}--\eqref{eq:six_node_input_definition}, and subject to Assumption  A2, find a control strategy $f^3$ that minimizes the cost~\eqref{eq:cost_centralized}.
\end{problem}
}

Problem~\ref{prob:CLQG_6node} is clearly a  centralized control problem because there is a single decision-maker. The classical LQG result \cite{kumar_varaiya} states that the optimal control law is of the form $\vec{u}_t^3 = h_t^3(\vec{z}_t)$ where $h_t^3$ is a linear function, and $\vec{z}_t = \eecs{\vec{x}_t}{\vec{i}_t^3}$ is the  estimate of the global state. The additional structure imposed in~\eqref{eq:six_node_eqns_structure} together with Assumption~A2 allows us to refine the classical result. In particular, it suffices to estimate $\vec{x}_t^1$, $\vec{x}_t^3$, and $\vec{x}_t^5$.

\begin{thm}\label{thm:six_node_result}
The optimal control strategy in Problem \ref{prob:CLQG_6node} is of the form
\begin{equation}
\mathbf{u}^3_t = h^3_t(\mathbf{z}^{\funnel{3}}_t)
\end{equation}
where $\vec{z}^{\funnel{3}}_t = (\vec{z}^1_t, \vec{z}^3_t, \vec{z}^5_t)= \eecs{\VEC{x}_t^{\funnel{3}}}{\vec{i}^{3}_t}$ and $h^3_t$ is a linear function. Further, $\vec{z}^{\funnel{3}}$ has a linear update equation of the form
\begin{equation}\label{eq:est_centralized}
\vec z_+^\funnel{3} \eqt
\bmat{A_{11} & 0 & 0 \\ A_{31} & A_{33} & 0 \\ A_{51} & A_{53} & A_{55}}\vec z^\funnel{3} + \bmat{0 \\ B_{33} \\ B_{53}}\vec{u}^3 -  L \bigl( \vec{y}^3- \bmat{C_{31}& C_{33}} \vec z^\anc{3} \bigr)
\end{equation}
where the matrix gain $ L$ does not depend on the choice of $h^3$.
\end{thm}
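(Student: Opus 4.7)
The plan is to treat Problem~\ref{prob:CLQG_6node} as a standard centralized LQG problem and then exploit the specific sparsity pattern imposed by Assumption~A2 together with the six-node aggregate structure to refine the classical optimal controller. Since there is a single decision maker with a time-nested information set, the classical LQG result~\cite{kumar_varaiya} guarantees an optimal strategy of the form $\vec{u}^3_t = K_t \vec{z}_t$ with $\vec{z}_t = \eecs{\vec{x}_t}{\vec{i}^3_t}$ computed by a Kalman filter whose gain is determined from the system parameters alone (separation principle), and with $K_t$ arising from a backward Riccati recursion on a cost-to-go matrix $P_t$.

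First, I would argue that $\vec{z}^2_t = \vec{z}^4_t = 0$ for all $t$. The state $\vec{x}^2$ is driven only by $\vec{x}^2_0$ and $\vec{w}^2_{0:T-1}$, and by A2 the pairs $(1,2)$ and $(2,3)$ have uncorrelated noise (no common ancestor in the aggregate), so this primitive randomness is independent of the collection $\{\vec{x}^{\{1,3\}}_0,\vec{w}^{\{1,3\}}_{0:T-1},\vec{v}^3_{0:T-1}\}$ that, together with past controls, generates $\vec{i}^3_t$. Hence $\vec{x}^2_t$ is independent of $\vec{i}^3_t$ with zero mean. An analogous argument applied to $\vec{x}^4$, using pairs $(1,4)$ and $(3,4)$ and the independence of $\vec{x}^2$ just established, gives $\vec{z}^4_t = 0$.

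Second, I would show by backward induction that $P_t^{i,j} = 0$ for every $(i,j) \in \{(3,4),(3,6),(5,4),(5,6)\}$ and their transposes. The base case $P_T = P_\textup{final}$ holds because A2 forces decoupled cost for those pairs (they have no common descendant in the aggregate). For the inductive step, the entries $Q^{i,j}$ and the relevant blocks of $S$ vanish by A2; the column support of $A$ (column~$3$ nonzero only in rows $\{3,5\}$, column~$4$ only in rows $\{4,6\}$, column~$6$ only in row~$6$) then collapses $(A^\tp P_{t+1}A)^{i,j}$ onto entries of $P_{t+1}$ killed by the hypothesis. The same column structure, combined with $B$ having nonzero rows only in $\{3,5\}$, forces the Riccati quadratic-correction term at these positions to vanish as well. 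It follows immediately that $K_t^{3,4}=K_t^{3,6}=0$, and together with $\vec{z}^2_t=\vec{z}^4_t=0$ this yields $\vec{u}^3_t = K_t^{3,\{1,3,5\}}\vec{z}^\funnel{3}_t$, a linear function of $\vec{z}^\funnel{3}_t$.

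Finally, to derive the recursion~\eqref{eq:est_centralized} for $\vec{z}^\funnel{3}_t$, I would extract the $\{1,3,5\}$-row block of the standard full-state Kalman predictor update $\vec{z}_+ = A\vec{z} + B\vec{u}^3 - L_t(\vec{y}^3 - C\vec{z})$. The critical sparsity facts are that $C^{3,j}=0$ for $j\notin\{1,3\}$, that $A^{\{1,3,5\},\{4,6\}}=0$, and that the only nonzero entry of $A^{\{1,3,5\},2}$ is $A_{52}$, which multiplies $\vec{z}^2_t=0$. Consequently the $\{1,3,5\}$ rows decouple from $\vec{z}^{\{2,4,6\}}_t$ and reduce exactly to the form~\eqref{eq:est_centralized}, with $L$ the corresponding block of the Kalman gain; by LQG separation $L$ depends only on the dynamics, observation, and noise covariances, not on $h^3$. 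The main technical obstacle is the Riccati induction: one must verify that the target sparsity of $P_t$ survives the quadratic correction, which works only because the zero patterns on $Q$, $S$, and $P_\textup{final}$ coming from A2 align precisely with the column support of $A$ and the row support of $B$ dictated by the six-node aggregate.
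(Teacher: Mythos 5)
Your proof is correct, but it reaches the conclusion by a different mechanism than the paper for the central step of eliminating nodes $4$ and $6$. The paper never touches the Riccati recursion: it uses the decoupled-cost conditions on the pairs $(3,4),(3,6),(4,5),(5,6)$ to split the objective as $\hat{\mathcal{J}}_0(f^3)=c+\tilde{\mathcal{J}}_0(f^3)$, where $c$ involves only $\vec{x}^{\{1,2,4,6\}}$ and is strategy-independent (those states are uncontrolled), and then observes that $\tilde{\mathcal{J}}_0$ together with the dynamics of $\vec{x}^{\{1,2,3,5\}}$ constitutes a standard centralized LQG problem with the reduced state $\vec{x}^{\{1,2,3,5\}}$; the classical result applied to this smaller problem immediately gives linearity in $\eecs{\vec{x}^{\{1,2,3,5\}}_t}{\vec{i}^3_t}$. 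You instead keep the full six-block state and propagate the zero pattern $P_t^{ij}=0$ for $(i,j)\in\{(3,4),(3,6),(5,4),(5,6)\}$ backward through the Riccati recursion to conclude $K_t^{3,4}=K_t^{3,6}=0$; I checked the induction and it closes, since the columns of $A$ indexed by $\{3,4,5,6\}$ have the support you describe and the relevant column of $B^\tp P_{t+1}A+S^\tp$ vanishes under the hypothesis. The treatment of node~$2$ (via $\vec{z}^2_t=0$ from uncorrelated noise and the autonomy of $\vec{x}^2$) and the extraction of the $\{1,3,5\}$ rows of the Kalman predictor, with the $A_{52}\vec{z}^2$ term dropping, coincide with the paper's argument; your additional claim $\vec{z}^4_t=0$ is true but not actually needed once $K_t^{3,4}=0$. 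The trade-off: the paper's cost-decomposition route is shorter and avoids any computation with $P_t$, while your route makes the zero structure of the control gain explicit at the level of the Riccati variables, which is the kind of information that becomes useful when one actually computes the gains (as in Part~II). One presentational caution: to assert the separation principle and the strategy-independence of $L$ you should, as the paper does, invoke the centralized LQG result once up front rather than implicitly; with that, your argument is complete.
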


\begin{proof}
 See Appendix \ref{sec:proof_thm3}.
\end{proof}
Note that the optimal strategies $h^3_t$ as well as the gains $L_t$ can be explicitly computed by algebraic Riccati recursions. This fact will not be needed in this paper, but we will make use of it in Part~II.


\subsection{A partial separation result for the aggregated graph}
Our final preliminary step before the formal proof is to establish a \emph{partial separation result} for Problem~\ref{prob:NPLQG}. The classical separation result for  centralized LQG control states that a controller's posterior belief on the state is independent of its control strategy  \cite{kumar_varaiya}.  In particular,  for any two (linear) control strategies $f$ and $g$ in the centralized LQG problem, the conditional mean and the conditional covariance matrix are the same:
\begin{align}
&\ee^{f}\left(\vec{x}_t\middle\vert  y_{0:t-1}, u_{0:t-1}\right) = \ee^{g}\left(\vec{x}_t\middle\vert  y_{0:t-1}, u_{0:t-1}\right), \\
&\ee^{f}\left((\vec{x}_t - \hat{x}_t)(\vec{x}_t - \hat{x}_t)^\tp\middle\vert  y_{0:t-1}, u_{0:t-1}\right) = \ee^{g}\left((\vec{x}_t - \hat{x}_t)(\vec{x}_t - \hat{x}_t)^\tp\middle\vert  y_{0:t-1}, u_{0:t-1}\right),
\end{align} 
where $\hat{x}_t = \ee^{f}\left(\vec{x}_t\middle\vert  y_{0:t-1}, u_{0:t-1}\right) = \ee^{g}\left(\vec{x}_t\middle\vert  y_{0:t-1}, u_{0:t-1}\right)$.

This \emph{complete separation} of estimation from control strategies does not hold, in general, for our problem. The following lemma shows how estimation at node $j$ can be \emph{separated from some parts of the control strategy profile.}
\begin{lem}\label{lem:estimation}
Consider any node $j$. Consider any fixed choice $f = (f^1,f^2,\ldots,f^n)$ of linear control strategies of all $n$ controllers. Then, 
\begin{enumerate}
\item  $\ee^{{f}}\left(\vec{x}^{\coparent{j}}_t,\vec{i}^{\coparent{j}}_t\middle\vert\vec i^{\anc{j}}_t\right) =0$ and  $\ee^{f}\left(\vec{x}^{\nonrelative{j}}_t,\vec{i}^{\nonrelative{j}}_t\middle\vert\vec{i}^{\anc{j}}_t\right) =0$.

\item Group the control strategies of the $n$ nodes according to the aggregated graph centered at $j$, that is, write $f$ as $f=(f^{\sanc{j}}, f^{\coparent{j}}, f^j, f^{\nonrelative{j}}, f^{\sdes{j}}, f^{\sibling{j}})$.  Let $g$ be another linear strategy profile given as $g = (g^{\sanc{j}}, g^{\coparent{j}}, g^j, g^{\nonrelative{j}}, f^{\sdes{j}}, g^{\sibling{j}})$. Note that the descendants of node $j$ have the same control strategies under $f$ and $g$. Then,
\[\ee^{f}\left(\vec{x}^{\funnel{j}}_t,\vec{i}^{\funnel{j}}_t\middle\vert\vec i^{\anc{j}}_t = i^{\anc{j}}_t\right) = \ee^{g}\left(\vec{x}^{\funnel{j}}_t,\vec{i}^{\funnel{j}}_t\middle\vert\vec i^{\anc{j}}_t = i^{\anc{j}}_t\right).\]

\item Further,   let $ g = (g^{\sanc{j}}, g^{\coparent{j}}, g^j, g^{\nonrelative{j}}, g^{\sdes{j}}, g^{\sibling{j}})$ be any linear control strategy that satisfies
${g}^i_t(\vec{i}^{\anc{i}}_t) = {f}^i_t(\vec{i}^{\anc{i}}_t)  + m^i_t(\vec{i}^{\coparent{j}}_t)$,  
for all $i \in \sdes{j}$. Then,
\[\ee^{f}\left(\vec{x}^{\funnel{j}}_t,\vec{i}^{\funnel{j}}_t\middle\vert\vec i^{\anc{j}}_t = i^{\anc{j}}_t\right) = \ee^{g}\left(\vec{x}^{\funnel{j}}_t,\vec{i}^{\funnel{j}}_t\middle\vert\vec i^{\anc{j}}_t = i^{\anc{j}}_t\right).\]
\item Let $ g = (g^{\sanc{j}}, g^{\coparent{j}}, g^j, g^{\nonrelative{j}}, g^{\sdes{j}}, g^{\sibling{j}})$ be any linear control strategy that satisfies
\[ g^i_t(\vec{i}^{\anc{i}}_t) = f^i_t(\vec{i}^{\anc{i}}_t)   
+ \ell^i_t(\vec{i}^{\anc{j}}_t), ~~\mbox{ for all $i \in \sdes{j}$}.\]
 Then, the conditional  covariance matrix of  $(\vec{x}^{\funnel{j}}_t,\vec{i}^{\funnel{j}}_t, \vec{y}^{\anc{j}}_t )$ given $\vec{i}^{\anc{j}}_t$ is the same under $f$ and $g$.

\item Let $ h = (h^{\sanc{j}}, h^{\coparent{j}}, h^j, h^{\nonrelative{j}}, h^{\sdes{j}}, h^{\sibling{j}})$ be any linear control strategy that satisfies
\[{h}^i_t(\vec{i}^{\anc{i}}_t) = {f}^i_t(\vec{i}^{\anc{i}}_t)   + m^i_t(\vec{i}^{\coparent{j}}_t)
+ \ell^i_t(\vec{i}^{\anc{j}}_t), ~~\mbox{for all $i \in \sdes{j}$}.\] 
Then, the conditional  covariance matrix of $\vec{y}^{\anc{j}}_t$ given $\vec{i}^{\anc{j}}_t$ as well as the conditional cross covariance between $(\vec{x}^{\funnel{j}}_t,\vec{i}^{\funnel{j}}_t)$ and  $\vec{y}^{\anc{j}}_t $ given $\vec{i}^{\anc{j}}_t$ are the same under $f$ and $h$.
 \end{enumerate} 
\end{lem}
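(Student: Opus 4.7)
The plan is to reduce all five parts to consequences of two basic facts. The first is a small set of graph-theoretic observations, each provable in a line or two by exhibiting a forbidden diamond (combined with the definitions of $\sibling{j}$, $\coparent{j}$, $\nonrelative{j}$): $\anc{\coparent{j}} \subseteq \coparent{j}$, $\anc{\nonrelative{j}} \subseteq \coparent{j} \cup \nonrelative{j}$, $\anc{\sdes{j}} \subseteq \funnel{j} \cup \coparent{j}$, and---most importantly---no node in $\coparent{j} \cup \nonrelative{j}$ shares a common ancestor with any node in $\anc{j}$. Combined with the ``no common ancestor $\Rightarrow$ uncorrelated noise'' clause of Assumption~A2, this last observation forces the joint primitive random variables at $\coparent{j} \cup \nonrelative{j}$ to be independent of those at $\anc{j}$. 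The second basic fact is that, under any linear strategy profile, $\vec{i}^{\anc{j}}_t$ is a measurable linear function of primitives at $\anc{j}$ alone, which follows by an easy induction on $t$ using $\anc{\anc{j}} = \anc{j}$.

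Given these, Part~1 is immediate: $\vec{x}^{\coparent{j}}_t$ and $\vec{i}^{\coparent{j}}_t$ are linear in primitives living in $\coparent{j}$ only, so the conditional expectation given $\vec{i}^{\anc{j}}_t$ collapses to the unconditional (zero) mean by independence. The $\nonrelative{j}$ case is analogous using primitives drawn from $\coparent{j} \cup \nonrelative{j}$. For Part~2, I would expand $(\vec{x}^{\funnel{j}}_t, \vec{i}^{\funnel{j}}_t)$ as a linear combination of primitives at $\funnel{j} \cup \coparent{j}$ together with the past actions $\vec{u}^{\anc{j}}_{0:t-1}$. Conditioning on $\vec{i}^{\anc{j}}_t$ freezes the latter, so changing from $f$ to $g$ can only alter the conditional mean through (i) strategies at $\sanc{j} \cup \{j\}$, which enter only through the now-frozen ancestor actions, or (ii) strategies at $\coparent{j}$, which enter through $\vec{u}^{\coparent{j}}_{0:t-1}$---a linear function of $\coparent{j}$-primitives with zero conditional mean by Part~1. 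Neither effect shifts the conditional mean. Part~3 is the same argument applied to the specific perturbation $m^i_t(\vec{i}^{\coparent{j}}_t)$, which injects only $\coparent{j}$-primitives into $\vec{u}^{\sdes{j}}$ and is thus annihilated by the conditional expectation.

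For Part~4, the perturbation $\ell^i_t(\vec{i}^{\anc{j}}_t)$ is a deterministic function of the conditioning variable, so it shifts conditional means but leaves every conditional covariance in $(\vec{x}^{\funnel{j}}_t, \vec{i}^{\funnel{j}}_t, \vec{y}^{\anc{j}}_t)$ unchanged. Part~5 follows by combining Parts~3 and~4: the $m^i$-perturbation injects $\coparent{j}$-information into $\vec{x}^{\sdes{j}}$ and $\vec{i}^{\sdes{j}}$, and although this may alter their conditional covariance with each other, it is conditionally uncorrelated with $\vec{y}^{\anc{j}}$ (which is driven purely by $\anc{j}$-primitives, independent of $\coparent{j}$-primitives); hence the conditional variance of $\vec{y}^{\anc{j}}$ and its conditional cross-covariance with $(\vec{x}^{\funnel{j}}_t, \vec{i}^{\funnel{j}}_t)$ are preserved. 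The main obstacle is the bookkeeping of which strategies enter which variables across time; a clean induction on $t$, organised around the aggregated-graph decomposition $\mathcal{V} = \sanc{j} \cup \{j\} \cup \sdes{j} \cup \coparent{j} \cup \sibling{j} \cup \nonrelative{j}$ and the ``freeze-the-ancestor-actions'' viewpoint, keeps the argument tractable.
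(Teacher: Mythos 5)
Your proposal is correct and follows essentially the same route as the paper's proof: expand all quantities as linear functions of the primitive random variables organised by the aggregated-graph partition, invoke the no-common-ancestor/uncorrelated-noise clause of Assumption~A2 to annihilate the $\coparent{j}\cup\nonrelative{j}$ contributions, and observe that perturbations measurable with respect to $\vec{i}^{\anc{j}}_t$ shift conditional means but not conditional covariances. The one step your ``freeze'' viewpoint should make explicit is that the conditional law of the $\anc{j}$-primitives given a fixed realization $i^{\anc{j}}_t$ is itself strategy-independent, which the paper obtains by writing $\vec{y}^{\anc{j}}_{0:t-1}=\mathcal{F}(\text{primitives at }\anc{j})+\mathcal{G}\vec{u}^{\anc{j}}_{0:t-1}$ so that the conditioning event reduces to a condition on primitives alone.
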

\begin{proof}
See Appendix \ref{sec:proof_estimation}.
\end{proof}

\section{Proof of main results} \label{sec:proof_main}

As explained in Section~\ref{sec:proof_outline}, we prove the structural results of Theorems~\ref{thm:main} and~\ref{thm:main_2} by traversing the graph from leaf nodes to root nodes. The proof uses mathematical induction, so we begin by stating the induction hypothesis $P(s)$. Recall that $\mathcal{G}^{\leq s}$ is the union of $\mathcal{G}^0, \mathcal{G}^1,\ldots,\mathcal{G}^s$ (see Definition~\ref{def:generations}).

\begin{prop*}~
\begin{enumerate}
\item  There is no loss in optimality in jointly restricting all nodes $j \in \mathcal{G}^{\leq s}$ to strategies of the form
\begin{align}
  \vec u^j_t &= \sum_{a \in \anc{j}\cap\mathcal{G}^{\leq s}}g^{ja}_t(\vec{z}^{\funnel{a}}_t) +\sum_{b \in \sanc{j}\cap\mathcal{G}^{\geq s+1}} h_t^{jb}(\vec i^b_t) \label{eq:induction_hypothesis2}
\end{align}
where $g_t^{ja}(\cdot)$ and $h_t^{jb}(\cdot)$ are linear functions and $\vec{z}^{\funnel{a}}_t = \eecc{\vec{x}^{\funnel{a}}_t}{\vec{i}^{\anc{a}}_t}$.

\item For each $j \in \mathcal{G}^{\leq s}$, the dynamics of $\vec{z}^{\funnel{j}}_t = \eecc{\vec{x}^{\funnel{j}}_t}{\vec{i}^{\anc{j}}_t}$ are given by
\begin{equation}\label{eq:z_Gs_dynamics}
\begin{aligned}
\vec{z}^{\funnel{j}}_0 &= 0 \\
\vec{z}^{\funnel{j}}_+ &\eqt
A^{\funnel{j}\funnel{j}}\vec{z}^{\funnel{j}}
 + B^{\funnel{j}\funnel{j}}\bmat{ \vec{u}^{\anc{j}} \\ \{\vec{\hat u}^{ij}\}_{i \in \sdes{j}}} - {L}^j\left(\vec{y}^{\anc{j}} -C^{\anc{j}\anc{j}}\vec{z}^{\anc{j}}\right)
\end{aligned}
\end{equation}
where for each $i\in\sdes{j}$, we have defined
\begin{equation} \label{eq:utilde_0} \vec{\hat u}^{ij}_t \defeq  \sum_{a \in \anc{j}\cap\mathcal{G}^{\leq s}}g^{ia}_t(\vec{z}^{\funnel{a}}_t) + \sum_{b \in \anc{i} \cap \sdes{j}} g^{ib}_t(E^{b,j}\vec{z}^\funnel{j}_t)  +\sum_{c \in \sanc{j}\cap\mathcal{G}^{\geq s+1}} h^{ic}_t (\vec i^c_t),
\end{equation}
and the matrices $E^{b,j}$ in \eqref{eq:utilde_0} are the matrices defined in Defintion~\ref{def:Gmatrix}. 

The matrices ${L}^j_t$ in \eqref{eq:z_Gs_dynamics} depend only on $\{g^{ib}: i\in \sdes{j}, b \in \anc{i} \cap \sdes{j}\}$.
\end{enumerate}
\end{prop*}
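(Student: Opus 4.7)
The plan is to prove $P(s)$ by induction on $s$, traversing the multitree from leaves to roots as sketched in Section~\ref{sec:proof_outline}. For the base case $s=0$, each $j\in\mathcal{G}^0$ is a leaf with $\funnel{j}=\anc{j}$ and $\sdes{j}=\emptyset$. I would fix arbitrary linear strategies for all $i\neq j$ and view node $j$'s optimization as a centralized LQG problem driven by $\vec i^{\anc{j}}_t$. Part~(1) of Lemma~\ref{lem:estimation} kills the conditional means of states in $\coparent{j}$ and $\nonrelative{j}$, and Assumption~A2 together with the multitree structure eliminates any cost cross-coupling with siblings. What remains matches the aggregated six-node structure of Section~\ref{sec:aggregated} with the $\sdes{j}$ and $\coparent{j}$ slots empty. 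Splitting $\vec u^j_t = \sum_{b\in\sanc{j}} h^{jb}_t(\vec i^b_t) + \tilde u^j_t$ with $\tilde u^j_t$ having zero conditional mean given $\vec i^{\sanc{j}}_t$, Theorem~\ref{thm:six_node_result} applied to the residual forces $\tilde u^j_t$ to be a linear function of $\vec z^{\funnel{j}}_t$, which delivers~\eqref{eq:induction_hypothesis2} for $j$. The estimator~\eqref{eq:z_Gs_dynamics} collapses to the standard Kalman filter of the aggregated subsystem on $\anc{j}$.

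For the inductive step, assume $P(s-1)$ and fix $j\in\mathcal{G}^s$. A brief multitree argument shows that the only ancestor of any $i\in\sdes{j}$ lying in $\mathcal{G}^s$ is $j$ itself, so $P(s-1)$ provides a form for $\vec u^i_t$ containing exactly one raw-information term in $j$, namely $h^{ij}_t(\vec i^j_t)$. I would then introduce a virtual coordinator with access to $\vec i^{\sanc{j}}_t$ and decompose every control in $\des{j}$ into a coordinator piece, a linear function of $\vec i^{\sanc{j}}_t$ only, plus a private residual. Absorbing the $h^{ij}_t$ dependence into the coordinator/residual split, node $j$'s residual problem again matches the six-node structure of Section~\ref{sec:aggregated}, and Theorem~\ref{thm:six_node_result} forces it to be a linear function of $\vec z^{\funnel{j}}_t$. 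The descendants' residuals pick up the same estimate through the compatibility identity $\eecs{\vec z^{\funnel{b}}_t}{\vec i^{\anc{j}}_t}=E^{b,j}\vec z^{\funnel{j}}_t$ for $b\in\sdes{j}\cap\anc{i}$, producing the second sum in~\eqref{eq:utilde_0}. Extracting the coordinator piece simultaneously supplies the new form for $j$ and upgrades the $h^{ij}_t(\vec i^j_t)$ slots in descendant strategies into $g^{ij}_t(\vec z^{\funnel{j}}_t)$, completing the passage from $P(s-1)$ to $P(s)$. The update~\eqref{eq:z_Gs_dynamics} for $j$ is then the Kalman filter of the aggregated system centered at $j$; parts~(2)--(5) of Lemma~\ref{lem:estimation} guarantee that the gain $L^j_t$ depends only on the $g^{ib}$ coefficients for $i\in\sdes{j}$, $b\in\anc{i}\cap\sdes{j}$, as claimed.

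The hard part will be the coordinator/residual bookkeeping combined with establishing the identity $\eecs{\vec z^{\funnel{b}}_t}{\vec i^{\anc{j}}_t}=E^{b,j}\vec z^{\funnel{j}}_t$, which underlies~\eqref{eq:utilde_0} and ensures that $\vec{\hat u}^{ij}_t$ truly equals $\eecs{\vec u^i_t}{\vec i^{\anc{j}}_t}$, as asserted in the remark following Theorem~\ref{thm:main_2}. The multitree hypothesis is decisive here: the absence of diamonds rules out conflicting paths into nodes of $\funnel{b}\cap\funnel{j}$, so the block-extraction matrix $E^{b,j}$ correctly transports estimates from node $j$ to any descendant $b$. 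Without this identity the recursion~\eqref{eq:z_Gs_dynamics} would not close on $\vec z^{\funnel{j}}_t$ alone, and the inductive machine would break down.
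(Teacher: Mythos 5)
Your overall strategy matches the paper's: induct over generations, isolate the part of each descendant's control that is measurable with respect to the current node's information, hand that part to a fictitious coordinator, and recognize the coordinator's problem as an instance of the six-node centralized problem so that Theorem~\ref{thm:six_node_result} applies; the identity $\eecs{\vec z^{\funnel{b}}_t}{\vec i^{\anc{j}}_t}=E^{b,j}\vec z^{\funnel{j}}_t$ that you single out is indeed the hinge of the estimator recursion. However, there is a genuine gap in your inductive step. The claim that ``the only ancestor of any $i\in\sdes{j}$ lying in $\mathcal{G}^s$ is $j$ itself'' is false: in the broadcast-in graph of Figure~\ref{fig:broadcast} we have $\mathcal{G}^0=\{4\}$ and $\mathcal{G}^1=\{1,2,3\}$, so node $4\in\sdes{1}$ has three ancestors in $\mathcal{G}^1$. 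The multitree assumption rules out a common \emph{ancestor} for two path-disconnected nodes that share a descendant; it does not prevent a descendant of $j$ from having several co-parents in $j$'s own generation or in higher generations. Consequently $P(s-1)$ leaves $\vec u^i_t$ with raw-information terms $h^{ib}_t(\vec i^b_t)$ for \emph{every} $b\in\sanc{i}\cap\mathcal{G}^{\geq s}$, not just $b=j$.

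This matters for two reasons. First, the coordinator at $j$ must collect \emph{all} terms $h^{ib}_t(\vec i^b_t)$ with $b\in\anc{j}$ (not only $b=j$), since all of these are measurable with respect to $\vec i^{\anc{j}}_t$, while the terms with $b\in\coparent{j}$ must be left in place and shown not to corrupt $j$'s estimate; this is precisely what parts (1), (3) and (5) of Lemma~\ref{lem:estimation} are for, and your proposal never deploys them for this purpose. Second, when $\mathcal{G}^s$ contains several nodes $k_1,\dots,k_\ell$, the coordinators have to be processed sequentially, and one must verify that re-optimizing the coordinator at $k_2$ does not disturb the estimator gain $L^{k_1}$ already fixed for $k_1$; the paper does this by showing that $L^{k}$ depends only on $\{g^{ib}: i\in\sdes{k},\ b\in\anc{i}\cap\sdes{k}\}$, the co-parent contributions being annihilated by the uncorrelated-noise half of Assumption A2. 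Without this non-interference argument the induction does not close, because the structure already established for $k_1$ could be invalidated each time a later coordinator in the same generation is optimized.
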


\begin{rem}
For linear control strategies described by \eqref{eq:induction_hypothesis2}, it can be established that   $\vec{\hat u}^{ij}_t$ of \eqref{eq:utilde_0} is in fact $\eecc{\vec{u}^i_t}{\vec{i}^{\anc{j}}_t}$. This interpretation of $\vec{\hat u}^{ij}_t$ as controller $j$'s estimate of controller $i$'s action will be helpful in the following proofs.
\end{rem}

Note that for $s=n$, $\mathcal{G}^{\le s} = \mathcal{V}$, $\mathcal{G}^{\ge s+1} = \emptyset$. Therefore, if we can show that Proposition $P(s)$ holds for $s=n$, then   we can directly obtain Theorems~\ref{thm:main} and~\ref{thm:main_2} from this proposition by using matrices $K^{ij}$ to represent the linear functions $g^{ij}$. 

Our argument for proving $P(0), P(1),\ldots,P(n)$ should be viewed as successive refinement of structural result for optimal control strategies.  Let $\mathcal{C}$ denote the set of all linear strategy profiles and $\mathcal{C}^s$, $s=0,1,\ldots,n$ be the set of strategy profiles that have the structural form required by Proposition $P(s)$. Then, $\mathcal{C} \supset \mathcal{C}^0 \supset \dots \supset \mathcal{C}^n$. We start with any arbitrary strategy profile in $\mathcal{C}$. The essence of our proof for Proposition $P(0)$ is that for any arbitrary choice of strategy profile in $\mathcal{C}$, there always exists another strategy profile in $\mathcal{C}^0$ 
with better or equal performance.
 Therefore, we can restrict attention to strategy profiles in $\mathcal{C}^0$ without compromising performance. Next, we consider any arbitrary strategy profile in $\mathcal{C}^0$. We will prove that for any such choice from $\mathcal{C}^0$, there always exists another strategy profile in $\mathcal{C}^1$ 
with better or equal performance. Therefore, we can restrict attention to strategy profiles in $\mathcal{C}^1$ without compromising performance. We continue this argument inductively.

\subsection{Leaf nodes: proof of $P(0)$}\label{sec:P0}

In the base case $P(0)$, we consider nodes $j \in \mathcal{G}^0$, which are the leaf nodes of the graph. This case is much simpler than the general one, because we have $\sdes{j}=\emptyset$ and $\funnel{j}=\anc{j}$. The base case $P(0)$ can be written as

\begin{enumerate}
\item  There is no loss in optimality in jointly restricting all $j \in \mathcal{G}^{0}$ to strategies of the form
\begin{equation}\label{pr:leaf1}
  \vec u^j_t = g^{jj}_t(\vec{z}^{\anc{j}}_t) +\sum_{b \in \sanc{j}} h_t^{jb}(\vec i^b_t)
\end{equation}
where $g_t^{jj}(\cdot)$ and $h_t^{jb}(\cdot)$ are linear functions.

\item For $j \in \mathcal{G}^{0}$, the dynamics of $\vec{z}^{\anc{j}}_t = \eecc{\vec{x}^{\anc{j}}_t}{\vec{i}^{\anc{j}}_t}$ are given by
\begin{equation}\label{pr:leaf2}
\begin{aligned}
\vec{z}^{\anc{j}}_0 &= 0 \\
\vec{z}^{\anc{j}}_+ &\eqt
A^{\anc{j}\anc{j}}\vec{z}^{\funnel{j}}
 + B^{\anc{j}\anc{j}} \vec{u}^{\anc{j}} - {L}^j\bigl(\vec{y}^{\anc{j}} -C^{\anc{j}\anc{j}}\vec{z}^{\anc{j}}\bigr)
\end{aligned}
\end{equation}
and the matrices ${L}^j_t$ in \eqref{pr:leaf2} do not depend on the choice of strategy profile.
\end{enumerate}

To prove the above statements, consider any node $j\in\mathcal{G}^0$ and fix arbitrary linear control strategies for all nodes except node $j$. We will consider the problem of finding the best control strategy for node $j$ in response to the arbitrary choice of linear control strategies of all other controllers. We will consider the  aggregated graph centered at $j$ and argue that controller $j$'s problem can be viewed as an instance of the centralized Problem~\ref{prob:CLQG_6node}.

\begin{lem}\label{lem:leaf_sixnode}
In Problem~\ref{prob:NPLQG}, pick any $j \in \mathcal{G}^0$ and  any fixed linear strategies for all nodes $i\neq j$.
Then controller $j$'s optimization problem is an instance of the six-node centralized problem with states, measurements, and inputs given by
\begin{align} \label{eq:leaf_to_6_node}
\begin{aligned} 
\bar{\vec{x}}^1_t &=
	\bmat{\vec{x}^{\sanc{j}}_t \\ \vec{i}^{\sanc{j}}_t}  & 
\bar{\vec{x}}^2_t &= \emptyset  &
\bar{\vec{x}}^3_t &= \vec{x}^j_t \\
\bar{\vec{x}}^4_t &=
	\bmat{\vec{x}^{\nonrelative{j}}_t \\ \vec{i}_t^{\nonrelative{j}}} &
\bar{\vec{x}}^5_t &= \emptyset &
\bar{\vec{x}}^6_t &= \bmat{\vec{x}^{\sibling{j}}_t \\ \vec{i}_t^{\sibling{j}}}
\end{aligned}
& &
\bar{\vec{y}}_t^3 &= \bmat{\vec{y}_t^j \\ \vec{y}_t^\sanc{j} \\ \vec{u}_t^\sanc{j}} &
\bar{\vec{u}}_t^3 &= \vec{u}_t^j
\end{align}
\end{lem}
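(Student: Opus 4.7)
The plan is to reduce the leaf node's optimization to an instance of Problem~\ref{prob:CLQG_6node} by \emph{absorbing} the fixed linear strategies of all nodes $i\neq j$ into the system and by augmenting the state with each group's locally generated information history. Since $j$ is a leaf, $\sdes{j}=\emptyset$, and because $\funnel{j}=\anc{j}$ in this case, Definition~\ref{def:sibs} gives $\coparent{j} = \bigcup_{i\in\des{j}}\anc{i}\setminus\funnel{j} = \anc{j}\setminus\anc{j} = \emptyset$ as well. This immediately matches the degenerate empty groups $\bar{\vec{x}}^2 = \bar{\vec{x}}^5 = \emptyset$ asserted in~\eqref{eq:leaf_to_6_node}, and the six-node structure of Figure~\ref{fig:aggregate} collapses to the four-node subgraph formed by $\sanc{j}$, $\{j\}$, $\nonrelative{j}$, and $\sibling{j}$.

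The next step is to verify the sparsity of the augmented dynamics and measurement. For each $S \in \{\sanc{j},\nonrelative{j},\sibling{j}\}$, I would stack the physical state $\vec{x}^S$ with the information vector $\vec{i}^S$ as in~\eqref{eq:leaf_to_6_node}. Because the fixed strategies satisfy $\vec{u}^i_t \in \lin(\vec{i}^{\anc{i}}_t)$, the sparsity reduces to verifying three ancestor-containment relations: $\anc{\sanc{j}} \subseteq \sanc{j}$ (immediate from transitivity and the acyclicity of $G$), $\anc{\nonrelative{j}} \subseteq \nonrelative{j}$ (any ancestor $k$ of a non-relative $i$ cannot lie in $\sanc{j}$, $\sibling{j}$, $\coparent{j}$, $\sdes{j}$ or $\{j\}$, else $i$ would not be a non-relative or $j$ would not be a leaf), and $\anc{\sibling{j}} \subseteq \sanc{j} \cup \sibling{j} \cup \nonrelative{j}$ (again using $\sdes{j}=\coparent{j}=\emptyset$ to exclude the remaining groups). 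The recursion $\vec{i}^i_{t+1} = \vec{i}^i_t \cup\{\vec{y}^i_t,\vec{u}^i_t\}$ is then affine in the augmented state plus Gaussian noise, and a direct inspection of~\eqref{eq:state_eqns} (together with the $\Sbin$-sparsity of $A,B,C$) yields the required block pattern~\eqref{eq:six_node_eqns_structure}: $\bar{\vec{x}}^3_+$ depends only on $(\bar{\vec{x}}^1,\bar{\vec{x}}^3,\bar{\vec{u}}^3)$; $\bar{\vec{x}}^1_+$ only on $\bar{\vec{x}}^1$; $\bar{\vec{x}}^4_+$ only on $\bar{\vec{x}}^4$; and $\bar{\vec{x}}^6_+$ only on $(\bar{\vec{x}}^1,\bar{\vec{x}}^4,\bar{\vec{x}}^6)$. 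The measurement $\bar{\vec{y}}^3$ factors analogously: $\vec{y}^j$ and $\vec{y}^\sanc{j}$ are linear in $\vec{x}^\anc{j} \subseteq (\bar{\vec{x}}^1,\bar{\vec{x}}^3)$ plus measurement noise, while $\vec{u}^\sanc{j}$ is a fixed linear function of $\vec{i}^\sanc{j}\subseteq\bar{\vec{x}}^1$, so $\bar{\vec{y}}^3$ admits the form $\bar C_{31}\bar{\vec{x}}^1 + \bar C_{33}\bar{\vec{x}}^3 + \bar{\vec{v}}^3$ required by~\eqref{eq:six_node_eqns_structure}.

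Finally, I would verify that the reduced cost~\eqref{eq:cost_np}, after substituting the fixed linear strategies $\vec{u}^i$ for $i\neq j$, becomes a quadratic functional of the augmented state and $\bar{\vec{u}}^3$, and that Assumption~A2 on the original problem descends to the corresponding sparsity conditions on the six-node cost and noise matrices required by Problem~\ref{prob:CLQG_6node}. I expect this translation to be the main technical obstacle: one must show that whenever two aggregated groups $s_1,s_2\in\mathcal{V}^j$ have no common ancestor (resp.\ no common descendant) in the six-node graph $G^j$, then for every $i\in s_1$ and $k\in s_2$ the same property holds in the original graph---so that A2 forces the appropriate decoupled-cost and uncorrelated-noise relations at the aggregated level. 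This reduces to a case analysis across the pairs of groups $\{\{j\},\sanc{j},\sibling{j},\nonrelative{j}\}$ using the partition of Lemma~\ref{lem:disjoint}, the definition of $G^j$, and the diamond-free property of the multitree. Once A2 for the reduced problem is established, Lemma~\ref{lem:leaf_sixnode} follows directly.
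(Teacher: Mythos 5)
Your proposal is correct and follows essentially the same route as the paper's proof in Appendix~C: augment each group's physical state with its locally generated information, absorb the fixed linear strategies, and verify the ancestor-containment relations $\anc{\sanc{j}}\subseteq\sanc{j}$, $\anc{\nonrelative{j}}\subseteq\nonrelative{j}$, and $\anc{\sibling{j}}\subseteq\sanc{j}\cup\sibling{j}\cup\nonrelative{j}$ (using $\sdes{j}=\coparent{j}=\emptyset$) to obtain the block structure of~\eqref{eq:six_node_eqns_structure}. The only difference is that you spell out the descent of Assumption~A2 to the aggregated cost and covariance matrices, which the paper asserts in one sentence; your case analysis there is the right one and does go through.
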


\begin{proof}
See Appendix \ref{sec:leaf_sixnode}.
\end{proof}

Because of Lemma~\ref{lem:leaf_sixnode} , we can apply the result of Theorem~\ref{thm:six_node_result} to controller $j$'s problem. Therefore, the optimal $\vec{u}_t^j$ is a linear function of $\eecs{\bar{\vec x}^1_t,\bar{\vec x}^3_t}{\vec{i}^{\anc{j}}_t} = \eecs{\vec{x}^{\sanc{j}}_t, \vec{i}^{\sanc{j}}_t, \vec{x}^j_t}{\vec{i}^{\anc{j}}_t}$.  Since $\vec{i}^{\sanc{j}}_t \subset \vec{i}^{\anc{j}}_t$, $\eecs{\vec{i}^{\sanc{j}}_t}{\vec{i}^{\anc{j}}_t} = \vec{i}^{\sanc{j}}_t$. Therefore, any linear function of $\eecs{\vec{x}^{\sanc{j}}_t, \vec{i}^{\sanc{j}}_t, \vec{x}^j_t}{\vec{i}^{\anc{j}}_t}$ can be written in the form of \eqref{pr:leaf1} .

Further, since the $(\vec{x}^{\anc{j}}_t, \vec{y}^{\anc{j}}_t)$ dynamics are of the form
\begin{align*}
\vec{x}^{\anc{j}}_+ &\eqt A^{\anc{j}\anc{j}}\vec{x}^{\anc{j}} + B^{\anc{j}\anc{j}}\vec{u}^{\anc{j}}_t + \vec{w}^{\anc{j}} \\
\vec{y}^{\anc{j}} &\eqt C^{\anc{j}\anc{j}}\vec{x}^{\anc{j}}  + \vec{v}^{\anc{j}}
\end{align*}
and $\vec{u}^{\anc{j}}_t$ is a function of $\vec{y}^{\anc{j}}_{0:t-1},\vec{u}^{\anc{j}}_{0:t-1}$, it follows that the estimate $\vec{z}^{\anc{j}}_t = \eecs{\vec{x}^{\anc{j}}_t}{\vec{y}^{\anc{j}}_{0:t-1},\vec{u}^{\anc{j}}_{0:t-1}} $ obeys the standard Kalman estimator update equations given in \eqref{pr:leaf2}. We can repeat the above arguments for all leaf nodes. This completes the proof of $P(0)$.

\subsection{Induction step: proof of $P(s)\implies P(s+1)$}\label{sec:Ps}

Suppose that $P(s)$ holds for some $s\ge 0$. We will prove $P(s+1)$ by sequentially considering each of the nodes in $\mathcal{G}^{s+1}$. Note that if $\mathcal{G}^{s+1}$ is empty then  $P(s)$ and $P(s+1)$ are equivalent and the induction step is trivially complete\footnote{In fact, if $\mathcal{G}^{s+1}$ is empty, then it is easy to show that all subsequent generations are empty as well and hence $\mathcal{G}^{\leq s} = \mathcal{V}$. This implies that Theorems~\ref{thm:main} and~\ref{thm:main_2} can be directly obtained from $P(s)$. }.  Therefore, we focus on the case $\mathcal{G}^{s+1} \neq \emptyset$.

We now focus on a particular node $k \in \mathcal{G}^{s+1}$ and its descendants. Note that $\vec{u}^k_t = f^k_t(\vec{i}^{\anc{k}}_t)$, which can be written as 
\begin{equation} \vec{u}^k_t = \sum_{b \in \anc{k}} h^{kb}(\vec{i}^b_t), \label{eq:precursor_0}
\end{equation}
for some linear functions $h^{kb}(\cdot)$.

 If $j \in \sdes{k}$, then, by definition of $\mathcal{G}^{s+1}$, we must have $j \in \mathcal{G}^{\leq s}$. Therefore, by our induction hypothesis, controller $j$'s strategy has the structure specified by \eqref{eq:induction_hypothesis2}. Decomposing the second summation in \eqref{eq:induction_hypothesis2}, we obtain
\begin{align}
  \vec u^j_t &= \sum_{a \in \anc{j}\cap\mathcal{G}^{\leq s}}g^{ja}_t(\vec{z}^{\funnel{a}}_t) +\sum_{b \in \sanc{j}\cap\mathcal{G}^{\geq s+1}} h_t^{jb}(\vec i^b_t) \notag\\
  &= \sum_{a \in \anc{j}\cap\mathcal{G}^{\leq s}}g^{ja}_t(\vec{z}^{\funnel{a}}_t) + \sum_{b \in \anc{k}} h_t^{jb}(\vec i^b_t)  +\sum_{b \in \sanc{j}\cap\mathcal{G}^{\geq s+1}\setminus\anc{k}} h_t^{jb}(\vec i^b_t) \notag\\
  &= \sum_{a \in \anc{j}\cap\mathcal{G}^{\leq s}}g^{ja}_t(\vec{z}^{\funnel{a}}_t) + \vec{\tilde{u}}_t^{jk} +\sum_{b \in \sanc{j}\cap\mathcal{G}^{\geq s+1}\setminus\anc{k}} h_t^{jb}(\vec i^b_t)
  \label{turtle}
\end{align}
where we defined
\begin{equation} \label{eq:utilde_def}
\vec{\tilde{u}}^{jk}_t \defeq \sum_{b \in \anc{k}} h_t^{jb}(\vec i^b_t).
\end{equation}
Note that $\vec{\tilde u}_t^{jk}$ is the part of the decision $\vec{u}_t^j$ that depends on information available to node~$k$. Using this observation, we will proceed as follows:
\begin{enumerate}
\item[(i)] For all $j \in \sdes{k}$, we fix the $g^{ja}$ and $h^{jb}$ functions in~\eqref{turtle} to arbitrary linear functions.
\item[(ii)] We will focus on the control problem of optimally selecting $\vec{u}^k_t, \{\vec{\tilde u}^{jk}_t\}_{j \in \sdes{k}}$ based on the information $\vec{i}^{\anc{k}}_t$, which is the information common among node $k$ and all its descendants. In other words, we want to optimize the functions $h^{kb}(\cdot), h^{jb}(\cdot)$ appearing in \eqref{eq:precursor_0} and  \eqref{eq:utilde_def} while keeping all other parts of the strategy profile fixed.
\item[(iii)] Since $\vec{u}^k_t, \{\vec{\tilde u}^{jk}_t\}_{j \in \sdes{k}}$ are all to be selected based on the same information $\vec{i}^{\anc{k}}_t$, we can view this problem as a coordinated system in which there is a single coordinator at node $k$ who knows $\vec{i}^{\anc{k}}_t$ and decides $\vec{u}^k_t, \{\vec{\tilde u}^{jk}_t\}_{j \in \sdes{k}}$.
\end{enumerate}
 
 We will make one more observation before analyzing the coordinator's problem. Note that for $i \in \sdes{j}$ and $j \in \sdes{k}$, $\vec{\hat u}_t^{ij}$ in~\eqref{eq:utilde_0} may be expressed in terms of $\vec{\tilde{u}}^{ik}_t$ as follows.
\begin{align}
\vec{\hat u}^{ij}_t
&= \sum_{a \in \anc{j}\cap\mathcal{G}^{\leq s}}g^{ia}_t(\vec{z}^{\funnel{a}}_t) + \sum_{b \in \anc{i} \cap \sdes{j}} g^{ib}_t(E^{b,j}\vec{z}^\funnel{j}_t)  +\sum_{c \in \sanc{j}\cap\mathcal{G}^{\geq s+1}} h^{ic}_t (\vec i^c_t) \label{bear} \\
&= \sum_{a \in \anc{j}\cap\mathcal{G}^{\leq s}}g^{ia}_t(\vec{z}^{\funnel{a}}_t) + \sum_{b \in \anc{i} \cap \sdes{j}} g^{ib}_t(E^{b,j}\vec{z}^\funnel{j}_t)  +\left[ \sum_{c \in \anc{k}}h^{ic}_t (\vec i^c_t)  + \sum_{c \in \sanc{j}\cap\mathcal{G}^{\geq s+1}\setminus{\anc{k}}} h^{ic}_t (\vec i^c_t)\right] \notag \\
&= \sum_{a \in \anc{j}\cap\mathcal{G}^{\leq s}}g^{ia}_t(\vec{z}^{\funnel{a}}_t) + \sum_{b \in \anc{i} \cap \sdes{j}} g^{ib}_t(E^{b,j}\vec{z}^\funnel{j}_t)  + \left[\vec{\tilde u}_t^{ik} + \sum_{c \in \sanc{j}\cap\mathcal{G}^{\geq s+1}\setminus{\anc{k}}} h^{ic}_t (\vec i^c_t)\right], \label{eq:hat_tilde}
\end{align}
where we used the definition of $\vec{\tilde u}_t^{ik}$ from \eqref{eq:utilde_def}. This is permitted because node $i$ too belongs to $\sdes{k}$.

As stated in the following lemma, the coordinator's problem may be viewed as an instance of the centralized Problem \ref{prob:CLQG_6node}.

\begin{lem}\label{lem:Gs_lemma}
In Problem~\ref{prob:NPLQG}, assume that all nodes in $\mathcal{G}^{\leq s}$ are using strategies of the form prescribed by Proposition $P(s)$. Pick any $k \in \mathcal{G}^{s+1}$. Fix the linear strategies for all nodes $a \notin \des{k}$. For $j\in \sdes{k}$, also fix the $g^{ja}$ and $h^{jb}$ in~\eqref{turtle}. Then the optimization problem for the coordinator at node $k$ is an instance of the six-node centralized problem with states, measurements, and inputs given by
\begin{align}\label{eq:newcoord_to_6_node}
\begin{aligned} 
\bar{\vec{x}}^1_t &=
	\bmat{\vec{x}^{\sanc{k}}_t \\ \vec{i}^{\sanc{k}}_t}  & 
\bar{\vec{x}}^2_t &=
	\bmat{\vec{x}^\coparent{k}_t \\ \vec{i}^\coparent{k}_t}  &
\bar{\vec{x}}^3_t &= \vec{x}^k_t  \\
\bar{\vec{x}}^4_t &=
	\bmat{\vec{x}^{\nonrelative{k}}_t \\ \vec{i}_t^{\nonrelative{k}}} &
\bar{\vec{x}}^5_t &=
	\bmat{\vec{x}^\sdes{k}_t\\\{\vec{z}^{\funnel{i}}_t\}_{i \in \sdes{k}}} &
\bar{\vec{x}}^6_t &= \bmat{\vec{x}^{\sibling{k}}_t \\ \vec{i}_t^{\sibling{k}}}
\end{aligned}
& & \bar{\vec{y}}_t^3 &= 
\bmat{\vec{y}_t^k \\ \vec{y}^\sanc{k}_t \\\vec{u}_t^\sanc{k}} & 
\bar{\vec{u}}_t^3 &= \bmat{ \vec{u}_t^k \\ \{\vec{\tilde u}_t^{ik}\}_{i \in \sdes{k}} }
\end{align}
\end{lem}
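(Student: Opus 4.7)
The plan is to verify, one structural property at a time, that under the identifications in \eqref{eq:newcoord_to_6_node} the coordinator's problem matches the six-node centralized problem of Problem~\ref{prob:CLQG_6node}. Specifically, I must check that (i) the aggregated dynamics of $\bar{\vec x}^1,\ldots,\bar{\vec x}^6$ inherit the block-sparsity pattern $\bar{\Sbin}$ of Figure~\ref{fig:six_node_centralized}; (ii) the aggregated measurement $\bar{\vec y}^3$ depends only on $\bar{\vec x}^1$ and $\bar{\vec x}^3$; (iii) the coordinator's action $\bar{\vec u}^3$ enters only the dynamics of $\bar{\vec x}^3$ and $\bar{\vec x}^5$; (iv) the primitive noise and cost matrices of the aggregated problem satisfy Assumption~A2 of the six-node problem; and (v) the original cost \eqref{eq:cost_np}, after the fixed non-coordinator strategies are substituted in, becomes a quadratic form in $\bar{\vec x}$ and $\bar{\vec u}$ plus an additive term that is independent of the coordinator's choice. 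Item (iv) will be inherited from the original A2 by translating the pairwise ``common ancestor/common descendant'' conditions into the partitioning of $\mathcal{V}$ given by Lemma~\ref{lem:disjoint}.

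The five ``ordinary'' blocks $\bar{\vec x}^1,\bar{\vec x}^2,\bar{\vec x}^3,\bar{\vec x}^4,\bar{\vec x}^6$ are handled directly from the multitree structure. For each node $i\in\mathcal{V}$, the set $\anc{i}$ lies in specific aggregate sets relative to $k$, which determines exactly which aggregate blocks drive each $\bar{\vec x}^m$; comparing these incidences to the edges in Figure~\ref{fig:aggregate} confirms the required sparsity. The augmented information components $\vec i^{\sanc{k}},\vec i^{\coparent{k}},\vec i^{\nonrelative{k}},\vec i^{\sibling{k}}$ are Markovianized into the aggregated state by appending the new measurements and actions at each step; since all strategies of non-descendant nodes are fixed and linear, these updates are linear functions of the appropriate aggregate blocks.

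The central technical step is the dynamics of $\bar{\vec x}^5 = [\vec x^{\sdes{k}};\{\vec z^{\funnel{i}}\}_{i\in\sdes{k}}]$. The $\vec x^{\sdes{k}}$ part is driven by states in $\sanc{k}\cup\{k\}\cup\coparent{k}\cup\sdes{k}$ (the only possible ancestors of $\sdes{k}$ under the multitree assumption), matching the incoming edges of node~$5$ in Figure~\ref{fig:aggregate}. For the $\vec z^{\funnel{i}}$ part I invoke the induction hypothesis $P(s)$: each such estimate satisfies \eqref{eq:z_Gs_dynamics}, driven by $\vec u^{\anc{i}}$, the estimates $\{\vec{\hat u}^{ai}\}_{a\in\sdes{i}}$, and the innovation $\vec y^{\anc{i}}-C^{\anc{i}\anc{i}}\vec z^{\anc{i}}$. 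Using \eqref{eq:hat_tilde}, I rewrite every $\vec{\hat u}^{aj}$ appearing here as $\vec{\tilde u}^{ak}$ plus terms that are either deterministic functions of components already present in $\bar{\vec x}^1,\bar{\vec x}^3,\bar{\vec x}^5$ or fixed signals emanating from non-descendant nodes. This rewriting is exactly what makes $\vec{\tilde u}^{ak}$ behave as a coordinator input in the aggregated model, and it confirms item (iii).

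The hard part will be precisely the $\bar{\vec x}^5$ bookkeeping above: showing that after substituting \eqref{eq:hat_tilde} into \eqref{eq:z_Gs_dynamics} for every $i\in\sdes{k}$, every remaining driving term involves only (a) the aggregated states $\bar{\vec x}^1,\bar{\vec x}^2,\bar{\vec x}^3,\bar{\vec x}^5$, (b) the coordinator's inputs $\vec u^k$ and $\{\vec{\tilde u}^{ik}\}_{i\in\sdes{k}}$, or (c) signals that are fixed a priori. This requires careful tracking of which generation each ancestor belongs to, of how the matrices $E^{b,j}$ select blocks of $\vec z^{\funnel{j}}$, and an appeal to Lemma~\ref{lem:estimation} to argue that the innovation statistics driving the $\vec z^{\funnel{i}}$ updates are unaffected by the variables the coordinator is still free to choose. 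Once this is in place, items (iv) and (v) follow by a direct translation of the original A2 conditions into the six-node pattern, completing the identification with Problem~\ref{prob:CLQG_6node}.
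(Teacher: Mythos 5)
Your proposal is correct and follows essentially the same route as the paper's proof in Appendix~\ref{sec:Gs_lemma}: verify block-by-block that the aggregated dynamics, measurement, input channels, and cost/noise sparsity match Problem~\ref{prob:CLQG_6node}, with the main bookkeeping effort concentrated on $\bar{\vec{x}}^5$, where the $\vec{z}^{\funnel{i}}$ updates from the induction hypothesis~\eqref{eq:z_Gs_dynamics} are rewritten via~\eqref{eq:hat_tilde} so that only $\bar{\vec{x}}^1,\bar{\vec{x}}^2,\bar{\vec{x}}^3,\bar{\vec{x}}^5$, the coordinator inputs, and noise remain. The only cosmetic difference is that the paper secures the strategy-independence of the gains $L^j$ through the induction hypothesis itself (they depend only on the fixed $g^{ib}$), whereas you invoke Lemma~\ref{lem:estimation} directly; both are valid.
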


\begin{proof}
See Appendix \ref{sec:Gs_lemma}.
\end{proof}

Lemma~\ref{lem:Gs_lemma} establishes the coordinator's problem at node~$k$ to be a six-node centralized problem. This allows us to prove a precursor to  $P(s+1)$; a refinement of $P(s)$ that takes into account the node $k\in \mathcal{G}^{s+1}$.

\begin{lem}\label{lem:Ppre}
Suppose $P(s)$ holds and all nodes in $\mathcal{G}^{\leq s}$ are using strategies of the form prescribed by $P(s)$.  Consider a node $k\in\mathcal{G}^{s+1}$. Then
\begin{enumerate}
\item There is no loss in optimality in (further) restricting all $j\in \des{k}$ to strategies of the form
\begin{align*}
\vec u_t^j &= \sum_{a \in \anc{j}\cap(\mathcal{G}^{\leq s}\cup \{k\})}g^{ja}_t(\vec{z}^{\funnel{a}}_t) +\sum_{b \in \sanc{j}\cap(\mathcal{G}^{\geq s+1}\setminus\{k\})} h_t^{jb}(\vec i^b_t)
\end{align*}

\item For each $j \in \des{k}$, the dynamics of $\vec{z}^{\funnel{j}}_t = \eecc{\vec{x}^{\funnel{j}}_t}{\vec{i}^{\anc{j}}_t}$ are given by
\begin{equation}\label{eq:z_Gs_dynamics2}
\begin{aligned}
\vec{z}^{\funnel{j}}_0 &= 0 \\
\vec{z}^{\funnel{j}}_+ &\eqt
A^{\funnel{j}\funnel{j}}\vec{z}^{\funnel{j}}
 + B^{\funnel{j}\funnel{j}}\bmat{ \vec{u}^{\anc{j}} \\ \{\vec{\hat u}^{ij}\}_{i \in \sdes{j}}} - {L}^j\left(\vec{y}^{\anc{j}} -C^{\anc{j}\anc{j}}\vec{z}^{\anc{j}}\right)
\end{aligned}
\end{equation}
where for each $i\in\sdes{j}$, 
\begin{equation} 
 \vec{\hat u}^{ij}_t =  \sum_{a \in \anc{j}\cap(\mathcal{G}^{\leq s} \cup \{k\})}g^{ia}_t(\vec{z}^{\funnel{a}}_t) + \sum_{b \in \anc{i} \cap \sdes{j}} g^{ib}_t(E^{b,j}\vec{z}^\funnel{j}_t)  +\sum_{c \in \sanc{j}\cap(\mathcal{G}^{\geq s+1}\setminus {k})} h^{ic}_t (\vec i^c_t) \label{eq:newhateq}
\end{equation}
and the matrices ${L}^j_t$ in \eqref{eq:z_Gs_dynamics2} depend only on $\{g^{ib}: i\in \sdes{j}, b \in \anc{i} \cap \sdes{j}\}$.

\end{enumerate}
\end{lem}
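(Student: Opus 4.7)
The plan is to exploit Lemma~\ref{lem:Gs_lemma}, which identifies the coordinator's problem at node $k$ as an instance of the six-node centralized Problem~\ref{prob:CLQG_6node}. Applying Theorem~\ref{thm:six_node_result} to that problem, the coordinator's decision vector $(\vec{u}^k_t,\{\vec{\tilde u}^{ik}_t\}_{i\in\sdes{k}})$ is a linear function of $\eecc{\bar{\vec{x}}^1_t,\bar{\vec{x}}^3_t,\bar{\vec{x}}^5_t}{\vec{i}^{\anc{k}}_t}$, and the coordinator's state estimate obeys a Kalman-like update of the form~\eqref{eq:est_centralized}.

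The central step is to unpack this conditional expectation using the aggregation in~\eqref{eq:newcoord_to_6_node}. Since $\vec{i}^{\sanc{k}}_t$ is $\sigma(\vec{i}^{\anc{k}}_t)$-measurable, one immediately has $\eecc{\bar{\vec{x}}^1_t}{\vec{i}^{\anc{k}}_t} = (\vec{z}^{\sanc{k}}_t,\vec{i}^{\sanc{k}}_t)$ and $\eecc{\bar{\vec{x}}^3_t}{\vec{i}^{\anc{k}}_t} = \vec{z}^k_t$. The key identity is
\[
\eecc{\vec{z}^{\funnel{i}}_t}{\vec{i}^{\anc{k}}_t} = E^{i,k}\vec{z}^{\funnel{k}}_t \qquad\text{for each } i \in \sdes{k},
\]
which yields $\eecc{\bar{\vec{x}}^5_t}{\vec{i}^{\anc{k}}_t} = (\vec{z}^{\sdes{k}}_t,\{E^{i,k}\vec{z}^{\funnel{k}}_t\}_{i\in\sdes{k}})$. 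I would obtain this identity via the tower property (using $\vec{i}^{\anc{k}}_t\subseteq\vec{i}^{\anc{i}}_t$, valid because $k\to i$) to replace $\vec{z}^{\funnel{i}}_t$ by $\vec{x}^{\funnel{i}}_t$, and then via the multitree observation $\funnel{i}\setminus\funnel{k}\subseteq\coparent{k}$: since $\des{i}\subseteq\des{k}$, the difference lies in $\anc{i}\setminus\funnel{k}$, and any such node shares descendant $i$ with $k$ while being path-disconnected from $k$, which is the defining property of $\coparent{k}$. Lemma~\ref{lem:estimation}(1) then makes $\eecc{\vec{x}^a_t}{\vec{i}^{\anc{k}}_t}$ vanish for $a\in\coparent{k}$, so only the components at $\funnel{i}\cap\funnel{k}$ survive---exactly those selected by $E^{i,k}$.

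The rest of part~(1) is index bookkeeping. The coordinator's optimal decisions are thus linear functions of $\vec{z}^{\funnel{k}}_t$ and $\vec{i}^{\sanc{k}}_t$, and I would write them as $\vec{u}^k_t = g^{kk}_t(\vec{z}^{\funnel{k}}_t) + \sum_{b\in\sanc{k}}h^{kb}_t(\vec{i}^b_t)$ and $\vec{\tilde u}^{ik}_t = g^{ik}_t(\vec{z}^{\funnel{k}}_t) + \sum_{b\in\sanc{k}}h^{ib}_t(\vec{i}^b_t)$. The first already has the claimed form for $j=k$, since $\anc{k}\cap(\mathcal{G}^{\leq s}\cup\{k\}) = \{k\}$ (because $\sanc{k}\subseteq\mathcal{G}^{\geq s+2}$ whenever $k\in\mathcal{G}^{s+1}$). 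For $j\in\sdes{k}$, substituting $\vec{\tilde u}^{ik}_t$ into the decomposition~\eqref{turtle} of $\vec{u}^j_t$ and re-indexing---the $g$-terms merge into a single sum over $\anc{j}\cap(\mathcal{G}^{\leq s}\cup\{k\})$ via $k\in\anc{j}$, while the $h$-terms merge into a single sum over $\sanc{j}\cap(\mathcal{G}^{\geq s+1}\setminus\{k\})$ via $\sanc{j}\cap\anc{k}\cap\mathcal{G}^{\geq s+1} = \anc{k}$---produces the claimed form.

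For part~(2), the dynamics of $\vec{z}^{\funnel{k}}_t$ are read off directly from the centralized Kalman recursion~\eqref{eq:est_centralized} applied to the coordinator's six-node problem, with innovation $\vec{y}^{\anc{k}}_t - C^{\anc{k}\anc{k}}\vec{z}^{\anc{k}}_t$. For $j\in\sdes{k}$, the update~\eqref{eq:z_Gs_dynamics2} is inherited from $P(s)$; the matrix $L^j_t$ depends only on $\{g^{ib}:i\in\sdes{j},\,b\in\anc{i}\cap\sdes{j}\}$, and the refinement at $k$ only alters gains indexed by $b=k\in\anc{j}\setminus\sdes{j}$, so $L^j_t$ is preserved---a fact that can alternatively be seen from Lemma~\ref{lem:estimation}(4--5), since the refinement adds only $\sigma(\vec{i}^{\anc{j}}_t)$-measurable perturbations to $\vec{u}^i_t$ for $i\in\sdes{j}$. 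It remains to verify the refined form~\eqref{eq:newhateq} of $\vec{\hat u}^{ij}_t = \eecc{\vec{u}^i_t}{\vec{i}^{\anc{j}}_t}$: the newly added $g^{ik}_t(\vec{z}^{\funnel{k}}_t)$ passes through the conditioning unchanged because $\vec{z}^{\funnel{k}}_t$ is $\sigma(\vec{i}^{\anc{j}}_t)$-measurable, and the remaining terms are handled by the same term-by-term argument that justified~\eqref{eq:utilde_0} under $P(s)$. The main obstacle is the multitree lemma $\funnel{i}\setminus\funnel{k}\subseteq\coparent{k}$; once that is in hand, the rest reduces to careful index manipulation.
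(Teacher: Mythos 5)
Your overall route is the same as the paper's: reduce the coordinator's problem at $k$ to the six-node centralized problem via Lemma~\ref{lem:Gs_lemma}, invoke Theorem~\ref{thm:six_node_result}, unpack the conditional expectation of the aggregated state using the tower property and the fact that $\funnel{i}\setminus\funnel{k}\subseteq\coparent{k}$ together with the uncorrelated-noise part of A2, and then substitute the resulting form of $\vec{\tilde u}^{ik}_t$ back into \eqref{turtle}. Your identity $\eecs{\vec{z}^{\funnel{i}}_t}{\vec{i}^{\anc{k}}_t}=E^{i,k}\vec{z}^{\funnel{k}}_t$ and its justification match the paper's, as does the appeal to Lemma~\ref{lem:estimation} to argue that the gains $L^j$ for $j\in\sdes{k}$ are unaffected by the refinement at $k$.

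The one place where your argument is too thin is the claim that the recursion \eqref{eq:z_Gs_dynamics2} for $j=k$ is ``read off directly'' from \eqref{eq:est_centralized}. Theorem~\ref{thm:six_node_result} gives a Kalman recursion for the estimate of the \emph{aggregated} state $(\bar{\vec{x}}^1,\bar{\vec{x}}^3,\bar{\vec{x}}^5)$ driven by the aggregated, closed-loop matrices of the coordinator's problem, whereas Lemma~\ref{lem:Ppre} asserts a recursion for $\vec{z}^{\funnel{k}}$ in terms of the \emph{original plant} matrices $A^{\funnel{k}\funnel{k}}$, $B^{\funnel{k}\funnel{k}}$, $C^{\anc{k}\anc{k}}$ with inputs $(\vec{u}^{\anc{k}},\{\vec{\hat u}^{ik}\})$. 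Passing from one to the other is where the paper does real work: it writes the true dynamics of $\vec{x}^{\funnel{k}}$ as in \eqref{eq:Gs_est}, splits each descendant's control $\vec{u}^j$ into a part expressible through $\vec{z}^{\funnel{k}}$ (via $E^{a,k}$) and $\vec{\tilde u}^{jk}$ plus a residual $\vec{n}$ driven only by $\coparent{k}$-randomness, shows that $\vec{n}$ has zero conditional mean given $(\vec{i}^{\anc{k}}_t,\vec{y}^{\anc{k}}_t)$ by A2, and then carries out the Gaussian measurement update \eqref{eq:gsint_est}--\eqref{eq:gsint_est3}. Your ingredients---the $E^{i,k}$ identity and Lemma~\ref{lem:estimation} parts 4--5 for the invariance of the conditional covariances---are exactly the ones needed, but this translation is more than index manipulation and must be carried out explicitly; taken literally, the ``read-off'' does not produce \eqref{eq:z_Gs_dynamics2}.
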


\begin{proof}
By Lemma~\ref{lem:Gs_lemma}, the coordinator's problem at node~$k$ is a six-node centralized problem. We may therefore apply Theorem~\ref{thm:six_node_result}. It follows that $\vec{u}_t^k, \{\vec{\tilde u}^{jk}_t\}_{j \in \sdes{k}}$ 
are linear functions of
\begin{equation}
\eecs{\bar{\vec x}^1_t,\bar{\vec x}^3_t, \bar{\vec x}^5_t}{\vec{i}^{\anc{k}}_t}
=\eecs{\vec{x}^{\sanc{k}}_t, \vec{i}^{\sanc{k}}_t, \vec{x}^k_t, \vec{x}^\sdes{k}_t,\{\vec{z}^{\funnel{i}}_t\}_{i \in \sdes{k}} }{\vec{i}^{\anc{k}}_t}\label{eq:precursor1}
\end{equation}
Rearranging the random vectors in the conditional expectation on the right hand side of \eqref{eq:precursor1}, we can state that $\vec{u}_t^k, \{\vec{\tilde u}^{jk}_t\}_{j \in \sdes{k}}$ 
are linear functions of (i) $\eecs{\vec{x}^{\funnel{k}}_t}{\vec{i}^{\anc{k}}_t}$, (ii) $\eecs{\vec{i}^{\sanc{k}}_t}{\vec{i}^{\anc{k}}_t}$ and (iii) $\eecs{\{\vec{z}^{\funnel{i}}_t\}_{i \in \sdes{k}}}{\vec{i}^{\anc{k}}_t}$. We look at each of the three terms separately. The first term is, by definition, $\vec{z}^{\funnel{k}}_t$. The second term is $\vec{i}^{\sanc{k}}_t$ because $\vec{i}^{\sanc{k}}_t \subset \vec{i}^{\anc{k}}_t$. For the third term,
since $\vec{z}^{\funnel{i}}_t = \eecs{\vec x^{\funnel{i}}_t}{\vec i^{\anc{i}}_t}$ and $\vec i^{\anc{i}}_t \supset \vec i^{\anc{k}}_t$ for $i \in \sdes{k}$, it follows from the smoothing property of conditional expectations that
$\eecs{\vec{z}^{\funnel{i}}_t}{\vec{i}^{\anc{k}}_t}  =
\mathbb{E}\bigl[ \eecs{\vec x^{\funnel{i}}_t}{\vec i^{\anc{i}}_t} \,\big|\, \vec{i}^{\anc{k}}_t \bigr] =
\eecs{\vec x^{\funnel{i}}_t}{\vec i^{\anc{k}}_t}$.
Furthermore, if we partition the ancestors of $i$ as
$\anc{i} = \anc{k} \cup (\anc{i}\cap\sdes{k}) \cup (\sanc{i}\setminus \funnel{k})$, we have
\begin{align} 
\eecs{\vec x^{\funnel{i}}_t}{\vec i^{\anc{k}}_t}
&= \left(\eecs{\vec x^{\sdes{i}}_t}{\vec i^{\anc{k}}_t},
\eecs{\vec x^{\anc{k}}_t}{\vec i^{\anc{k}}_t},
\eecs{\vec x_t^{\anc{i}\cap\sdes{k}}}{\vec i^{\anc{k}}_t},
\eecs{\vec x_t^{\sanc{i}\setminus\funnel{k}}}{\vec i^{\anc{k}}_t} \right) \notag\\
&= \left( \eecs{\vec x^\sdes{i}_t}{\vec i^{\anc{k}}_t},
\eecs{\vec x^{\anc{k}}_t}{\vec i^{\anc{k}}_t},
\eecs{\vec x_t^{\anc{i}\cap\sdes{k}}}{\vec i^{\anc{k}}_t}, 
\vec{0}\right), \label{eq:precursor_2}
\end{align}
where we used the fact that $\eecs{\vec x_t^{\sanc{i}\setminus\funnel{k}}}{\vec i^{\anc{k}}_t} = \vec{0}$ because of the uncorrelated noise condition in Assumption A2. (Note that node $k$ and  a node $a \in \sanc{i}\setminus\funnel{k}$ have a common descendant namely node $i$ but cannot have a common ancestor by Assumption A1. The same conclusion could also be inferred from inspecting the aggregated graph centered at node $k$.) The remaining non-zero terms in \eqref{eq:precursor_2} are components from the vector $\vec{z}^{\funnel{k}}_t$.


 We  therefore conclude that $\vec{u}_t^k, \{\vec{\tilde u}^{jk}_t\}_{j \in \sdes{k}}$ are linear functions of $\vec{z}_t^{\funnel{k}}$ and $\vec{i}^{\sanc{k}}_t$. That is, the optimal form of \eqref{eq:precursor_0} and \eqref{eq:utilde_def} is
 \begin{align}
&\vec{u}^k_t = g^{kk}_t(\vec{z}^{\funnel{k}}_t) + \sum_{b \in \sanc{k}} h^{kb}(\vec{i}^b_t)
  \notag \\
 &\vec{\tilde u}^{jk}_t = g^{jk}_t(\vec{z}^{\funnel{k}}_t) + \sum_{b \in \sanc{k}} h^{jb}(\vec{i}^b_t), ~~~\mbox{for $j \in \sdes{k}$}, \label{eq:kcoord1}
 \end{align} 
 for some linear functions $g^{kk}_t(\cdot), g_t^{jk}(\cdot)$ and $\{h_t^{kb}(\cdot),h_t^{jb}(\cdot)\}_{b\in\sanc{k}}$.
 Therefore, we have from~\eqref{turtle} that for $j \in \sdes{k}$,
\begin{align}
\vec{u}_t^{j}
&=\notag \sum_{a \in \anc{j}\cap\mathcal{G}^{\leq s}}g^{ja}_t(\vec{z}^{\funnel{a}}_t) + \vec{\tilde{u}}_t^{jk} +\sum_{b \in \sanc{j}\cap\mathcal{G}^{\geq s+1}\setminus\anc{k}} h_t^{jb}(\vec i^b_t) \\
&=\notag \sum_{a \in \anc{j}\cap\mathcal{G}^{\leq s}}g^{ja}_t(\vec{z}^{\funnel{a}}_t) + \left[ g_t^{jk}(\vec z_t^{\funnel{k}}) +  \sum_{b\in\sanc{k}} h_t^{jb}(\vec i_t^{b})\right] + \sum_{b \in \sanc{j}\cap\mathcal{G}^{\geq s+1}\setminus\anc{k}} h_t^{jb}(\vec i^b_t) \\
&=\label{mouse} \sum_{a \in \anc{j}\cap(\mathcal{G}^{\leq s}\cup \{k\})}g^{ja}_t(\vec{z}^{\funnel{a}}_t) +\sum_{b \in \sanc{j}\cap\mathcal{G}^{\geq s+1}\setminus\{k\}} h_t^{jb}(\vec i^b_t)
\end{align}
 This completes the proof of the first part of Lemma~\ref{lem:Ppre}. 

We now prove the second part of Lemma~\ref{lem:Ppre}. We first consider $j \in \sdes{k}$.  By our induction hypothesis, the estimation dynamics for node $j$ are as given in Proposition $P(s)$ (see equation \eqref{eq:z_Gs_dynamics}). We now use  \eqref{eq:kcoord1}  in \eqref{eq:hat_tilde}  to get the expression for $\hat{\vec{u}}^{ij}_t$ as given in \eqref{eq:newhateq}.

 Our final task is to prove the estimator dynamics for node $k$. For a given $j\in\sdes{k}$, we rearrange~\eqref{turtle} to separate the terms that depend on $\funnel{k}$ from those that do not.
\begin{align}
\vec u_t^j &= \sum_{a \in \anc{j}\cap\mathcal{G}^{\leq s}}g^{ja}_t(\vec{z}^{\funnel{a}}_t) + \vec{\tilde{u}}_t^{jk} +\sum_{b \in \sanc{j}\cap\mathcal{G}^{\geq s+1}\setminus\anc{k}} h_t^{jb}(\vec i^b_t) \notag \\
&=\sum_{a \in \anc{j}\cap\mathcal{G}^{\leq s} \cap \sdes{k} } g^{ja}_t(\vec{z}^{\funnel{a}}_t) +  \sum_{a \in \sanc{j}\cap\mathcal{G}^{\leq s} \setminus \sdes{k} } g^{ja}_t(\vec{z}^{\funnel{a}}_t) +
\vec{\tilde{u}}_t^{jk} + \sum_{b \in \sanc{j}\cap\mathcal{G}^{\geq s+1}\setminus\anc{k}} h_t^{jb}(\vec i^b_t) \notag \\
&=\left( \sum_{a \in \anc{j}\cap \sdes{k} } g^{ja}_t(\vec{z}^{\funnel{a}}_t) + \vec{\tilde{u}}_t^{jk} \right)  + \left( \sum_{a \in \sanc{j}\cap\mathcal{G}^{\leq s} \setminus \sdes{k} } g^{ja}_t(\vec{z}^{\funnel{a}}_t) + \sum_{b \in \sanc{j}\cap\mathcal{G}^{\geq s+1}\setminus\anc{k}} h_t^{jb}(\vec i^b_t) \right) \label{eq:precursor_4}
\end{align}
where in the final step we made use of the fact that $\anc{j}\cap(\mathcal{G}^{\leq s} \cap \sdes{k} ) = \anc{j}\cap\sdes{k}$. This follows because $k\in\mathcal{G}^{s+1}$ and therefore $\sdes{k} \subseteq \mathcal{G}^{\le s}$.

 The dynamics of $\vec{x}^{\funnel{k}}$ can  be written as
\begin{align}
\vec{x}^{\funnel{k}}_+
&\eqt A^{\funnel{k}\funnel{k}}\vec{x}^{\funnel{k}} + A^{\funnel{k}\coparent{k}}\vec{x}^{\coparent{k}} + B^{\funnel{k}\funnel{k}}\vec{u}^{\funnel{k}} + B^{\funnel{k}\coparent{k}}\vec{u}^{\coparent{k}} + \vec{w}^{\funnel{k}} \notag \\
&\eqt A^{\funnel{k}\funnel{k}}\vec{x}^{\funnel{k}} + A^{\funnel{k}\coparent{k}}\vec{x}^{\coparent{k}} + B^{\funnel{k}\funnel{k}}\bmat{\vec{u}^{\anc{k}} \\ \{\vec{u}^j\}_{j\in \sdes{k}}} + B^{\funnel{k}\coparent{k}}\vec{u}^{\coparent{k}} + \vec{w}^{\funnel{k}} \label{eq:precursor_3}
\end{align}
Using the expression for $\vec{u}^j$ from \eqref{eq:precursor_4} in \eqref{eq:precursor_3}, we get
\begin{align}
\vec{x}^{\funnel{k}}_+&\eqt A^{\funnel{k}\funnel{k}}\vec{x}^{\funnel{k}} +B^{\funnel{k}\funnel{k}} \bmat{\vec{u}^{\anc{k}} \\ \{\sum_{a \in \anc{j}\cap \sdes{k} } g^{ja}(\vec{z}^{\funnel{a}}) + \vec{\tilde{u}}^{jk}  \}_{j \in \sdes{k}}} + \vec{w}^{\funnel{k}} + \vec{n} \label{eq:Gs_est}
\end{align}
where 
\begin{equation*}
\vec{n} \defeqt  A^{\funnel{k}\coparent{k}}\vec{x}^{\coparent{k}} +B^{\funnel{k}\funnel{k}} \bmat{\vec{0} \\ \{\sum_{a \in \sanc{j}\cap\mathcal{G}^{\leq s} \setminus \sdes{k} } g^{ja}(\vec{z}^{\funnel{a}}) + \sum_{b \in \sanc{j}\cap\mathcal{G}^{\geq s+1}\setminus\anc{k}} h^{jb}(\vec i^b)\}_{j \in \sdes{k}}} + B^{\funnel{k}\coparent{k}}\vec{u}^{\coparent{k}}
\end{equation*}
 The key property about $\vec{n}_t$ is that it  is independent of $\vec{i}^{\anc{k}}_t$. This fact follows because of Assumption~A2 and the construction of $\vec{n}_t$ which ensured that it is a function of only random variables associated with nodes in $\coparent{k}$. We now derive the update equation of node $k$'s estimate in the following steps:
 \begin{enumerate}
 \item At time $t$, the coordinator at node  $k$ knows $\vec z^\funnel{k}_t = \eecs{\vec{x}^{\funnel{k}}_t}{\vec{i}^{\anc{k}}_t}$, and the fact that  $\eecs{\vec{w}^{\funnel{k}}_t}{\vec{i}^{\anc{k}}_t}  = 0$. Further, for a strict descendant $j$ of node $k$  ($j \in \sdes{k}$), controller $k$'s estimate of its strict descendant's estimate can be written as 
 \begin{align}
 \eecs{\vec{z}^{\funnel{j}}_t}{\vec{i}^{\anc{k}}_t}&= \eecs{\vec{x}^{\funnel{j}}_t}{\vec{i}^{\anc{k}}_t} \notag \\
 &=  \eecs{\vec{x}^{\funnel{j}}_t-E^{j,k}\vec{x}^{\funnel{k}}_t}{\vec{i}^{\anc{k}}_t} +\eecs{E^{j,k}\vec{x}^{\funnel{k}}_t}{\vec{i}^{\anc{k}}_t} \label{eq:Gsstep1}
 \end{align}
 where we used the matrices $E^{j,k}$ of Definition~\ref{def:Gmatrix} in \eqref{eq:Gsstep1}. The term $\vec{x}^{\funnel{j}}_t-E^{j,k}\vec{x}^{\funnel{k}}_t$ in \eqref{eq:Gsstep1} is nonzero only for components of $\vec{x}^{\anc{j}}_t$ that correspond to nodes $a \in \funnel{j} \setminus \funnel{k}$. Note that since $j \in \sdes{k}$ and $a \in \funnel{j} \setminus \funnel{k}$, we have $a \in \coparent{k}$.  Therefore, by the uncorrelated noise conditions of Assumption~A2,  the first conditional expectation in \eqref{eq:Gsstep1} is $\vec 0$. Thus,
 \begin{align}
 \eecs{\vec{z}^{\funnel{j}}_t}{\vec{i}^{\anc{k}}_t} = \eecs{E^{j,k}\vec{x}^{\funnel{k}}_t}{\vec{i}^{\anc{k}}_t} = E^{j,k}\vec z^\funnel{k}_t
 \end{align}
 \item Let ${u}^k_t$ and ${\tilde u^{jk}}_t$ be realizations of the decisions made by the coordinator at node $k$.
 \item  The coordinator at node $k$ gets a new vector of observations at time $t$, $\vec y^\anc{k}_t$.  Because 
\begin{align}
   \vec{y}^{\anc{k}}_{t} = C^{\anc{k}\anc{k}}_t\vec{x}^{\anc{k}}_t + \vec{v}^{\anc{k}}_t, \label{eq:gs_est_update2}
  \end{align} 
  this observation vector is correlated with $\vec{x}^{\anc{k}}_t$, $\vec{w}^{\funnel{k}}_t$, and $\{\vec{z}^{\funnel{j}}_t\}_{j \in \sdes{k}}$. Using the fact that $\eecc{\vec{y}^{\anc{k}}_{t}}{\vec{i}^{\anc{k}}_t} =\eecc{C^{\anc{k}\anc{k}}_t\vec{x}^{\anc{k}}_t + \vec{v}^{\anc{k}}_t}{\vec{i}^{\anc{k}}_t} = C^{\anc{k}\anc{k}}_t\vec{z}^{\anc{k}}_t$, the  controller can use the new observation vector    to obtain the quantities $\zeta^x_t$, $\zeta_t^w$, $\{\zeta_t^{z^j}\}_{j\in\sdes{k}}$ defined below.
\begin{equation}\label{eq:gsint_est}
\begin{gathered}
\begin{aligned} 
\zeta^x_t &\defeq  \eecs{\vec x^\funnel{k}_t}{ \vec{i}^{\anc{k}}_t, \vec y^\anc{k}_{t}} \\
&\phantom{:}=   \eecs{\vec x^\funnel{k}_t}{ \vec{i}^{\anc{k}}_t}- \mathcal{L}^x_t(\vec y^\anc{k}_t -  C^{\anc{k}\anc{k}}_t \vec z^\anc{k}_t) \\
&\phantom{:}=\vec z^\funnel{k}_t - \mathcal{L}^x_t(\vec y^\anc{k}_t -  C^{\anc{k}\anc{k}}_t \vec z^\anc{k}_t)
\end{aligned}
\quad
\begin{aligned}
\zeta^w_t &\defeq  \eecs{\vec w^\funnel{k}_t}{ \vec{i}^{\anc{k}}_t, \vec y^\anc{k}_{t}} \\
&\phantom{:}=\eecs{\vec w^\funnel{k}_t}{ \vec{i}^{\anc{k}}_t} - \mathcal{L}^w_t(\vec y^\anc{k}_t -  C^{\anc{k}\anc{k}}_t \vec z^\anc{k}_t) \\
&\phantom{:}=\vec{0} - \mathcal{L}^w_t(\vec y^\anc{k}_t -  C^{\anc{k}\anc{k}}_t \vec z^\anc{k}_t)
\end{aligned}
\\[3mm]
\begin{aligned}
\zeta^{z^j}_t &\defeq \eecs{\vec z^\funnel{j}_t}{ \vec{i}^{\anc{k}}_t, \vec y^\anc{k}_{t}}\\
&\phantom{:}= \eecs{\vec z^\funnel{j}_t}{ \vec{i}^{\anc{k}}_t} - \mathcal{L}^{z^j}_t(\vec y^\anc{k}_t -  C^{\anc{k}\anc{k}}_t \vec z^\anc{k}_t)\\
&\phantom{:}= E^{j,k}\vec z^\funnel{k}_t - \mathcal{L}^{z^j}_t(\vec y^\anc{k}_t -  C^{\anc{k}\anc{k}}_t \vec z^\anc{k}_t) \qquad\text{for}\quad j \in \sdes{k}
\end{aligned}
\end{gathered}
\end{equation}
where the matrices $\mathcal{L}^x_t, \mathcal{L}^w_t,\mathcal{L}^{z^j}_t$ depend on the conditional covariance matrices of $\vec{y}^{\anc{k}}_{t}$ given $\vec{i}^{\anc{k}}_t$ and the conditional cross-covariance of  $(\vec x^\funnel{k}_t,\vec w^\funnel{k}_t, \vec z^\funnel{j}_t)$ and $\vec{y}^{\anc{k}}_t$ given $\vec{i}^{\anc{k}}_t$.  The conditional cross-covariance of $\vec w^\funnel{k}_t$ and $\vec{y}^{\anc{k}}_{t}$ can be easily shown to be equal to the unconditional covariance of $\vec{w}^\funnel{k}_t$ and $\vec{v}^{\anc{k}}_{t}$. Because of Lemma \ref{lem:estimation} part 5, the conditional covariance matrices of $\vec{y}^{\anc{k}}_{t}$ given $\vec{i}^{\anc{k}}_t$ and the conditional cross-covariance of  $(\vec x^\funnel{k}_t, \vec z^\funnel{j}_t)$ and $\vec{y}^{\anc{k}}_t$ given $\vec{i}^{\anc{k}}_t$ depend only on $\{g^{jb}: j \in \sdes{k}, b \in \anc{j} \cap \sdes{k}\}$ since all the other functions  in descendants' strategies either  use only $\vec{i}^{\anc{k}}_t$ or only $\vec{i}^{\coparent{k}}_t$.
This step is the basic conditional mean update equation of a Gaussian random vector.
\item The state $\vec{x}^{\funnel{k}}$ evolves according to \eqref{eq:Gs_est}. Therefore,
\begin{align}
\vec{z}^{\funnel{k}}_{t+1} 
&= \eecs{\vec{x}^{\funnel{k}}_{t+1}}{\vec{i}^{\anc{k}}_{t},\vec{y}^{\anc{k}}_t} \notag \\
&= A^{\funnel{k}\funnel{k}}\zeta^x_t +B^{\funnel{k}\funnel{k}}\bmat{{u}^{\anc{k}} \\
 \{\sum_{a \in \anc{j}\cap\sdes{k}}g^{ja}_t(\zeta^{z^a}_t)+ {\tilde u}_t^{jk}\}_{j \in \sdes{k}}} + \zeta^w_t \label{eq:gsint_est2},
\end{align}
Using \eqref{eq:gsint_est} in \eqref{eq:gsint_est2} and rearranging the terms that involve $\left(\vec{y}^{\anc{k}}_t -C^{\anc{k}\anc{k}}\vec{z}^{\anc{k}}\right) $, we obtain the estimator dynamics
\begin{align}\label{eq:gsint_est3}
\vec{z}^{\funnel{k}}_+ &\eqt
A^{\funnel{k}\funnel{k}}\vec{z}^{\funnel{k}}
 + B^{\funnel{k}\funnel{k}}\bmat{ u^{\anc{k}} \\ \{\sum_{a \in \anc{j}\cap\sdes{k}}g^{ja}(E^{a,k}\vec{z}^{\funnel{k}})+{\tilde u}^{jk}\}_{j \in \sdes{k}}} - {L}^k\left(\vec{y}^{\anc{k}} -C^{\anc{k}\anc{k}}\vec{z}^{\anc{k}}\right)
\end{align}
where the matrices $L^k_t$ in \eqref{eq:gsint_est3}  depend only on $\{g^{jb}: j \in \sdes{k}, b \in \anc{j} \cap \sdes{k}\}$.
\end{enumerate}
Substituting $\vec{\tilde u}^{jk} = g_t^{jk}(\vec z_t^{\funnel{k}}) +  \sum_{b\in\sanc{k}} h_t^{jb}(\vec i_t^{b})$, we obtain that the estimator dynamics for node $k$ conform to \eqref{eq:z_Gs_dynamics2}. This establishes the required estimator dynamics for node $k$.
\end{proof}

We will now show how to recursively apply Lemma~\ref{lem:Ppre} to prove $P(s+1)$. Suppose that $\mathcal{G}^{s+1} = \{k_1,\dots,k_\ell\}$.  We start with using the result of Lemma~\ref{lem:Ppre} for $k_1$. Now consider $k_2$. If $\sdes{k_2} \cap \sdes{k_1} = \emptyset$, we can repeat the argument of Lemma~\ref{lem:Ppre} to get an analogous result.  Consider then the case when there exists some $j \in \sdes{k_2} \cap \sdes{k_1} $.
  Lemma~\ref{lem:Ppre} implies that the control strategy for $j$ can be written as
\begin{align*}
\vec u_t^j &= \sum_{a \in \anc{j}\cap(\mathcal{G}^{\leq s}\cup \{ k_1\})}g^{ja}_t(\vec{z}^{\funnel{a}}_t) +\sum_{b \in \sanc{j}\cap(\mathcal{G}^{\geq s+1}\setminus\{k_1\})} h_t^{jb}(\vec i^b_t) \\
&= \sum_{a \in \anc{j}\cap(\mathcal{G}^{\leq s}\cup \{k_1\})}g^{ja}_t(\vec{z}^{\funnel{a}}_t)
+ \vec{\tilde u}_t^{j{k_2}}
+\sum_{b \in \sanc{j}\cap(\mathcal{G}^{\geq s+1}\setminus\{k_1,\anc{k_2}\})} h_t^{jb}(\vec i^b_t)
\end{align*}
where we defined
\[
\vec{\tilde{u}}^{jk_2}_t \defeq \sum_{b \in \anc{k_2}} h_t^{jb}(\vec i^b_t)
\]
Similarly, $\vec{\hat{u}}^{ij}_t$ of \eqref{eq:newhateq} can be written as
\begin{align*}
\vec{\hat u}^{ij}_t =  \sum_{a \in \anc{j}\cap(\mathcal{G}^{\leq s} \cup \{k_1\})}g^{ia}_t(\vec{z}^{\funnel{a}}_t) + \sum_{b \in \anc{i} \cap \sdes{j}} g^{ib}_t(E^{b,j}\vec{z}^\funnel{j}_t)  +\vec{\tilde u}_t^{j{k_2}}
+\sum_{b \in \sanc{j}\cap(\mathcal{G}^{\geq s+1}\setminus\{k_1,\anc{k_2}\})} h_t^{jb}(\vec i^b_t)
\end{align*}
We can now consider a coordinator at node $k_2$ that selects $\vec{u}^{k_2}_t, \{\vec{\tilde u}^{ik_2}_t\}_{i \in \sdes{k_2}}$ and use the same argument as in Lemma~\ref{lem:Ppre} to argue that $\vec{u}^{k_2}_t, \{\vec{\tilde u}^{ik_2}_t\}_{i \in \sdes{k_2}}$ can be optimally chosen as functions of $\vec z^\funnel{k_2}_t, \vec i^\sanc{k_2}_t$ and that the estimator dynamics for $\vec z^\funnel{k_2}_t$ are analogous to the estimator dynamics given in Lemma~\ref{lem:Ppre} . Note that because the estimator for the coordinator at node $k_1$  depended only on $\{g^{pq}: p,q \in \sdes{k_1}\}$, varying the strategy for coordinator at node $k_2$ will have no effect on  estimation at node $k_1$. 
The result is that $\vec u_t^j$ is of the form
\begin{align*}
\vec u_t^j &= \sum_{a \in \anc{j}\cap(\mathcal{G}^{\leq s}\cup \{k_1,k_2\})}g^{ja}_t(\vec{z}^{\funnel{a}}_t) +\sum_{b \in \sanc{j}\cap(\mathcal{G}^{\geq s+1}\setminus\{k_1,k_2\})} h_t^{jb}(\vec i^b_t) 
\end{align*}
and $\vec{\hat u}^{ij}_t$ in the estimation dynamics for $j \in \mathcal{G}^{\leq s}$ are now
\begin{align*}
\vec{\hat u}^{ij}_t =  \sum_{a \in \anc{j}\cap(\mathcal{G}^{\leq s} \cup \{k_1,k_2\})}g^{ia}_t(\vec{z}^{\funnel{a}}_t) + \sum_{b \in \anc{i} \cap \sdes{j}} g^{ib}_t(E^{b,j}\vec{z}^\funnel{j}_t)  +
\sum_{b \in \sanc{j}\cap(\mathcal{G}^{\geq s+1}\setminus\{k_1,k_2\})} h_t^{jb}(\vec i^b_t)
\end{align*}
It is now clear that the above argument can be repeated for $\{k_3,\dots,k_\ell\}$. We then obtain
\begin{align*}
\vec u_t^j &= \sum_{a \in \anc{j}\cap \mathcal{G}^{\leq s+1}}g^{ja}_t(\vec{z}^{\funnel{a}}_t) +\sum_{b \in \sanc{j}\cap \mathcal{G}^{\geq s+2}} h_t^{jb}(\vec i^b_t) 
\end{align*}
for $j \in \mathcal{G}^{\leq s+1}$ and $\vec{\hat u}^{ij}_t$ in the estimation dynamics for $j \in \mathcal{G}^{\leq s+1}$ as
\begin{align*}
\vec{\hat u}^{ij}_t =  \sum_{a \in \anc{j}\cap\mathcal{G}^{\leq s+1}}g^{ia}_t(\vec{z}^{\funnel{a}}_t) + \sum_{b \in \anc{i} \cap \sdes{j}} g^{ib}_t(E^{b,j}\vec{z}^\funnel{j}_t)  +
\sum_{b \in \sanc{j}\cap\mathcal{G}^{\geq s+2}} h_t^{jb}(\vec i^b_t)
\end{align*}
and this establishes $P(s+1)$.

\section{Proof modification under Assumption A2'}\label{sec:proof_alt}
Assumption A2 required that for any pair of nodes that neither have a common ancestor nor a common descendant must have decoupled costs and uncorrelated noise (see Definitions~\ref{def:indep_cost} and \ref{def:indep_noise}). In other words, for a given node $j$, each node $k \in \nonrelative{j}$ has both decoupled costs and uncorrelated noise with respect to node $j$. The relaxed form of this assumption---Assumption A2'---requires that each node $k \in \nonrelative{j}$, has \emph{either} decoupled costs \emph{or} uncorrelated noise. 

In order to modify the proof of our main results under assumption A2',  
we will modify the definition of aggregated graph centered at node $j$ (see Section~\ref{sec:aggregated}). We first partition the nodes in $\nonrelative{j}$ into 3 subsets:
\begin{enumerate}
\item $\nonrelative{j_a}$ is the set of all nodes $k \in \nonrelative{j}$ that have both decoupled costs and uncorrelated noise when compared with node $j$.
\item $\nonrelative{j_b}$ is the set of all nodes $k \in \nonrelative{j} \setminus \nonrelative{j}_a$ that have uncorrelated noise but not decoupled costs with respect to node $j$.
\item $\nonrelative{j_c}$ is the set of all nodes $k \in \nonrelative{j} \setminus \nonrelative{j}_a$ that have decoupled costs but not uncorrelated noise with respect to node $j$.
\end{enumerate}
We can now define a modified aggregated graph centered at node $j$ as follows:
\begin{defn}[modified six-node aggregated graph]
Suppose $G(\mathcal{V},\mathcal{E})$ is a DAG and Assumption~A1 and A2' hold. For  every $j \in \mathcal{V}$, we define the \eemph{modified six-node aggregated graph} centered at~$j$, denoted $G^j(\mathcal{V}^j,\mathcal{E}^j)$, as the multitree of Figure~\ref{fig:rel_aggregate}.
\begin{figure}[ht]
\centering
\begin{tikzpicture}[thick,>=latex]
	\tikzstyle{block}=[circle,draw,fill=black!6,minimum height=2.4em];
	\tikzstyle{cblock}=[cloud,draw,fill=black!6,cloud puffs=7.5,cloud puff arc=120, aspect=1.7, inner sep=1mm];
	
	\def\x{2.7};
	\def\y{1.1};
	\node [cblock](N1) at (-0.5*\x,1.2*\y) {$\textbf{1}:\sanc{j}$};
	\node [cblock, inner sep=-1.2mm](N2) at (0.5*\x,1.2*\y) {$\textbf{2}:\!\coparent{j} \cup \nonrelative{j_b}$};
	\node [block](N3) at (-1.2*\x,0) {$\textbf{3}:j$};
	\node [cblock](N4) at (1.2*\x,0) {$\textbf{4}:\nonrelative{j_a}$};
	\node [cblock](N5) at (-0.5*\x,-1.2*\y) {$\textbf{5}:\sdes{j}$};
    \node [cblock, inner sep=-1.2mm](N6) at (0.5*\x,-1.2*\y) {$\textbf{6}:\!\sibling{j}\cup \nonrelative{j_c}$};
    
	\draw [->] (N1) -- (N3);
	\draw [->] (N1) -- (N6);
	\draw [->] (N2) -- (N5);
    \draw [->] (N3) -- (N5);
	\draw [<->] (N2) -- (N6);
	\draw [<->] (N2) -- (N4);
	\draw [<->] (N4) -- (N6);
	\draw [->] (N1) -- (N5);
\end{tikzpicture}
\caption{Diagram of $G^j(\mathcal{V}^j,\mathcal{E}^j)$, the modified six-node aggregate centered at~$j$.\label{fig:rel_aggregate}}
\end{figure}
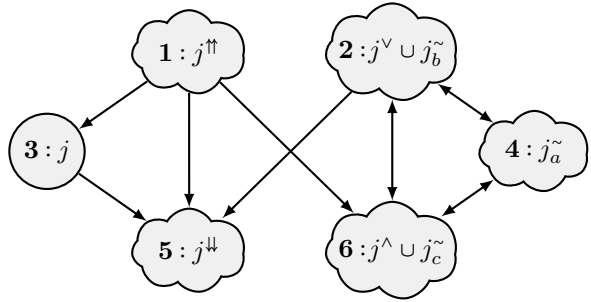
\end{defn}
Note that all nodes in the node labeled $\textbf{2}$ in the modified aggregated graph have uncorrelated noise with respect to node $j$, while all nodes in the node labeled $\textbf{6}$ in the modified aggregated graph have decoupled cost with respect to node $j$. Further, all nodes in the node labeled $\textbf{4}$ have both decoupled cost and uncorrelated noise with respect to node $j$. This property of the modified aggregated graph is the same as in the original aggregated graph of Section~\ref{sec:aggregated}.

The aggregated graph centered at node $j$ will serve as the basis for aggregating states and dynamics of the overall system when we want to focus on node $j$. 
We define aggregated states $\set{\bar{\mathbf{x}}^m}{m=1,\dots,6}$ corresponding to each node in $G^j(\mathcal{V}^j,\mathcal{E}^j)$ as follows:
\begin{align*}
\bar{\mathbf{x}}^1 &= \mathbf{x}^\sanc{j} &
\bar{\mathbf{x}}^2 &= \bmat{\mathbf{x}^\coparent{j}\\\mathbf{x}^\nonrelative{j_b}} &
\bar{\mathbf{x}}^3 &= \mathbf{x}^j &
\bar{\mathbf{x}}^4 &= \mathbf{x}^{\nonrelative{j_a}} &
\bar{\mathbf{x}}^5 &= \mathbf{x}^\sdes{j} &
\bar{\mathbf{x}}^6 &= \bmat{\mathbf{x}^\sibling{j} \\\mathbf{x}^\nonrelative{j_c}}
\end{align*}
and define $\bar{\mathbf{u}}$, $\bar{\mathbf{y}}$, $\bar{\mathbf{w}}$, and $\bar{\mathbf{v}}$ in a similar fashion. Once this is done, the aggregated dynamics centered at $j$ may be written compactly as
\begin{equation}\label{eq:state_eqns_aggregated_2}
\begin{aligned}
\bar{\mathbf{x}}_+ &\eqt \bar A\bar{\mathbf{x}} + \bar B\bar{\mathbf{u}} + \bar{\mathbf{w}} \\
\bar{\mathbf{y}} &\eqt \bar C\bar{\mathbf{x}} + \bar{\mathbf{v}}
\end{aligned}
\end{equation}
If we split the matrices $\bar A, \bar B$ and $\bar C$ into blocks according to the states of the modified aggregated graph, they will have a sparsity pattern corresponding to the graph of Figure~\ref{fig:rel_aggregate}.

With this redefinition of the aggregated graph, we can proceed along the same steps as in Sections~\ref{sec:proof_prelim} and~\ref{sec:proof_main} to establish Theorems \ref{thm:main} and \ref{thm:main_2} under Assumption A2'.

\section{Concluding remarks}\label{sec:conclusion}

In this first part of the paper, we described a broad class of decentralized output feedback LQG control problems that admit simple and intuitive sufficient statistics. Each controller must estimate the state of its ancestors and descendants in the underlying DAG. The optimal control action for each controller is a linear function of the estimate it computes as well as the estimates computed by all of its ancestors. Moreover, we proved that estimates can be computed recursively much like a Kalman filter.

 Several aspects of the problem architecture influence the structure of the optimal control strategies. These are:
\begin{enumerate}[F1.]
\itemsep=1mm
\item measurement type: output feedback or state feedback?
\item DAG topology: is it a multitree or not?
\item noise structure: which correlations between subsystems are permitted?
\item cost structure: which cost couplings between subsystems are permitted?
\end{enumerate}
These items delineate a subset of PN and QI problems, as in Figure~\ref{fig:venn}. In the present work, we assumed output feedback for all subsystems (item F1), but we imposed restrictions on the DAG topology, noise, and cost via Assumptions A1 and A2 or A2' (items F2--F4).

The prior works discussed in Section~\ref{sec:intro} can be classified based on which features F1--F4 are strengthened and which are relaxed. For example, the poset-causal framework of~\cite{shah_parrilo} considers a fully general DAG topology and quadratic cost (items F2 and F4), but this comes at the expense of requiring state-feedback for all nodes and uncorrelated noise among subsystems (items F1 and F3).

The complete understanding of which assumptions on features F1--F4 lead to simple optimal controllers remains an open issue. Indeed, some seemingly restrictive assumptions may still lead to controllers with complicated structures. For example, consider two fully decoupled subsystems, each with state feedback (items F1 and F2). Then assume the process noise is correlated between subsystems and the cost is also coupled (items F3 and F4). This problem was studied in~\cite{lessard_diagsf,lessard_diagsf2}, where it was shown that the optimal controller may have state dimension up to $n^2$, where $n$ is the global state dimension of the plant.

Part II of the paper takes the results of the present paper one step further and derives an explicit and efficiently computable state-space representation for the optimal controller. As with centralized LQG control problems, the optimal estimation and control gains may be be computable offline, and the computational complexity is similar as well.

\appendix
\section{Proof of Theorem \ref{thm:six_node_result}} \label{sec:proof_thm3}
Because of Assumption A2 in Problem~\ref{prob:CLQG_6node}, the cost  matrices $\{Q_{0:T-1},R_{0:T-1},S_{0:T-1},P_\textup{final}\}$ have blocks that conform to the sparsity pattern of 
 \[\bar\Sbin^\tp \bar\Sbin = \bmat{ 1 & 1 & 1 & 1 & 1 & 1 \\
			1 & 1 & 1 & 1 & 1 & 1 \\
			1 & 1 & 1 & 0 & 1 & 0 \\
			1 & 1 & 0 & 1 & 0 & 1 \\
			1 & 1 & 1 & 0 & 1 & 0 \\
			1 & 1 & 0 & 1 & 0 & 1 } .\] 
In other words, the node pairs $(3,4), (3,6),(4,5),(5,6)$ have \emph{decoupled cost} (see Definition~\ref{def:indep_cost}).			
			Therefore, the objective function $\hat{\mathcal{J}}_0(f^3)$ can be decomposed as 
 \begin{equation}
  \hat{\mathcal{J}}_0(f^3) = c + \tilde{\mathcal{J}}_0(f^3),
 \end{equation}
 where 
 \begin{align}
 c = \ee \biggl( \sum_t
 \bmat{\mathbf{x}^{1,2,4,6}}^\tp \tilde{Q} \bmat{\mathbf{x}^{1,2,4,6}}
 +  \bmat{\mathbf{x}^{1,2,4,6}}^\tp \tilde{P}_{\text{final}}  \bmat{\mathbf{x}^{1,2,4,6}} \biggr)
 \end{align}
 does not depend on the choice of control strategy and 
 \begin{align}
  \tilde{\mathcal{J}}_0(f^3) = \ee^{f^3} \biggl( \sum_t
  \bmat{\mathbf{x}^{1,2,3,5}\\\mathbf{u}^3}^\tp \bmat{\hat{Q} & \hat{S} \\ \hat{S}^\tp & \hat{R}} \bmat{\mathbf{x}^{1,2,3,5}\\ \mathbf{u}^3}
  + \bmat{\mathbf{x}^{1,2,3,5}}^\tp \hat{P}_\textup{final} \bmat{\mathbf{x}^{1,2,3,5}} \biggr)
  \end{align}
  Therefore, minimizing the objective function $\hat{\mathcal{J}}_0(f^3)$ is the same as minimizing $\tilde{\mathcal{J}}_0(f^3)$. Observing that both $\tilde{\mathcal{J}}_0(f^3)$ and the dynamics of $\VEC x^{1,2,3,5}$ do not involve $\VEC x^{4,6}$, we conclude that minimizing $\tilde{\mathcal{J}}_0(f^3)$ is a standard centralized LQG problem with $\VEC x^{1,2,3,5}$ as the state. Therefore, the optimal control strategy is a linear function of $\eecs{\VEC{x}^{1,2,3,5}}{\VEC{i}^3_t}$. Further, observe that the dynamics of $\VEC x^2$ are
  \begin{equation}
  \VEC{x}^2_+ \eqt  A_{22}\VEC{x}^2 +  \VEC{w}^2.
  \end{equation}
 Because of Assumption A2, the covariance matrices $\{W_{0:T-1},V_{0:T-1},U_{0:T-1},\Sigma_\text{init}\}$ have blocks that conform to the sparsity pattern of $\bar\Sbin\bar\Sbin^\tp$. In other words,  nodes $2$ and $3$ have \emph{uncorrelated noise} (see Definition \ref{def:indep_noise}) and hence, $\eecs{\VEC{x}^{2}}{\VEC{i}^3_t} = \ee(\VEC{x}^2)=0$. Therefore, the optimal control strategy is in fact a linear function of $\eecs{\VEC{x}^{1,3,5}}{\VEC{i}^3_t}$.    \eqref{eq:est_centralized} is essentially the centralized Kalman estimate update equation  for $\eecs{\VEC{x}^{1,3,5}}{\VEC{i}^3_t}$.

\section{Proof of Lemma \ref{lem:estimation}} \label{sec:proof_estimation}
\begin{enumerate}
\item  Part 1 follows from the sparsity assumptions about the covariance matrices which imply that under any linear strategy $\vec{x}^{\coparent{j}}_t,\vec{i}^{\coparent{j}}_t$ (and $\vec{x}^{\nonrelative{j}}_t,\vec{i}^{\nonrelative{j}}_t$) are independent of $\vec i^{\anc{j}}_t$.

\item With just $f^{\funnel{j}}, f^{\coparent{j}}$ fixed and $f^{\sibling{j}}, f^{\nonrelative{j}}$ left unspecified,  it is easy to check that all the random variables $\vec{x}^{\funnel{j}}_t,\vec{i}^{\funnel{j}}_t,\vec i^{\anc{j}}_t$  (and consequently their joint probability distribution) are well-defined. Therefore,  the conditional expectation $\eec{\vec{x}^{\funnel{j}}_t,\vec{i}^{\funnel{j}}_t}{\vec i^{\anc{j}}_t}$ cannot depend on the choice of $f^{\sibling{j}}, f^{\nonrelative{j}}$.

We will now write $\vec{x}^{\funnel{j}}_t,\vec{i}^{\funnel{j}}_t$ in terms of the primitive random variables and the control decisions:
\begin{align} 
(\vec{x}^{\funnel{j}}_t,\vec{i}^{\funnel{j}}_t) &\in \lin(\vec{x}^{\funnel{j}}_0,\vec{w}^{\funnel{j}}_{1:t-1},\vec{v}^{\funnel{j}}_{1:t-1}, \vec{u}^{\funnel{j}}_{1:t-1},\vec{x}^{\coparent{j}}_0,\vec{w}^{\coparent{j}}_{1:t-1},\vec{v}^{\coparent{j}}_{1:t-1}, \vec{u}^{\coparent{j}}_{1:t-1}) \notag \\
&\in \lin(\vec{x}^{\funnel{j}}_0,\vec{w}^{\funnel{j}}_{1:t-1},\vec{v}^{\funnel{j}}_{1:t-1}, \vec{u}^{\anc{j}}_{1:t-1},\vec{x}^{\coparent{j}}_0,\vec{w}^{\coparent{j}}_{1:t-1},\vec{v}^{\coparent{j}}_{1:t-1}, \vec{u}^{\coparent{j}}_{1:t-1})  + \lin(\vec{u}^{\sdes{j}}_{1:t-1} )\label{eq:est_graph0}
\end{align}
where the linear functions $\lin$ depend only on system parameters. Given the strategy $f^{\sdes{j}}$, we can write 
\begin{align} \label{eq:est_graph0a}
\vec{u}^{\sdes{j}}_{1:t-1}  \in \lin^{f^\sdes{j}}(\vec{x}^{\funnel{j}}_0,\vec{w}^{\funnel{j}}_{1:t-1},\vec{v}^{\funnel{j}}_{1:t-1}, \vec{u}^{\anc{j}}_{1:t-1},\vec{x}^{\coparent{j}}_0,\vec{w}^{\coparent{j}}_{1:t-1},\vec{v}^{\coparent{j}}_{1:t-1}, \vec{u}^{\coparent{j}}_{1:t-1})
\end{align}
where the linear function $\lin^{f^\sdes{j}}$ depends on ${f^\sdes{j}}$. Combining \eqref{eq:est_graph0} and \eqref{eq:est_graph0a} and regrouping terms, we can write
\begin{align}\label{eq:est_graph1}
 (\vec{x}^{\funnel{j}}_t,\vec{i}^{\funnel{j}}_t )\in \lin^{f^\sdes{j}}(\vec{x}^{\funnel{j}}_0,\vec{w}^{\funnel{j}}_{1:t-1},\vec{v}^{\funnel{j}}_{1:t-1}, \vec{u}^{\anc{j}}_{1:t-1} )+ \lin^{f^\sdes{j}}(\vec{x}^{\coparent{j}}_0,\vec{w}^{\coparent{j}}_{1:t-1},\vec{v}^{\coparent{j}}_{1:t-1}, \vec{u}^{\coparent{j}}_{1:t-1}) 
\end{align}
 
Because of the sparsity assumptions about covariance matrices, the second term in \eqref{eq:est_graph1} is independent of $\VEC i^{\anc{j}}_t$. Therefore, under any linear strategy $f$, the conditional expectation of the second term in \eqref{eq:est_graph1} given $\vec i^{\anc{j}}_t$ is $0$. 

By linearity of conditional expectation, 
\begin{align}
&\eec{\vec{x}^{\funnel{j}}_t,\vec{i}^{\funnel{j}}_t }{i^{\anc{j}}_t} \in \lin^{f^\sdes{j}}\left(\eec{\vec{x}^{\funnel{j}}_0,\vec{w}^{\funnel{j}}_{0:t-1},\vec{v}^{\funnel{j}}_{0:t-1}, \vec{u}^{\anc{j}}_{0:t-1}}{ i^{\anc{j}}_t}\right) \notag \\
& \in \lin^{f^\sdes{j}}\left(\eec{\vec{x}^{\funnel{j}}_0,\vec{w}^{\funnel{j}}_{0:t-1},\vec{v}^{\funnel{j}}_{0:t-1}, \vec{u}^{\anc{j}}_{0:t-1}}{ y^{\anc{j}}_{0:t-1},u^{\anc{j}}_{0:t-1}}\right) \notag \\
& \in \lin^{f^\sdes{j}}\left({u}^{\anc{j}}_{0:t-1}, \eec{\vec{x}^{\funnel{j}}_0,\vec{w}^{\funnel{j}}_{0:t-1},\vec{v}^{\funnel{j}}_{0:t-1}}{y^{\anc{j}}_{0:t-1},u^{\anc{j}}_{0:t-1}}\right) \label{eq:est_graph1c}
\end{align}

Note that $\vec{y}^{\anc{j}}_{0:t-1}$ can be written as $\mathcal{F}\bmat{\vec{x}^{\anc{j}}_0 \\\vec{w}^{\anc{j}}_{0:t-1} \\\vec{v}^{\anc{j}}_{0:t-1} } + \mathcal{G}\vec{u}^{\anc{j}}_{0:t-1}$, where $\mathcal{F}, \mathcal{G}$ depend only on system parameters and not on any node's strategy. Therefore, 
\begin{align}
& \eec{\vec{x}^{\funnel{j}}_0,\vec{w}^{\funnel{j}}_{0:t-1},\vec{v}^{\funnel{j}}_{0:t-1}}{y^{\anc{j}}_{0:t-1},u^{\anc{j}}_{0:t-1}} \notag \\
 &= \eec{\vec{x}^{\funnel{j}}_0,\vec{w}^{\funnel{j}}_{0:t-1},\vec{v}^{\funnel{j}}_{0:t-1}}{\mathcal{F}\bmat{\vec{x}^{\anc{j}}_0 \\\vec{w}^{\anc{j}}_{0:t-1} \\\vec{v}^{\anc{j}}_{0:t-1} } + \mathcal{G}\vec{u}^{\anc{j}}_{0:t-1} = y^{\anc{j}}_{0:t-1}, \vec{u}^{\anc{j}}_{0:t-1}=u^{\anc{j}}_{0:t-1}} \notag \\
 &= \eec{\vec{x}^{\funnel{j}}_0,\vec{w}^{\funnel{j}}_{0:t-1},\vec{v}^{\funnel{j}}_{0:t-1}}{\mathcal{F}\bmat{\vec{x}^{\anc{j}}_0 \\\vec{w}^{\anc{j}}_{0:t-1} \\\vec{v}^{\anc{j}}_{0:t-1} } + \mathcal{G}{u}^{\anc{j}}_{0:t-1} = y^{\anc{j}}_{0:t-1}} \label{eq:est_graph1b} \\
  &= \eec{\vec{x}^{\funnel{j}}_0,\vec{w}^{\funnel{j}}_{0:t-1},\vec{v}^{\funnel{j}}_{0:t-1}}{\mathcal{F}\bmat{\vec{x}^{\anc{j}}_0 \\\vec{w}^{\anc{j}}_{0:t-1} \\\vec{v}^{\anc{j}}_{0:t-1} }  = y^{\anc{j}}_{0:t-1}- \mathcal{G}{u}^{\anc{j}}_{0:t-1}} \label{eq:est_graph2}
  \end{align}
  where we removed $\vec{u}^{\anc{j}}_{0:t-1}$ from conditioning in \eqref{eq:est_graph1b} since it is a function of $\vec{y}^{\anc{j}}_{0:t-1}$ which is already included in the conditioning.
  Since all random variables appearing in \eqref{eq:est_graph2} are only primitive random variables, the conditional expectation does not depend on the choice of strategy $f$. Recall that the linear function in \eqref{eq:est_graph1c} depends only on $f^{\sdes{j}}$. Therefore, if $g^{\sdes{j}} = f^{\sdes{j}}$, then 
  \[ \ee^{{g}}\left(\vec{x}^{\funnel{j}}_t,\vec{i}^{\funnel{j}}_t\middle\vert\vec i^{\anc{j}}_t = i^{\anc{j}}_t\right)
  = \ee^{{f}}\left(\vec{x}^{\funnel{j}}_t,\vec{i}^{\funnel{j}}_t\middle\vert\vec i^{\anc{j}}_t = i^{\anc{j}}_t\right)\]
  
\item  Under the new strategy $ {g}$, we can write \eqref{eq:est_graph0a} as
 \begin{multline} \label{eq:est_graph3}
\vec{u}^{\sdes{j}}_{1:t-1}  \in \lin^{f^\sdes{j}}(\vec{x}^{\funnel{j}}_0,\vec{w}^{\funnel{j}}_{1:t-1},\vec{v}^{\funnel{j}}_{1:t-1}, \vec{u}^{\anc{j}}_{1:t-1},\vec{x}^{\coparent{j}}_0,\vec{w}^{\coparent{j}}_{1:t-1},\vec{v}^{\coparent{j}}_{1:t-1}, \vec{u}^{\coparent{j}}_{1:t-1}) 
\\+ \lin^{{m}^{\sdes{j}}}(\vec{x}^{\coparent{j}}_0,\vec{w}^{\coparent{j}}_{1:t-1},\vec{v}^{\coparent{j}}_{1:t-1}, \vec{u}^{\coparent{j}}_{1:t-1}) 
\end{multline}
and combine this with \eqref{eq:est_graph0} to get
\begin{multline}\label{eq:est_graph4}
 (\vec{x}^{\funnel{j}}_t,\vec{i}^{\funnel{j}}_t )\in \lin^{f^\sdes{j}}(\vec{x}^{\funnel{j}}_0,\vec{w}^{\funnel{j}}_{1:t-1},\vec{v}^{\funnel{j}}_{1:t-1}, \vec{u}^{\anc{j}}_{1:t-1} )+ \lin^{f^\sdes{j}}(\vec{x}^{\coparent{j}}_0,\vec{w}^{\coparent{j}}_{1:t-1},\vec{v}^{\coparent{j}}_{1:t-1}, \vec{u}^{\coparent{j}}_{1:t-1}) 
 \\+ \lin^{{m}^{\sdes{j}}}(\vec{x}^{\coparent{j}}_0,\vec{w}^{\coparent{j}}_{1:t-1},\vec{v}^{\coparent{j}}_{1:t-1}, \vec{u}^{\coparent{j}}_{1:t-1}) 
\end{multline}
By the same argument as in part 2, the last term is independent of $\vec{i}^{\anc{j}}$ and therefore its estimate is $0$ for any choice of $\vec{m}^{\sdes{j}}$. The rest of the proof is the same as in part 2.

\item To prove Part 4, we make the following observation about conditional means: Consider jointly Gaussian random vectors $A$ and $B$. Define linear combinations of these as follows. 
\[ C:= \mathcal{K}(A),  ~~~~~~ D:= C + \mathcal{L}(B) ,\]
where $\mathcal{K,L}$ are linear functions. Then, with probability $1$,
\[ C -\eec{C}{B} = D- \eec{D}{B}\]
and hence the conditional distribution of the random vector $(C -\eec{C}{B})$ given $B=b$ is the same as the conditional distribution of the random vector $(D- \eec{D}{B})$ given $B=b$. In other words,  when the difference between $C$ and $D$ is measurable with respect to $B$, the conditional distribution of the estimation error  of $C$ given $B$ and $D$ given $B$ are the same.


In Part 4, the difference between the strategy profiles $f$ and $g$ is due to the functions $\ell^i_t(\vec{i}^{\anc{j}}_t)$ in the strategies of descendants of $j$. Because this function is measurable with respect to $j$'s information, node $j$ can account for its effect without changing the error. More precisely, under the new strategy $g$, we can write 
\begin{multline} \label{eq:est_graph5}
\vec{u}^{\sdes{j}}_{1:t-1}  \in \lin^{f^\sdes{j}}(\vec{x}^{\funnel{j}}_0,\vec{w}^{\funnel{j}}_{1:t-1},\vec{v}^{\funnel{j}}_{1:t-1}, \vec{u}^{\anc{j}}_{1:t-1},\vec{x}^{\coparent{j}}_0,\vec{w}^{\coparent{j}}_{1:t-1},\vec{v}^{\coparent{j}}_{1:t-1}, \vec{u}^{\coparent{j}}_{1:t-1}) 
+ \lin^{{\ell}^{\sdes{j}}}(\vec{i}^{\anc{j}}_t) 
\end{multline}
and use it with with \eqref{eq:est_graph0} and the observation equation from \eqref{eq:state_eqns_sum} to get that under strategy $g$
\begin{multline}\label{eq:est_graph6}
 (\vec{x}^{\funnel{j}}_t,\vec{i}^{\funnel{j}}_t,\vec{y}^{\anc{j}}_t )\in \lin^{f^\sdes{j}}(\vec{x}^{\funnel{j}}_0,\vec{w}^{\funnel{j}}_{1:t-1},\vec{v}^{\funnel{j}}_{1:t}, \vec{u}^{\anc{j}}_{1:t-1} )+ \lin^{f^\sdes{j}}(\vec{x}^{\coparent{j}}_0,\vec{w}^{\coparent{j}}_{1:t-1},\vec{v}^{\coparent{j}}_{1:t-1}, \vec{u}^{\coparent{j}}_{1:t-1}) 
 \\+ \lin^{{\ell}^{\sdes{j}}}(\vec{i}^{\anc{j}}_t) 
\end{multline}
Under the original strategy $f$, \eqref{eq:est_graph6} would have been 
\begin{multline}\label{eq:est_graph7}
 (\vec{x}^{*,\funnel{j}}_t,\vec{i}^{*,\funnel{j}}_t,\vec{y}^{*,\anc{j}}_t )\in \lin^{f^\sdes{j}}(\vec{x}^{\funnel{j}}_0,\vec{w}^{\funnel{j}}_{1:t-1},\vec{v}^{\funnel{j}}_{1:t}, \vec{u}^{\anc{j}}_{1:t-1} )+ \lin^{f^\sdes{j}}(\vec{x}^{\coparent{j}}_0,\vec{w}^{\coparent{j}}_{1:t-1},\vec{v}^{\coparent{j}}_{1:t-1}, \vec{u}^{\coparent{j}}_{1:t-1}),
 \end{multline}
 where we use $*$ to indicate that these random variables are defined under a different strategy than those in \eqref{eq:est_graph6}.
 
Then, using the argument above with $\vec{i}^{\anc{j}}_t$ playing the role of random vector $B$ and the right hand sides of \eqref{eq:est_graph7} and \eqref{eq:est_graph6} playing the roles of $C$ and $D$ respectively, we can state that the  conditional distribution  of the vector 
\[ \left((\vec{x}^{*,\funnel{j}}_t,\vec{i}^{*,\funnel{j}}_t, \vec{y}^{*,\anc{j}}_t )) - \ee^{f}\left(\vec{x}^{*,\funnel{j}}_t,\vec{i}^{*,\funnel{j}}_t, \vec{y}^{*,\anc{j}}_t\middle\vert\vec i^{\anc{j}}_t = i^{\anc{j}}_t\right)\right)\]
given $i^{\anc{j}}_t$ when the strategy profile is $f$
is the same as the the  conditional distribution  of the vector 
\[ \left((\vec{x}^{\funnel{j}}_t,\vec{i}^{\funnel{j}}_t, \vec{y}^{\anc{j}}_t )) - \ee^{g}\left(\vec{x}^{\funnel{j}}_t,\vec{i}^{\funnel{j}}_t, \vec{y}^{\anc{j}}_t\middle\vert\vec i^{\anc{j}}_t = i^{\anc{j}}_t\right)\right)\]
given $i^{\anc{j}}_t$ when the strategy profile is $g$. Consequently, the corresponding conditional covariance matrices are the same.

\item Because of Part 4, we can remove the functions $\ell^i_t(\vec{i}^{\anc{j}}_t)$ while computing conditional covariances.  

The random vector $\vec{y}^{\anc{j}}_t, \vec{i}^{\anc{j}}_t$ can be written in terms of primitive random variables and control decisions as
\[(\vec{y}^{\anc{j}}_t,\vec{i}^{\anc{j}}_t) \in \lin(\vec{x}^{\anc{j}}_0,\vec{w}^{\anc{j}}_{1:t-1},\vec{v}^{\anc{j}}_{1:t}, \vec{u}^{\anc{j}}_{1:t-1}), \]
where the $\lin$ function depends only on system parameters. Using arguments similar to those used to get  \eqref{eq:est_graph1b}, \eqref{eq:est_graph2}, it can be seen that the conditional distribution of $\vec{y}^{\anc{j}}_t$ given $\vec{i}^{\anc{j}}_t$ does not depend on the strategy profile.
 Therefore, the conditional covariance of $\vec{y}^{\anc{j}}_t$ given $\vec{i}^{\anc{j}}_t$ will not depend on the choice of $m^i_t(\vec{i}^{\coparent{j}}_t), i \in \sdes{j}$. 

For the cross-covariance matrix between $(\vec{x}^{\funnel{j}}_t,\vec{i}^{\funnel{j}}_t)$ and  $\vec{y}^{\anc{j}}_t $ given $\vec{i}^{\anc{j}}_t$, we use \eqref{eq:est_graph4} for computing $(\vec{x}^{\funnel{j}}_t,\vec{i}^{\funnel{j}}_t)\vec{y}^{\anc{j}, \tp}_t$ and once again use the independence of $\vec{i}^{\coparent{j}}_t$ and $\vec{y}^{\anc{j}}_t,\vec{i}^{\anc{j}}_t$ to remove the terms that depend on the functions $m^i_t(\vec{i}^{\coparent{j}}_t)$.
\end{enumerate}

\section{Proof of Lemma \ref{lem:leaf_sixnode}}\label{sec:leaf_sixnode}
For a leaf node $j$, since $\sdes{j} = \coparent{j} = \emptyset$, the aggregated graph centered at $j$ looks like Figure~\ref{fig:leaf_aggregate}. If strategies of all nodes except node $j$ are fixed, then controller $j$'s optimization problem is a centralized problem. We will now show that with the state, measurement and control definitions of \eqref{eq:leaf_to_6_node}, the system dynamics, the cost and the information structure of controller $j$'s problem are identical to the centralized Problem \ref{prob:CLQG_6node}.
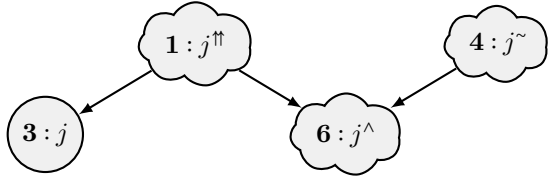
\begin{figure}[ht]
\centering
\begin{tikzpicture}[thick,>=latex]
	\tikzstyle{block}=[circle,draw,fill=black!6,minimum height=2.4em];
	\tikzstyle{cblock}=[cloud,draw,fill=black!6,cloud puffs=7.5,cloud puff arc=120, aspect=1.7, inner sep=1mm];
	\def\x{2.0};
	\def\y{1.2};
	\def\t{0.5};
	\node [cblock](N1) at (-\t*\x,\t*\y) {$\textbf{1}:\sanc{j}$};
	\node [block](N3) at (-3*\t*\x,-\t*\y) {$\textbf{3}:j$};
	\node [cblock](N4) at (3*\t*\x,\t*\y) {$\textbf{4}:\nonrelative{j}\!$};
    \node [cblock](N6) at (\t*\x,-\t*\y) {$\textbf{6}:\sibling{j}$};
	\draw [->] (N1) -- (N3);
	\draw [->] (N1) -- (N6);
	\draw [->] (N4) -- (N6);
\end{tikzpicture}
\caption{The aggregated graph centered at leaf node ~$j$.\label{fig:leaf_aggregate}}
\end{figure}

\noindent Given a node $i\in\mathcal{V}$, the state equations~\eqref{eq:state_eqns_sum} and information structure~\eqref{eq:information_pattern} imply that
\begin{align}\label{eq:base_rec}
\vec{x}_{t+1}^i &\in\lin(\vec{x}_t^\anc{i},\vec{u}_t^\anc{i},\vec w_t^i)
&
\vec{y}_t^i &\in\lin(\vec{x}_t^\anc{i},\vec v_t^i)
&
\vec{i}_{t+1}^i &\in\lin( \vec{i}_t^i, \vec{y}_t^i, \vec{u}_t^i )
\end{align}
Given the fixed strategies for nodes $i\neq j$, the input definition~\eqref{eq:input_definition} implies that 
$\vec{u}_t^i \in\lin(\vec{i}_t^\anc{i})$ for all $i\ne j$ which further implies that $\vec{u}_t^{\anc{i}} \in\lin(\vec{i}_t^\anc{i})$.  Combining these facts, we deduce that
\begin{align}\label{eq:rec_aug_lin_state}
\bmat{ \vec{x}_{t+1}^i \\ \vec{i}_{t+1}^i }
\in\lin\left( \bmat{ \vec{x}_{t}^\anc{i} \\ \vec{i}_{t}^\anc{i} }, \bmat{\vec{w}_t^i \\ \vec{v}_t^i} \right)
\qquad\text{for $i\ne j$}
\end{align}
We now  apply~\eqref{eq:rec_aug_lin_state}  to different possible subsets of $\mathcal{V}$ as identified in Figure~\ref{fig:leaf_aggregate}. For example, since $\sanc{j}$ and $\nonrelative{j}$ are equal to their respective ancestral sets,
\begin{align}\label{eq:rec_aug_a}
\bar{\vec{x}}^1_{t+1}
&\in\lin\left( \bar{\vec{x}}^1_{t}, \bmat{\vec{w}_t^\sanc{j} \\ \vec{v}_t^\sanc{j}} \right)
& &\text{and} &
\bar{\vec{x}}^4_{t+1}
&\in\lin\left( \bar{\vec{x}}^4_t, \bmat{\vec{w}_t^\nonrelative{j} \\ \vec{v}_t^\nonrelative{j}} \right)
\end{align}
For the sibling set, if $k\in \sibling{j}$, then $\anc{k} \subseteq \sanc{j}\cup \sibling{j}\cup \nonrelative{j}$. Therefore we obtain
\begin{align}\label{eq:rec_aug_b}
\bar{\vec{x}}^6_{t+1} &\in\lin\left( \bar{\vec{x}}^1_{t}, \bar{\vec{x}}^4_{t}, \bar{\vec{x}}^6_{t}, \bmat{\vec{w}_t^\sibling{j} \\ \vec{v}_t^\sibling{j}} \right)
\end{align}
For $\vec{x}^j_t$, we have
$\vec{x}^j_{t+1} \in\lin(\vec{x}_t^\anc{j},\vec{u}_t^\anc{j},\vec w_t^j)$.
Splitting $\anc{j} = \{j\}\cup\sanc{j}$, we obtain
\begin{align}\label{eq:rec_aug_c}
\bar{\vec{x}}^3_{t+1} &\in\lin\left( \bar{\vec{x}}^1_{t}, \bar{\vec{x}}^3_t, \vec{u}_t^j, \vec{w}_t^j \right)
\end{align}
A similar argument 
can be used to conclude that
\begin{align}\label{eq:rec_aug_d}
\bar{\vec{y}}^3_t
&\in \lin( \bar{\vec{x}}^1_t, \bar{\vec{x}}^3_t, \vec{v}_t^{\anc{j}} )
\end{align}
It follows from~\eqref{eq:rec_aug_a}--\eqref{eq:rec_aug_d} that the dynamics of controller $j$'s problem have the same structure as the dynamics  of Problem \ref{prob:CLQG_6node} given in~\eqref{eq:six_node_eqns_structure}. 

It is clear that the information available to controller $j$ can be written as $\Bigl\{ \bar{\vec{y}}_{0:t-1}^{3}, \bar{\vec{u}}_{0:t-1}^{3} \Bigr\}$. Further, the cost and covariance matrices have the sparsity structure required in Problem~\ref{prob:CLQG_6node}.

\section{Proof of Lemma \ref{lem:Gs_lemma}} \label{sec:Gs_lemma}
We will show that with the state, measurement and control definitions of \eqref{eq:newcoord_to_6_node}, the system dynamics, the cost and the information structure of the coordinator's problem are identical to the centralized Problem \ref{prob:CLQG_6node}.
As in the proof of Lemma~\ref{lem:leaf_sixnode}, \eqref{eq:base_rec}  still holds. 
 Following an argument analogous to the one used in proving Lemma~\ref{lem:leaf_sixnode}, it follows that 
\begin{align}
\bar{\vec{x}}^1_{t+1}
&\in\lin\left( \bar{\vec{x}}^1_{t}, \bmat{\vec{w}_t^\sanc{k} \\ \vec{v}_t^\sanc{k}} \right)
& &,&
\bar{\vec{x}}^2_{t+1}
&\in\lin\left( \bar{\vec{x}}^2_t, \bmat{\vec{w}_t^\coparent{k} \\ \vec{v}_t^\coparent{k}} \right)\\
\bar{\vec{x}}^4_{t+1} &\in\lin\left( \bar{\vec{x}}^2_{t}, \bar{\vec{x}}^4_{t}, \bmat{\vec{w}_t^\nonrelative{k} \\ \vec{v}_t^\nonrelative{k}} \right) & &, &
\bar{\vec{x}}^6_{t+1} &\in\lin\left( \bar{\vec{x}}^1_{t},\bar{\vec{x}}^2_{t}, \bar{\vec{x}}^4_{t}, \bar{\vec{x}}^6_{t}, \bmat{\vec{w}_t^\sibling{k} \\ \vec{v}_t^\sibling{k}} \right)\\
\bar{\vec{x}}^3_{t+1} &\in\lin\left( \bar{\vec{x}}^1_{t}, \bar{\vec{x}}^3_t, \vec{u}_t^k, \vec{w}_t^k \right) & &\text{and} &
\bar{\vec{y}}^3_t&\in \lin( \bar{\vec{x}}^1_t, \bar{\vec{x}}^3_t, \vec{v}_t^{\anc{k}} )
\end{align}
 Consider $j\in\sdes{k}$. We have that 
 \begin{equation}
 \vec{x}^j_{t+1} \in
	\lin( \vec{x}^\anc{j}_t, \vec{u}^\anc{j}_t, \vec{w}^j_t ) \label{eq:lemma6_1}
 \end{equation}
  Note that $\anc{j} \subseteq \sdes{k} \cup \{k\}\cup \sanc{k} \cup  \coparent{k}$. Therefore, \eqref{eq:lemma6_1} can be written as 
\begin{align}
&\vec{x}^j_{t+1} \in
	\lin( \vec{x}^\anc{j}_t, \vec{u}^\anc{j}_t, \vec{w}^j_t )  \notag\\
&\in\lin( \vec{x}^\sdes{k}_t, \vec{u}^\sdes{k}_t,  \vec{x}^k_t, \vec{u}^k_t,
	\vec{x}^\sanc{k}_t, \vec{u}^\sanc{k}_t, \vec{x}^\coparent{k}_t, \vec{u}^\coparent{k}_t, \vec{w}^j_t ) \label{eq:lemma6_2}
\end{align}
Using \eqref{turtle} for $\vec{u}^{\sdes{k}}_t$ and the fact that for $c \in \coparent{k} \cap \mathcal{G}^{\leq s}$,  $\vec{z}^{\funnel{c}}_t \in \lin(\vec{i}^{\coparent{k}}_t)$  in \eqref{eq:lemma6_2}, we get
\begin{align}
\vec{x}^j_{t+1} &\in \lin( \vec{x}^\sdes{k}_t,  \{\vec z^\funnel{i}_t, \vec{\tilde u}_t^{ik}\}_{i \in \sdes{k}},  \vec i^\coparent{k}_t, \vec{x}^k_t, \vec{u}^k_t,
	\vec{x}^\sanc{k}_t, \vec{u}^\sanc{k}_t, \vec{x}^\coparent{k}_t, \vec{u}^\coparent{k}_t, \vec{w}^j_t ) \notag \\
&\in \lin( \vec{x}^\sdes{k}_t,  \{\vec z^\funnel{i}_t, \vec{\tilde u}_t^{ik}\}_{i \in \sdes{k}},  \vec i^\coparent{k}_t, \vec{x}^k_t, \vec{u}^k_t,
	\vec{x}^\sanc{k}_t, \vec{i}^\sanc{k}_t, \vec{x}^\coparent{k}_t, \vec{w}^j_t ) \label{eq:lemma6_3}
\end{align}
where we used the fact that given the linear strategy of $\sanc{k}, \coparent{k}$, $\vec{u}^{\sanc{k}}_t \in \lin(\vec{i}^{\sanc{k}}_t)$ and $\vec{u}^{\coparent{k}}_t \in \lin(\vec{i}^{\coparent{k}}_t)$. Grouping the various terms in \eqref{eq:lemma6_3} gives
\begin{align}
\vec{x}^j_{t+1} \in\lin(\bar{\vec{x}}^1_t, \bar{\vec{x}}^2_t, \bar{\vec{x}}^3_t,
	\bar{\vec{x}}^5_t, \bar{\vec{u}}^3_t,\vec w_t^j) \label{eq:rec_aug_lin_state2A}
\end{align}
Now we turn to the dynamics of $\vec z^\funnel{j}$ for $j\in\sdes{k}$. From \eqref{eq:z_Gs_dynamics} we have
\begin{equation}
\vec{z}_{t+1}^\funnel{j} \in \lin( \vec{z}_t^\funnel{j}, \vec{u}_t^\anc{j},  \{\vec{\hat u}_t^{ij}\}_{i \in \sdes{j}}, \vec{y}_t^\anc{j} ) 
\label{eq:lemma6_4}
\end{equation}
We will focus on the terms $\vec{u}_t^\anc{j},  \{\vec{\hat u}_t^{ij}\}_{i \in \sdes{j}}, \vec{y}_t^\anc{j}$ appearing in \eqref{eq:lemma6_4}.
\begin{itemize}
\item $\vec{u}_t^\anc{j}$: Using arguments similar to those used to get \eqref{eq:lemma6_3}, we get 
\begin{equation}
\vec{u}_t^\anc{j} \in \lin(\{\vec z^\funnel{i}_t, \vec{\tilde u}_t^{ik}\}_{i \in \sdes{k}},  \vec i^\coparent{k}_t, \vec i^\sanc{k}_t, \vec{u}^k_t) \label{eq:lemma6_5}
\end{equation}
\item $\{\vec{\hat u}_t^{ij}\}_{i \in \sdes{j}}$: Using \eqref{eq:hat_tilde}, we get
\begin{equation}\label{eq:lemma6_6}
\{\vec{\hat u}_t^{ij}\} \in \lin(\{\vec z^\funnel{a}_t, \vec{\tilde u}_t^{ak}\}_{a \in \sdes{k}},  \vec i^\coparent{k}_t)
\end{equation}
\item $\vec{y}_t^\anc{j}$: From the system model, we can write
\begin{equation} \label{eq:lemma6_7}
\vec{y}_t^\anc{j} \in \lin(\vec{x}^{\sanc{k}}_t,\vec{x}^{{k}}_t,\vec{x}^{\sdes{k}}_t,\vec{x}^{\coparent{k}}_t \vec{v}^{\anc{j}}_t)
\end{equation}
\end{itemize}
Combining \eqref{eq:lemma6_4}-\eqref{eq:lemma6_7}, we get
\begin{align}\label{eq:rec_aug_lin_state3A}
\vec{z}_{t+1}^\funnel{j} &\in \lin( \vec{z}_t^\funnel{j}, \{\vec z^\funnel{i}_t, \vec{\tilde u}_t^{ik}\}_{i \in \sdes{k}},  \vec i^\coparent{k}_t, \vec i^\sanc{k}_t, \vec{u}^k_t,\vec{x}^{\sanc{k}}_t,\vec{x}^{{k}}_t,\vec{x}^{\sdes{k}}_t,\vec{x}^{\coparent{k}}_t \vec{v}^{\anc{j}}_t)
\end{align}
Grouping the various terms in \eqref{eq:rec_aug_lin_state3A}, we get
\begin{align}
\vec{z}_{t+1}^\funnel{j}\in\lin(\bar{\vec{x}}^1_t, \bar{\vec{x}}^2_t, \bar{\vec{x}}^3_t,
	\bar{\vec{x}}^5_t, \bar{\vec{u}}^3_t,\vec v_t^\anc{j}) \label{eq:rec_aug_lin_state4}
\end{align}
Combining \eqref{eq:rec_aug_lin_state2A} and \eqref{eq:rec_aug_lin_state4}, we conclude that $\bar{\vec x}_{t+1}^5 \in\lin(\bar{\vec{x}}^1_t, \bar{\vec{x}}^2_t, \bar{\vec{x}}^3_t,
	\bar{\vec{x}}^5_t, \bar{\vec{u}}^3_t,\hat{\vec{w}}_t)$, as required, where $\hat{\vec w}_t \in\lin(\vec w_t^j,\vec v_t^\anc{j}), j \in \sdes{k}$ is an aggregated noise term. Therefore, the dynamics of coordinator's problem are identical to the dynamics  of Problem \ref{prob:CLQG_6node} given in~\eqref{eq:six_node_eqns_structure}.  It is clear that the information available to the coordinator at node $j$ can be written as $\Bigl\{ \bar{\vec{y}}_{0:t-1}^{3}, \bar{\vec{u}}_{0:t-1}^{3} \Bigr\}$. Further, the cost and covariance matrices have the sparsity structure required in Problem~\ref{prob:CLQG_6node}.
\bibliographystyle{abbrv}
\bibliography{pn}
	\end{document}